\newcommand{\black}{\mathbb{B}}
\newcommand{\white}{\mathbb{W}}
\newcommand{\lone}[1]{||#1||_1}
\newcommand{\Sy}{\mathcal{S}}
\newcommand{\One}{\textbf{1}}
\newcommand{\J}{\mathcal{J}}
\newcommand{\M}{\mathcal{M}}
\newcommand{\W}{\mathcal{W}}
\newcommand{\I}{\mathcal{I}}
\newcommand{\G}{\mathcal{G}}
\newcommand{\Q}{\mathcal{Q}}
\newcommand{\K}{\mathcal{K}}
\newcommand{\A}{\mathcal{A}}
\newcommand{\N}{\mathcal{N}}
\newcommand{\R}{\mathcal{R}}
\newcommand{\C}{\mathcal{C}}
\newcommand{\Zp}{\mathbb{Z}^+}
\newcommand{\Zgeqz}{\mathbb{N}}
\newcommand{\argmax}{\operatornamewithlimits{argmax}}
\newtheorem{MyThm}{Theorem}
\newtheorem{MyLem}[MyThm]{Lemma}
\newtheorem{MyCorr}[MyThm]{Corollary}
\newtheorem{MyDef}[MyThm]{Definition}
\newtheorem{MyExa}[MyThm]{Example}
\newtheorem{MyPro}[MyThm]{Proposition}
\numberwithin{MyThm}{section}
\begin{document}

\title{On Threshold Models over Finite Networks\thanks{This research is partially supported by an MIT Jacobs Presidential Fellowship, a Siebel Scholarship, a Xerox Fellowship, AFOSR grant FA9550-09-1-0420, and ARO grant W911NF-09-1-0556.}}
\author{Elie M. Adam, Munther A. Dahleh, Asuman Ozdaglar\thanks{All authors are with the Laboratory for Information and Decision Systems, Massachusetts Institute of Technology, Cambridge, MA 02139. %
(emails: eadam@mit.edu, dahleh@mit.edu, asuman@mit.edu)}}
\date{}

\maketitle

\begin{abstract}
We study a model for cascade effects over finite networks based on a deterministic binary linear threshold model. Our starting point is a networked coordination game where each agent's payoff is the sum of the payoffs coming from pairwise interactions with each of the neighbors. We first establish that the best response dynamics in this networked game is equivalent to the linear threshold dynamics with heterogeneous thresholds over the agents. While the previous literature has studied such linear threshold models under the assumption that each agent may change actions at most once, a study of best response dynamics in such networked games necessitates an analysis that allows for multiple switches in actions. In this paper, we develop such an analysis and construct a combinatorial framework to understand the behavior of the model. To this end, we establish that the agents behavior cycles among different actions in the limit and provide three sets of results.

We first characterize the limiting behavioral properties of the dynamics. We determine the length of the limit cycles and reveal bounds on the time steps required to reach such cycles for different network structures. We then study the complexity of decision/counting problems that arise within the context. Specifically, we consider the tractability of counting the number of limit cycles and fixed-points, and deciding the reachability of action profiles. We finally propose a measure of network resilience that captures the nature of the involved dynamics. We prove bounds and investigate the resilience of different network structures under this measure.
\end{abstract}

\newpage

\section{Introduction}
Networks intertwine with every aspect of our modern lives, be it through sharing ideas, communicating information, shaping opinions, performing transactions or delivering utilities. Explicitly, we may cite social networks, financial networks, economic networks, communication networks and power networks. Interactions over those many different types of networks require agents to coordinate with their neighbors. In economic networks, technologies that conform to the standards used by other related firms are more productive; in social networks, conformity to the behavior of friends is valuable for a variety of reasons. The desire for such coordination can lead to cascading effects: the adoption decision of some agents can spread to their neighbors and from there to the rest of the network. One of the most commonly used models of such cascading behavior is the linear threshold model introduced by Granovetter \cite{GRAN01} to explain a variety of aggregate level behaviors including diffusion of innovation, voting, spread of rumors, riots and strikes. The linear threshold model is also adopted as a model of cascading failures, propagation of diseases and opinion dynamics.

Most analyses of this model in the literature assume that one of the states adopted by the agents (represented by the nodes of a graph) is irreversible, meaning that agents can only make a single switch into this state and can never switch out from it. However, incurring this progressive property in behavior dilutes several perspectives of the dynamics: whereas some situations are best captured by such a variant, many others cannot be captured but by allowing players to revert back to previous actions. One motivation for example would be opinion dynamics in social network: in most situations a player changes opinions back and forth. Another motivation would be recovery within the context of cascading failures or infectious diseases. This said, the literature lacks a satisfactory characterization of the limiting properties of such a model.

In this paper, we consider a model of cascade effects based on binary linear threshold dynamics over finite graphs. We start from an explicit coordination game set over a finite undirected network. The payoff of each agent is the sum of the payoffs from {\it two player and two action coordination games} the agent plays pairwise with each of the neighbors (the action is fixed across all interactions). We then study the behavior induced by best response dynamics, whereby each agent changes the played action to that which yields highest payoff given the actions of the neighbors. We first establish that best response dynamics are identical to the dynamics traced by the linear threshold model with heterogeneous thresholds for the agents. However, crucially, actions can change multiple times. Thus, the dynamics of interest for the set of problems posed here cannot be studied using existing results and in fact have a different mathematical structure. The main contribution of this paper is to fully characterize these dynamics. We provide three sets of results.

	We establish that agent behavior cycles among different actions in the (time) limit, we term such limiting behavior as \emph{convergence cycles}. We characterize the length of those convergence cycles. Ultimately, we show that for any graph structure on the players, any threshold distribution over the players and any initial action configuration played by the players, the limiting behavior of the dynamics get absorbed into action configuration cycles of length at most two. In other words, at the limit, every agent either plays one action, never deviating, or keeps on switching actions at every time step. We then characterize the time required to reach convergence cycles, termed as \emph{convergence time}. Building on the framework set up, we show that there exists some positive integer $c$, such that given any graph structure on the players, any threshold distribution over the players and any initial action configuration, the dynamics reach a non-degenerate cycle or a fixed-point in at most $cn^2$ time steps where $n$ is the number of players. We mention that the work \cite{GOLE01} (in the literature on Cellular Automata) considers the same dynamics and proves similar bounds on the length of convergence cycles and a quadratic bound on convergence time. Nevertheless, our proof approach substantially differs in that we emphasize the combinatorial structure of the problem. Our key contribution in this respect lies in two transformations (later defined and termed as \emph{bipartite-expansion} and \emph{symmetric-expansion}) whereby the dynamics (over any initial structure) are extended over a bipartite graph structure where each player has an odd number of neighbors and a majority decision rule. The bipartite-expansion would allow us to decouple our parallel decision scheme to a sequential process, and the symmetric expansion acts as to reduce the proof (of convergence cycle bounds) to counting specific edges in the graph. Using this approach, the quadratic bound on convergence time naturally follows. These transformations would also allow us to answer further questions about fixed points and cycles, and improve the convergence time bound from quadratic to be uniformly not more than the size of the network whenever the graph in concern is either an even-length cycle graph or a tree.
	
	We then study the complexity of counting and decision problems that arise in this model. We are interested in characterizing the number of limiting states the system could get absorbed in. We begin by arguing that no `insightful' uniform upper-bound or lower-bound can be established. Considering only the case of a cycle graph, the number of fixed-points may vary at least from $2$ to $2^{n/3}$ depending on the threshold distribution. Instead, we turn to study how tractable it is to count the convergence cycles. We proceed to show that given a graph structure on the player and a threshold distribution over the players as input, the problems of counting the number of limiting configuration classes (i.e. either fixed-points or non-degenerate cycles), counting the number of fixed-points and counting the number of non-degenerate cycles are all \#P-Complete.
We further show that the problem of deciding whether an action configuration over the network is reachable along the dynamics is NP-Complete and that counting the number of action configuration preceding a \emph{reachable} action configuration is \#P-Complete.

	Finally, the resilience of networks to invasion by certain types of behavior (e.g., cascades of failures or spread of epidemics) is of central importance in the study of cascades over networks. For the new dynamics defined by our problem, we define a measure of resilience of a network to such invasion that captures the the minimal `cost of recovery' needed when the model is confronted with a perturbation in the agents' action profile. We prove achievable uniform lower-bounds and upper-bounds on the resilience measure, we compute the resilience measure of some network structures and provide basic insight on how different network structures affect this measure.

The paper is related to a large literature on network dynamics and linear threshold models (see e.g., \cite{MORR01}-\cite{BLUM01}). A number of papers in this literature investigate the question of whether a behavior initially adopted by a subset of agents (i.e., {\it the seed set}) will spread to a {\it large} portion of the network, focusing on the dynamics where agents can make a single switch to one of the behaviors. Morris \cite{MORR01}, while starting from a multi-switch version of the dynamics, studied without loss of generality the single-switch version to answer whether there exists a finite set of initial adopters (in an infinite network with homogeneous thresholds) such that the behavior diffuses to the entire network. In \cite{WATT01}, Watts derives conditions for the behavior to spread to a positive fraction of the network (represented by a random graph with given degree distribution) using a branching process analysis. Similarly, Lelarge \cite{LELA01} provides an explicit characterization of the expected fraction of the agents that adopt the behavior in the limit over such networks.

	Some works study the single-switch version of the linear threshold model (where nodes never switch back from one of the states) over deterministic finite graphs. Given an initial seed set of adopter, the work \cite{ACEM01} characterizes the final set of adopters in terms of {\it cohesive sets} where cohesion in social groups is measured by comparing the relative frequency of ties among group members to ties with non-members. The work in \cite{KLEI02} studies (e.g. in the context of viral marketing) how to target a fixed number of agents (and change their behavior) in order to maximize the spread of the behavior in the network in the (time) limit. It studies the (optimization) problem of maximizing the final set of adopter, under the constraint of picking $K$ initial adopters. It considers various models of cascade, shows that the optimization problem is NP-hard for the linear threshold model, and then provides an algorithm to find an approximation for the optimal set that achieves maximum influence. Although simulating the multi-switch version of the linear threshold model by a single-switch version is both feasible and insightful (see \cite{KLEI01} and \cite{KLEI02}), the results available (on the single-switch version) do not enable us to characterize the limiting behavior of the richer dynamics.

	In the context of network resilience, the recent paper \cite{BLUM01} adopts single-switch linear threshold dynamics as a model of failures in a network. This work defines a measure of network resilience that is a function of the graph topology and the distribution over thresholds and studies this measure for different network structures focusing on $d$-regular graphs (hence ignoring the effect of the degree distribution of a graph on cascaded failures). Here we provide a resilience measure that highlights the impact of heterogeneity in thresholds and degrees of different agents.

	Finally, noisy versions of best-response dynamics in networked coordination games were studied in \cite{YOUN01} and \cite{MONT01} (see also \cite{LIGG01} and \cite{DURR01} in the statistical mechanics literature). The random dynamics in these models can be represented in terms of Markov chains with absorbing states, and therefore do not exhibit the cyclic behavior predicted by the multi-switch linear threshold model studied in this paper.

The rest of the paper proceeds as follows. We begin by a description of the model in Section 2. We then proceed in Section 3 to describe the general behavioral rules of the dynamics. We branch out to characterize convergence cycles and convergence time in Sections 4 and 5. We study the complexity of counting and decision problems in Section 6. We finally propose the network resilience measure in Section 7, and conclude the paper in Section 8.
Moreover, certain (non-contiguous) subsets of sections form coherent logical entities. The paper allows readers to select specific topics without having to read through all the sections to ensure understanding. Convergence cycle lengths will be fully treated by the end of Section 4. Readers interested only in convergence time may read Sections 2, 3, 4 (up to 4.3, inclusive) and 5. Readers interested only in the complexity results may restrict their reading to Sections 2, 3, 4 (up to 4.3, inclusive) and 6. Finally, readers interested only in the resilience part may restrict their reading to Sections 2, 3 and 7.

\section{Model} 

The model section consists of two parts. The first sets up a networked coordination game, and characterizes the best-response dynamics in this game. The second proposes a natural extension to the dynamics: it imposes non-equal weights on the pairwise interactions among the players, and allows players to give weights to their own actions.

\subsection{The Primary Model} \label{Model_Dynamics}

We define a networked coordination game. For a positive integer $n$, we denote by $\I_n$ the set of $n$ players\footnote{We use the words player, agent, node and vertex interchangeably. We use the letters $i$ and $j$ to denote agents. We reserve the letter $n$ for the number of players in the game. If it is clear from the context to which set $X$ an element $x$ belongs to, we refrain from mentioning the set $X$ explicitly to simplify notation. Moreover, for any function $f$ with domain $\I_n$, we will denote $f(i)$ by $f_i$. In particular, for functions $q$, $k$ and $a$ with domain $\I_n$, $q(i)$, $k(i)$ and $a(i)$ are denoted $q_i$, $k_i$ and $a_i$ respectively.}. We define $\G_n$ to be the class of all connected undirected graphs $G(\I_n, E)$ defined over the vertex set $\I_n$, with edge set $E$.\footnote{For a graph $G$, we denote by $V(G)$ and $E(G)$ the vertex set and edge set respectively.} To be proper, $E$ is a relation\footnote{A (binary) relation $R$ on a set $A$ is a subset of $A{\times}A$. We use the notation $aRb$ to denote $(a,b) \in R$.} on $\I_n$, but for convenience (since the graph is undirected) we will consider the set $E$ to have cardinality exactly equal to the number of undirected edges. We denote an undirected edge in $E$ by $\{i,j\}$, and we abbreviate it to $ij$ when no confusion arises. For $G(\I_n,E)$ in $\G_n$, we use $\N_G(i)$ to denote the neighborhood of player~$i$ in $G$, i.e. $\N_G(i) = \{j~\in~\I_n : ij \in E \}$. We denote by $d_G(i)$ the degree of player~$i$ in $G$, i.e. the cardinality of $\N_G(i)$. We refer to $\N_G(i)$ and $d_G(i)$ respectively as $\N_i$ and $d_i$ when the underlying graph is clear from the context. We finally define $\Q_n$ to be the space of type distributions over the agents, specifically the set of maps from $\I_n$ into $[0,1]$. We refer to $q_i$ as the type of player $i$.

Let $\{\black,\white\}$ be a (binary) set of actions, where the symbols $\black$ and $\white$ may be identified with the colors \emph{black} and \emph{white}, respectively. Given a graph $G(\I_n,E)$ in $\G_n$ and a type distribution $q$ in $\Q_n$, each player $i$ in $\I_n$ plays one action $a_i$ in $\{\black,\white\}$. For $ij \in E$, we define the payoff received by agent $i$ when playing $a_i$ against agent $j$ playing $a_j$ to be
\begin{equation} \label{payoff}
g_{i,j}(a_i,a_j) =
\left\{
	\begin{array}{ll}
		q_i & \text{if } a_i = a_j = \white \\
		1-q_i & \text{if } a_i = a_j = \black\\
		0 & \text{if } a_i \neq a_j\\
	\end{array}. 
\right.
\end{equation}

The utility function of player $i$ is the sum of the payoffs from the pairwise interactions with the players in $\N_i$. Formally, when player $j$ plays action $a_j$, the utility function of player $i$ is given by:
\begin{equation} \label{sumPayoff}
u_i(a_i,a_{-i}) = \sum_{j \in \N_i}{g_{i,j}(a_i,a_j)},
\end{equation}
where $a_{-i}$ denotes the action profile of all players except~$i$.

We define $\A_n$ be the space of action profiles\footnote{We use the words action, assignment and color interchangeably, and use the words profile and configuration interchangeably.} played by the agents, specifically the set of maps from $\I_n$ into $\{\black,\white\}$. The players are assigned an initial action profile $\underline{a}$, we refer to $\underline{a}$ as the action profile of the players at time step $0$. For $T$ in $\Zp$,\footnote{We denote by $\Zgeqz$ the set of non-negative integers, and by $\Zp$ the set of positive integers.} every player best responds to the action profile of the players at time step $T-1$, by choosing the action that maximizes his utility function. We suppose that players play action $\white$ as a tie breaking rule. Formally we impose a strict order on $\{\white,\black\}$ such that $\min\{\white,\black\} = \white$. This tie breaking rule does not affect the behavior of the dynamics whatsoever. However, it does have a natural effect in the network resilience context and this effect will be discussed in Section 7. Suppose we denote by $a_{i,T}$ the action played by player $i$ at time $T$, then given an initial action configuration $\underline{a}$ in $\A_n$, for every player $i$, we recursively define:
\begin{align}\label{gameRule}
	a_{i,0} &= \underline{a}_i  \nonumber\\
	a_{i,T} &= \min \argmax_{a_i \in \{\white,\black\}} u_i(a_i,a_{-i,T-1}), \quad \text{for }  T \in \Zp.
\end{align}
where the $\min$ operator breaks ties. The following proposition provides a rule equivalent to that induced by the recursive definition in (\ref{gameRule}). This characterization is similar to that in \cite{MORR01}.

\begin{MyPro} \label{rule}
Let $\underline{a}$ be the initial action configuration, namely the action profile of the players at time step 0. For every positive integer $T$, player~$i$ plays action $\black$ at time step $T$ if and only if (strictly) more than $q_id_i$ neighbors of player $i$ played action $\black$ at time step~$T-1$.
\end{MyPro}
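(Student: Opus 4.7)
The plan is to just compute both utilities and compare. Fix a player $i$ and a time step $T \geq 1$, and let $k_i$ denote the number of neighbors of $i$ who played $\black$ at time $T-1$, so that $d_i - k_i$ of them played $\white$.

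First I would compute $u_i(\black, a_{-i,T-1})$ directly from (\ref{payoff}) and (\ref{sumPayoff}). The only nonzero contributions come from neighbors who also played $\black$, and each such interaction yields $1-q_i$, so $u_i(\black, a_{-i,T-1}) = k_i(1-q_i)$. Symmetrically, $u_i(\white, a_{-i,T-1}) = (d_i - k_i)q_i$, since only the neighbors playing $\white$ contribute, each contributing $q_i$.

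Next I would apply the best-response rule (\ref{gameRule}). Because of the tie-breaking convention $\min\{\white,\black\} = \white$, player $i$ plays $\black$ at time $T$ if and only if $u_i(\black, a_{-i,T-1}) > u_i(\white, a_{-i,T-1})$, with strict inequality. Substituting the two expressions above, this becomes $k_i(1-q_i) > (d_i - k_i)q_i$, which simplifies (by expanding and canceling the common term $k_i q_i$) to $k_i > q_i d_i$.

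There is no real obstacle here: the argument is a one-line algebraic manipulation once the utilities are written out, and the only subtle point is that the inequality is strict precisely because of the tie-breaking rule favoring $\white$. I would conclude by noting that $k_i$ is, by definition, the number of neighbors of $i$ that played $\black$ at time $T-1$, which gives exactly the claimed equivalence.
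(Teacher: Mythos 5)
Your proposal is correct and follows essentially the same route as the paper: writing out the two utilities (the paper does this with indicator sums, you with the count of $\black$-playing neighbors), comparing them with strict inequality due to the $\white$-favoring tie-break, and cancelling the common term to obtain the threshold condition. The only cosmetic caveat is that your use of $k_i$ for the count of $\black$ neighbors collides with the paper's later use of $k_i$ for thresholds, so a different letter would be preferable.
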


\begin{proof}
We substitute $u_i$ in (\ref{gameRule}) with the expressions in (\ref{payoff}) and (\ref{sumPayoff}), and get that player~$i$ plays action $\black$ at time $T$ if and only if
\begin{equation}
 \sum_{j \in \N_i}(1 - q_i)\One_{\{\black\}}(a_{j,T-1}) > \sum_{j \in \N_i}q_i\One_{\{\white\}}(a_{j,T-1}),  \nonumber
\end{equation}
where $\One_{\Gamma}(x) = 1$ if and only if $x\in\Gamma$. Equivalently, player~$i$ plays action $\black$ at time $T$ if and only if
\begin{equation}
 \sum_{j\in \N_i}\One_{\{\black\}}(a_{j,T-1}) > q_id_i. \nonumber
\end{equation}
The left-side term is essentially summing the number of neighbors of player $i$ playing action $\black$.
\end{proof}

As a technical clarification, we highlight the fact that every player is capable of switching actions both from $\white$ to $\black$ and $\black$ to $\white$. This contrasts a variant of the dynamics (extensively studied in the literature) where a player can switch out from only one of the actions.

\subsection{The Extension Model}
Our primary model in the best-response dynamics is such that every node does not take its own action into account and treats the payoffs from the pairwise interactions with equal weights. This corresponds on the part of player $i$ to an unweighted counting of the number of neighbors playing $\black$ at time $T-1$ to decide whether to play $\black$ at $T$ or not. Our model can take a more general form by allowing each node to play a coordination game with itself, and by assigning symmetric (with respect to the neighbors) weights on the payoffs. We will also allow those weights to be negative, and thus lose the monotonicity property induced by coordination to get richer dynamics. Neighbors linked by a negatively weighted edge have an incentive to mismatch their actions. In this respect, the induced networked game ceases to be (in general) a networked coordination game. Let $G(\I_n,E)$ be given. Suppose we assign for every $ij$ in $E$, a non-zero real weight $w_{ij}=w_{ji}$ and for every player $i$ in $\I_n$, a real weight $w_{ii}$. Given a $q$ in $\Q_n$, we extend the utility player $i$ gets to be the weighted sum of the payoffs from the pairwise interactions with the players in $\N_i \cup \{i\}$, specifically when player $j$ plays action $a_j$,
\begin{equation} \label{utility}
	u_i(a_i,a_{-i}) = \sum_{j \in \N_i\cup \{i\}}w_{ij}{g_{i,j}(a_i,a_j)}. \nonumber
\end{equation}
Again, we denote by $a_{i,T}$ the action played by player $i$ at time $T$. If we let $\underline{a}$ be the initial action configuration, namely $a_{i,0} = \underline{a}_i$, then for every positive integer $T$, player~$i$ plays action $a_{i,T} = \black$ at time step $T$ if and only if
\begin{equation}
	\sum_{j \in \N_i\cup \{i\}} w_{ij} \One_{ \{\black \} }(a_{j,T-1}) > \theta_i, \nonumber
\end{equation} 
where we define $\theta_i = q_i \sum_{j \in \N_i\cup \{i\}} w_{ij}$. The primary model described in the previous subsection is then an instance of this model where $w_{ij} = w_{ji} = 1$ for all edges $ij$ in $E$ and $w_{ii} = 0$ for all players $i$.

\section{Description of the Dynamics} \label{genbeh}

We proceed to provide a coarse description of the involved dynamics. We focus only on the primary model throughout this section; all propositions regarding the primary model may be naturally generalized to the extension model. We first highlight what is essential (in our model) for this section. We consider a \emph{finite} set of players $\I_n$ along with three mathematical objects $\G_n$, $\Q_n$ and $\A_n$. An element $G$ of $\G_n$ corresponds to the network structure imposed on the players, an element $q$ of $\Q_n$ refers to the type distribution over the players i.e. a function from $\I_n$ into $[0,1]$, and an element $a$ of $\A_n$ represents an action profile played by the players i.e. a function from $\I_n$ into $\{\white,\black\}$. The elements $G$, $q$ and $a$ interact as dictated by Proposition \ref{rule}.


\subsection{From Types to Thresholds}
 
In Proposition \ref{rule}, Player $i$ uses $q_id_i$ as a threshold to decide whether to play $\black$ or $\white$. The value of $q_id_i$ is non-necessarily an integer, however we may replace it by an integer without modifying the dynamics. To this end, we substitute the set $\Q_n$ by a set $\K_n$ and then modify the statement of Proposition \ref{rule}. We define $\K_n$ to be the space of threshold distributions over the agents, i.e. the set of maps from $\I_n$ into $\Zgeqz$. We refer the $k_i$ as the threshold of player $i$. We make a particular distinction between the word \emph{type} attributed to $\Q_n$ and the word \emph{threshold} attributed to $\K_n$. Given a pair $(G,k)$ with $k \in \K_n$, Proposition \ref{rule} generalizes as follows:
\begin{MyPro} \label{ruleInt}
Let $\underline{a}$ be the initial action configuration, namely the action profile of the players at time step $0$. For every positive integer $T$, player~$i$ plays action $\black$ at time step $T$ if and only if at least $k_i$  neighbors of player $i$  played action $\black$ at time step~$T-1$.
\end{MyPro}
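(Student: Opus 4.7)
The plan is to reduce Proposition \ref{ruleInt} directly to Proposition \ref{rule} by a short arithmetic observation: a strict inequality involving a real threshold is equivalent to a weak inequality involving a carefully chosen integer threshold, since the quantity being compared (the number of $\black$-playing neighbors) is always an integer.

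Concretely, I would proceed as follows. Given a type distribution $q \in \Q_n$, define the corresponding threshold distribution $k \in \K_n$ by $k_i := \lfloor q_i d_i \rfloor + 1$ for each $i \in \I_n$. The key observation is that for any non-negative integer $m$, one has $m > q_i d_i$ if and only if $m \geq \lfloor q_i d_i \rfloor + 1$; this is immediate when $q_i d_i$ is non-integer, and in the integer case $m > q_i d_i$ means $m \geq q_i d_i + 1 = \lfloor q_i d_i \rfloor + 1$ as well. Setting $m$ to be the number of neighbors of player $i$ playing $\black$ at time $T{-}1$, Proposition \ref{rule} then states that player $i$ plays $\black$ at time $T$ iff $m \geq k_i$, which is precisely the statement of Proposition \ref{ruleInt}.

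To close the loop, I would also remark that the correspondence $q \mapsto k$ just described is surjective onto the meaningful range $k_i \in \{1, \ldots, d_i + 1\}$: for any such $k_i$, picking $q_i := (k_i - \tfrac{1}{2})/d_i \in [0,1]$ recovers $\lfloor q_i d_i \rfloor + 1 = k_i$. Thresholds outside this range correspond to degenerate behavior (players who always or never play $\black$), which the threshold formulation $\K_n$ absorbs cleanly but which would need artificial encoding in the type formulation $\Q_n$; this is exactly the reason for switching to $\K_n$ in the first place.

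The only real subtlety (and it is mild) is the integer-boundary case $q_i d_i \in \Zgeqz$, which I would flag explicitly to confirm that no off-by-one error arises in the reduction. Beyond that, the proof is essentially a restatement, and it cleanly justifies replacing $\Q_n$ by $\K_n$ throughout the remainder of the paper without any loss of generality for the induced dynamics.
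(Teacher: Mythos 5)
Your proposal is correct and matches what the paper intends: the paper states Proposition \ref{ruleInt} without a formal proof, but the paragraph immediately following it describes precisely your reduction (substituting the integer $k_i$ for $q_id_i$ and replacing \emph{strictly more than} by \emph{at least}), and your observation that $m > q_id_i \iff m \geq \lfloor q_id_i\rfloor + 1$ for integer $m$ is the correct arithmetic justification. The only nit is in your side remark on surjectivity: the witness $q_i = (k_i - \tfrac{1}{2})/d_i$ exceeds $1$ when $k_i = d_i+1$ (one should take $q_i = 1$ there instead), but this does not affect the proof of the proposition itself.
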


The rule in Proposition \ref{ruleInt} supersets the rule in Proposition~\ref{rule}. For every $q$ in $\Q_n$ there exists a $k$ in $\K_n$ such that $q_id_i$ may be substituted with the integer $k_i$ for all~$i$ without changing the behavior of the players. Thus, a \emph{model} with types may always be simulated by a \emph{model} with integers. However, it is crucial to note that \emph{(strictly) more than} is replaced by \emph{at least}. Therefore, in this setting, the converse is not true i.e. not every \emph{model} with integers may be simulated by a \emph{model} with types. Indeed, Proposition \ref{rule} implies that playing $\black$ is never a best response for player $i$ if no player in $\N_i$ is playing $\black$. Nevertheless, a player having a \emph{threshold} equal to $0$ will always play $\black$ regardless of the actions played by his neighbors. Although we can restrict the thresholds to being non-zero, we refrain from doing so. Instead, we generalize our model (allowing $\black$ to be a best response for $i$ even if no player in $\N_i$ is playing $\black$) to provide symmetry between both actions $\black$ and $\white$. We do this for two reasons. The first is to study the linear threshold model as considered in the literature. The second is a technical reason, mainly to ensure closure of the set $\G_n{\times}\Q_n$ under certain operations e.g. node removal. Nevertheless, any result for the generalized version of the model is inherited by the initial version trivially by inclusion.

Having made the transition from \emph{types} to \emph{thresholds}, we distinguish the nodes having thresholds at the \emph{boundaries} as follows:
\begin{MyDef} \label{nonvalid}
Given a pair $(G,k)$ in $\G_n{\times}\K_n$, node $i$ in $\I_n$ is called non-valid with respect to $(G,k)$ (or simply non-valid) if $k_i$ is either equal to 0 or (strictly) greater than $d_i$. A node is called valid if it is not non-valid.
\end{MyDef}
A non-valid node is then allowed to play only one of the actions in $\{\white,\black\}$ whenever it is allowed to decide on the action to play. Node $i$ will always choose to play $\black$ if its threshold is $0$, and will always choose to play $\white$ if its threshold is (strictly) greater than $d_i$.

Finally, given a pair $(G,k)$ in $\G_n{\times}\K_n$, we denote by $G_k$ the map from $\A_n$ into $\A_n$ such that for player $i$, $(G_k a)_i = \black$ if and only if at least $k_i$ players are in $a^{-1}(\black)\cap \N_i$.\footnote{Let $f:A\rightarrow B$ and $g:B\rightarrow C$ be functions, we denote by $gf$ the function $g \circ f:A\rightarrow C$. In particular, if a function $f$ maps a set $A$ to itself, for a non-negative integer $m$, we denote by $f^m$ the function $f{\circ}f^{m-1}$ where $f^0$ is the identity map on $A$.} %
From this perspective, given an initial configuration $a$ in $\A_n$, the sequence $a,G_ka,G^2_ka,\cdots$ corresponds to the sequence of action profiles $a,a_1,a_2,\cdots$ where $a_T = G^T_ka$ is the action profile played by the players at time $T$ if they act in accordance with the rule in Proposition \ref{ruleInt}.


\subsection{The Limiting Behavior}
To understand the limiting behavior, we note two fundamental properties: the space $\A_n$ has finite cardinality, and Proposition \ref{ruleInt} is deterministic. Since $\A_n$ is finite, if we let $a_0,a_1,a_2,\cdots$ be any infinite sequence of action profiles played by the agents according to Proposition \ref{ruleInt}, then there exists at least one action profile $\hat{a}$ that will appear infinitely many times along this sequence. Since the dynamics are deterministic (and that actions at time T+1 depend only on actions at time T), the same sequence of action profiles appears between any two consecutive occurrences of $\hat{a}$. This means that after a finite time step, the sequence of action profiles will cycle among action profiles.

Let us consider a different representation of the dynamics. Given a pair $(G,k)$ in $\G_n{\times}\K_n$, we define a (binary) relation $\rightarrow$ on $\A_n$ such that for $a$ and $b$ in $\A_n$, $a \rightarrow b$ if and only if $b = G_ka$. The graph $H(\A_n,\rightarrow)$ then forms a directed graph (possibly with self loops) on the vertex set taken to be the space of action profiles $\A_n$, and an action profile $a$ is connected to an action profile $b$ by a directed edge $(a,b)$ going from $a$ to $b$ if and only if $b = G_ka$. Suppose we pick a vertex $a$, namely an action configuration, and perform a walk on vertices along the edges in $H$ starting from $a$. The walk eventually cycles vertices in the same order. Every initial action profile leads to one cycle, and two action profiles need not lead to the same cycle. We formalize the idea in the following definitions.

\begin{MyDef} \label{Rrelation}
Given $(G,k)$ in $\G_n\times\K_n$, for two action profiles $a$ and $b$ in $\A_n$, we say that $a$ can be reached from $b$ with respect to $G_k$ if there exists a non-negative integer $T$ such that $a = G_k^Tb$. Formally, we define the relation $\R_{G_k}$ on $\A_n$ such that for $a$ and $b$ in~$\A_n$, $a\R_{G_k}b$ if and only if there exists a non-negative integer $T$ such that $a = G_k^Tb$.
\end{MyDef}

For $a$ and $b$ in $\A_n$, we have $a\R_{G_k}b$ if and only if there exists a directed path in $H(\A_n,\rightarrow)$ from vertex $b$ to vertex $a$. The graph $H$ is not necessarily weakly-connected,\footnote{A directed graph $G$ is said to be weakly-connected if for any vertices $u$ and $v$ in the vertex set of $G$, there exists an \emph{undirected} path connecting $u$ to $v$. A weakly-connected component of $G$ is a maximal subgraph of $G$ that is weakly-connected.} and by the argument provided at the beginning of this subsection, every weakly-connected component of $H$ necessarily contains a directed cycle (possibly a self-loop). Moreover, each vertex in $H$ can have at most one outgoing edge, therefore every weakly-connected component of $H$ cannot contain more than one directed cycle.\footnote{If we construct a relation $\C$ on $\A_n$ such that for $a$ and $b$ in $\A_n$, $a\C b$ if and only if $a\R_{G_k}b \text{ or } b\R_{G_k}a$, then $\C$ is an equivalence relation on $\A_n$. In this setting, two configurations in $\A_n$ are in the same equivalence class with respect to the relation $\C$ if and only if they are in the same weakly-connected component in $H$.} We formally characterize the set of those cycles as follows:

\begin{MyDef} \label{CycleDef}
Given a pair $(G,k)$ in $\G_n{\times}\K_n$, we define $CYCLE_n(G,k)$ to be the collection of subsets of $\A_n$, such that for every $C$ in $CYCLE_n(G,k)$, if $a$ and $b$ are in $C$ then we have both $a\R_{G_k}b$ and $b\R_{G_k}a$, and for every $c$ in $\A_n\backslash C$, there does not exist an action configuration $a$ in $C$ such that $a\R_{G_k}c$. We refer to the elements of $CYCLE_n(G,k)$ as convergence cycles.
\end{MyDef}

The condition ``$a\R_{G_k}b \text{ and } b\R_{G_k}a$'' can be concisely replaced by ``$a\R_{G_k}b$'', however we keep it as such to stress on the fact that both $a$ can be reached from $b$ and $b$ can be reached from $a$. The second condition ensures that $C$ is in $CYCLE_n(G,k)$ only if there exists no larger cycle $C'$ containing $C$.

Cycles in $CYCLE_n(G,k)$ consisting of only one action configuration are fixed-points of $G_k$ and so will be referred to as \emph{fixed-points}. Cycles in $CYCLE_n(G,k)$ consisting of more than one action configuration will be referred to as \emph{non-degenerate cycles} (as opposed to \emph{fixed-points} which are \emph{degenerate cycles}).\\

In the remaining part of this section, we provide a broad overview of the main results (without proofs). We only mention results concerning the dynamical properties of the model (e.g. the length of convergence cycles, convergence time, the number of fixed points or cycles and reachability). We do not consider yet any resilience measure or bounds thereof, we do so in Section 7.

\subsection{An Overview of Convergence Results}

Given the limiting cyclic behavior, the most natural starting point would be to characterize the length of the cycles in the equivalence classes as a function of the imposed graph structure and the threshold distribution.
\begin{MyThm} \label{ConvCycl}
For every positive integer $n$, every $(G,k)$ in $\G_n{\times}\K_n$ and every cycle $C$ in $CYCLE_n(G,k)$, the cardinality of $C$ is less than or equal to $2$.
\end{MyThm}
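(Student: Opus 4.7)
The plan is to construct a bounded Lyapunov function on consecutive pairs of configurations that is non-increasing along the dynamics and strictly decreases unless the configuration at time $T{+}1$ coincides with that at time $T{-}1$; finiteness of its range then forces every orbit to become $2$-periodic, which is the claim.

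Encode the action of player $i$ at time $T$ by $s_i(T)=\One_{\{\black\}}(a_{i,T})\in\{0,1\}$ and set $f_i(T)=\sum_{j\in \N_i}s_j(T)$, so that Proposition~\ref{ruleInt} reads: $s_i(T+1)=1$ iff $f_i(T)\geq k_i$. Equivalently, writing $\theta_i = k_i - \tfrac12$, $s_i(T+1)=1$ iff $f_i(T)>\theta_i$, a strict inequality that is never tight since $f_i(T)$ is integer-valued. I then define
\[
\Phi(T)\;=\;-\!\sum_{ij\in E}\bigl[s_i(T)s_j(T-1)+s_j(T)s_i(T-1)\bigr]\;+\;\sum_{i}\theta_i\bigl[s_i(T)+s_i(T-1)\bigr],
\]
a potential on consecutive pairs of configurations in which the bilinear term is symmetric under swapping the two time slices --- a symmetry available precisely because $G$ is undirected.

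Computing $\Delta\Phi(T):=\Phi(T+1)-\Phi(T)$ and using $\sum_{j\in \N_i}s_j(T)=f_i(T)$, the edge contributions telescope (after grouping by ordered pair) to
\[
\Delta\Phi(T)\;=\;-\sum_{i}\bigl(s_i(T+1)-s_i(T-1)\bigr)\bigl(f_i(T)-\theta_i\bigr).
\]
By the update rule, if $f_i(T)>\theta_i$ then $s_i(T+1)=1$, so $s_i(T+1)-s_i(T-1)\in\{0,1\}$; and if $f_i(T)<\theta_i$ then $s_i(T+1)=0$, so $s_i(T+1)-s_i(T-1)\in\{-1,0\}$. Either way the $i$-th summand in $\Delta\Phi(T)$ is nonpositive, and since $f_i(T)-\theta_i$ is a nonzero half-integer, equality forces $s_i(T+1)=s_i(T-1)$ for every $i$.

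Since $\Phi$ takes values in a bounded subset of $\tfrac12\mathbb{Z}$, along any orbit it must stabilize after finitely many steps, and from that moment on $s_i(T+1)=s_i(T-1)$ holds identically; hence every $C\in CYCLE_n(G,k)$ has at most two elements. The crux of the argument is isolating the ``right'' potential so that its discrete derivative produces exactly the inner product of $(s_i(T+1)-s_i(T-1))$ against $(f_i(T)-\theta_i)$; the symmetrization over both orientations of each edge is what makes this alignment possible, and it is the step that genuinely uses the undirectedness of the game. The authors' own proof takes a different, more combinatorial route, via bipartite- and symmetric-expansions to a strict-majority setting on odd-degree bipartite graphs, but the underlying symmetry being exploited is the same.
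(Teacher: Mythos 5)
Your argument is correct, and it is a genuinely different route from the one in the paper. You have rediscovered the classical energy/Lyapunov argument for symmetric threshold networks (essentially the approach of Goles, cited as \cite{GOLE01}, which the authors explicitly say they deliberately avoid): the bilinear form $\sum_{ij\in E}\bigl[s_i(T)s_j(T-1)+s_j(T)s_i(T-1)\bigr]$ is symmetric under swapping the two time slices because the interaction weights are symmetric, the discrete derivative collapses to $-\sum_i\bigl(s_i(T+1)-s_i(T-1)\bigr)\bigl(f_i(T)-\theta_i\bigr)$ exactly as you compute, and the half-integer offset $\theta_i=k_i-\tfrac12$ guarantees each summand is strictly negative unless $s_i(T+1)=s_i(T-1)$; finiteness of the range of $\Phi$ then forces eventual $2$-periodicity. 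I checked the telescoping identity and the sign analysis, including the non-valid nodes ($k_i=0$ or $k_i>d_i$), and everything goes through. The paper instead proves the bound by two graph transformations --- a bipartite-expansion and a symmetric-expansion reducing everything to a strict-majority rule on odd-degree bipartite graphs --- followed by a 2-Partition that serializes the update and a monotone count of conflict links. Your proof is shorter, more self-contained, and extends with no extra work to the weighted extension model (including self-loops, whose diagonal contribution $w_{ii}s_i(T)s_i(T-1)$ is already time-symmetric), whereas the paper needs three separate simulation arguments in Section 4.5; it also yields a convergence-time bound directly from the number of values $\Phi$ can take. What the paper's machinery buys in exchange is reusability: the expansions and the sequential 2-Partition dynamics are the backbone of the linear convergence-time bounds for trees and even cycles, of the fixed-point/2-cycle bijection (Lemma \ref{FunctionOFix}), and of the \#P-completeness reductions, none of which fall out of the potential-function argument.
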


Put differently, given a network structure $G$, a threshold distribution $k$ and an initial action profile $a$, if we iteratively apply $G_k$ on $a$ ad infinitum to get a sequence of best response action profiles, along the sequence of actions considered by player $i$, player $i$ will eventually either settle on playing one action, or switch action on every new application of~$G_k$.\\

We further show (in Sections 4.4 and 4.5) that such a limiting behavior also holds for the extension model. We then proceed to characterize the number of iterations needed to reach a convergence cycle.

\begin{MyDef}
For every positive integer $n$, and every $(G,k,a)$ in $\G_n{\times}\K_n{\times}\A_n$, we define $\delta_n(G,k,a)$ to be equal to the smallest non-negative integer $T$ such that there exists a cycle $C$ in $CYCLE_n(G,k)$ and $b$ in $C$ with $G_k^Ta = b$.
\end{MyDef}

The quantity $\delta_n(G,k,a)$ denotes to the minimal number of iterations needed until a given action configuration $a$ reaches a cycle, when iteratively applying $G_k$. We refer to $\delta_n(G,k,a)$ as the \emph{convergence time} from $a$ under $G_k$.

\begin{MyThm} \label{ConvTimeQuad}
For some positive integer $c$, every positive integer $n$, and every $(G,k,a)$ in $\G_n{\times}\K_n{\times}\A_n$, the convergence time $\delta_n(G,k,a)$ is less than or equal to $cn^2$.
\end{MyThm}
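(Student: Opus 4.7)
The plan is to leverage the two transformations previewed in the introduction, the \emph{bipartite-expansion} and the \emph{symmetric-expansion}, to reduce the claim to a clean Lyapunov/edge-counting argument in the spirit of Goles' classical analysis of synchronous symmetric threshold networks.

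First I would apply the bipartite-expansion to $(G,k)$. This produces a bipartite graph $G^B$ on vertex set $\I_n\times\{0,1\}$ whose edges mirror those of $G$, together with a threshold function $k^B$, in such a way that a single parallel application of $G_k$ corresponds to two \emph{sequential} half-steps on $G^B$, each of which updates only one side of the bipartition. The point of this step is to convert the synchronous dynamics on $G$ into an equivalent sequential (single-side-at-a-time) process on $G^B$; the size blows up by only a factor of two. Next I would apply the symmetric-expansion: for each node~$i$, attach a bounded number of dummy leaves pinned to a fixed action so that the resulting degree is odd and the effective threshold becomes a strict majority. The number of added leaves at $i$ is at most $d_i+1$, so the total size of the augmented bipartite graph is $O(n^2)$, and the dynamics restricted to the original vertices are unchanged.

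In this bipartite, strict-majority, odd-degree setting, I would define the Lyapunov potential
\begin{equation}
\Phi(t) \;=\; \bigl|\{uv \in E(G^B)\,:\, a_{u,t} = a_{v,t-1}\}\bigr|. \nonumber
\end{equation}
The key lemma to establish is that along the sequential process on $G^B$, the quantity $\Phi$ is monotone, and is \emph{strictly} monotone as long as the trajectory has not yet entered a cycle. Strict monotonicity follows from the strict-majority rule combined with the symmetry $w_{ij}=w_{ji}$: whenever a vertex $u$ flips at step $t$, the edges incident to $u$ contribute a net strictly positive change to $\Phi$, and cancellations between distinct flipping vertices are ruled out precisely because $G^B$ is bipartite and only one side updates per half-step. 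Since $\Phi$ takes integer values in $[0,|E(G^B)|]$ and $|E(G^B)|=O(n^2)$, the sequential process reaches a cycle in $O(n^2)$ half-steps, and hence the original dynamics $G_k$ reach a convergence cycle in $O(n^2)$ steps, proving the theorem with a uniform constant $c$ that does not depend on $n$, $G$, $k$ or $a$.

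The main obstacle will be the monotonicity lemma, specifically showing that $\Phi(t)=\Phi(t+1)$ forces the trajectory to already lie on a convergence cycle (i.e.\ $a_{t+2}=a_t$). This is where the strict-majority rule (not just ``$\geq$'') and the oddness of the degrees obtained from the symmetric-expansion are essential, since otherwise one can construct local flips that leave $\Phi$ invariant without the global configuration cycling; the bipartite structure then guarantees that per-vertex contributions to $\Delta\Phi$ do not interfere. A secondary technical point is to verify that the dummy leaves introduced in the symmetric-expansion, being pinned, neither alter $\Phi$ by more than an $O(n^2)$ additive constant nor create spurious edges that affect monotonicity, so that the bound transfers cleanly back to $G_k$.
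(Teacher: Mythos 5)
Your proposal is correct and follows essentially the same route as the paper: the paper also combines the symmetric-expansion and bipartite-expansion (their order is immaterial by Proposition~\ref{CombineExp}), decouples the parallel update into per-side half-steps via a 2-Partition, and bounds the convergence time by the number of conflict links $|E_c^{G''}(a)|$, which strictly decreases at every half-step that changes something (the proof of Lemma~\ref{ConvLemma}); your potential $\Phi$ is just the complementary count of agreeing edges, and the final arithmetic $|E''| \le 2\bigl[|E| + 3\sum_i (d_i+1)\bigr] = O(n^2)$ is exactly the paper's.
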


We further improve the results on convergence time and get a bound that is linear in the size of the network when the graphs are restricted to cycle graphs or trees.

\begin{MyThm} \label{ConvTimeLin}
For all positive integers $n$, and every $(G,k,a)$ in $\G_n{\times}\K_n{\times}\A_n$ where $G$ is an even-length cycle graph or a tree, the convergence time $\delta_n(G,k,a)$ is less than or equal to $n$.
\end{MyThm}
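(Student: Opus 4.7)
The plan is to specialize the bipartite-expansion and symmetric-expansion framework underlying Theorem~\ref{ConvTimeQuad}, exploiting the fact that both trees and even-length cycles are already bipartite. Recall that for the quadratic bound, one first bipartite-expands $G$ (introducing two vertex copies so that edges connect copies of different color classes) and then symmetric-expands each vertex to have odd degree, so that majority rule implements the original threshold update; convergence time is then controlled by a Lyapunov-like potential whose range is proportional to the edge count of the resulting auxiliary graph, which is $O(n^2)$ in general.

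When $G$ is itself bipartite, the bipartite-expansion step is essentially degenerate: the expanded graph decomposes into two isomorphic copies of $G$, and the induced sequential dynamics on each copy recover exactly the even-indexed (respectively, odd-indexed) frames of the parallel dynamics on $G$. No new edges between original vertices are introduced, and the symmetric-expansion amounts to attaching, at each vertex $v$, only enough auxiliary structure to bring $d_v$ to the required odd parity. These auxiliary pieces encode the tie-breaking bias of $v$ and, since they are constant along the trajectory, they contribute zero to the range of the potential. Re-running the monotonicity argument from Theorem~\ref{ConvTimeQuad} on this much sparser expansion, the potential strictly decreases by at least one unit at every time step that is not already inside a convergence cycle, and its total range is bounded by the number of ``active'' edges of $G$ itself. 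For a tree this is $n-1$; for an even cycle it is $n$; both yield $\delta_n(G,k,a) \leq n$.

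The main obstacle will be the careful accounting needed to verify that, in the bipartite setting, the auxiliary edges introduced by the symmetric-expansion truly contribute nothing to the potential's range, so that the bound does not silently inflate by a small constant factor. A secondary subtlety for trees is the possible one-step slack between when the potential stabilizes and when the trajectory actually enters the cycle, but this is absorbed by the extra unit in the bound $n \geq n-1$. Finally, it is worth flagging why the statement excludes odd cycles: an odd cycle is not bipartite, so the bipartite-expansion yields a single connected auxiliary graph on $2n$ vertices with $2n$ edges, and this proof technique no longer collapses to a bound of $n$---consistent with the theorem's deliberate restriction to even-length cycles and trees.
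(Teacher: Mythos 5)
Your proposal does not follow the paper's route, and it has a genuine gap: the conflict-edge potential from the quadratic bound cannot be sharpened to give $\delta_n(G,k,a)\leq n$ in the way you describe. Two specific problems. First, the claim that the symmetric-expansion's auxiliary edges ``contribute zero to the range of the potential'' is false: the gadget nodes are indeed frozen along the simulated trajectory, but the edges joining a pivot node $i$ to its $d_i+1$ attached gadgets change conflict status every time the pivot flips (from $w_i$ conflicts to $b_i$ conflicts and back), so they do enter the potential's variation. You cannot simply restrict the potential to the original edges either, because a pivot flips when a majority of its \emph{expanded} neighborhood disagrees with it, and this does not imply that the number of conflicts among its \emph{original} edges decreases (that would require $k_i>d_i/2$). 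Second, even in the most favorable accounting --- ignoring all gadget edges entirely --- the potential has range $|E|$ and decreases by at least one per non-trivial half-step of the sequential process $G_k|_{P_e}G_k|_{P_o}$, while one round of that process corresponds to two parallel time steps; this yields a bound of order $2|E|$, i.e.\ $2(n-1)$ for a tree and $2n$ for an even cycle, not $\leq n$. So your argument proves a linear bound with a constant factor strictly larger than $1$, which is weaker than the theorem.

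The paper avoids the conflict-edge potential altogether for these two cases. For even cycles it uses the \emph{strong assignment} structure of Section 4.1: along the subsequence $a, G_k^2a, G_k^4a,\dots$ each player in $P_o$ can switch at most once (onto its strong assignment), so the monotone quantity is the number of $P_o$-players already playing their strong assignment, bounded by $|P_o|=n/2$, giving $2\cdot(n/2)=n$. For trees it proves an inductive peeling lemma (Lemma \ref{Dlemma}): whenever $a{\upharpoonright}P_o$ is reachable and induces $a{\upharpoonright}P_e$, some node of $P_e$ is already permanently stable under the sequential dynamics; deleting it and updating thresholds as in Proposition \ref{BE} reduces to a smaller tree, and choosing the partition with $|P_e|\leq|P_o|$ gives the bound $2|P_e|\leq n$. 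If you want to salvage your approach, you would need either a per-round decrease of at least $2$ in the conflict potential or a node-based (rather than edge-based) monotone quantity; the latter is essentially what the paper does.
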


A linear bound can also be derived for complete graphs, but we do not consider this case in this paper. Instead, we refer the reader to \cite{MyThesis}.\\

We proceed to characterize the number of fixed-points and (non-degenerate) convergence cycles and present an overview of results on decision and counting problem that arise within this framework.

\subsection{An Overview of Complexity Results}

We first argue in Section 6 that no insightful \emph{uniform} bound on the number of limiting configurations exists; the range of such a bound will be too large. In this case, how well can we characterize the number of fixed-points and non-degenerate convergence cycles? To this end, we study the complexity of counting those numbers. We consider the counting problems\footnote{We refer the reader to Appendix \ref{ComplexityAppendix} for a short review of the required background.} \#CYCLE, \#FIX and \#2CYCLE that take $<n,G,k>$ as input, where $n$ is a positive integer and $(G,k)$ belongs to $\G_n\times\K_n$, and outputs the cardinality of $CYCLE_n(G,k)$, the number of fixed points and the number of non-degenerate cycles respectively.
\begin{MyThm} 
 \#CYCLE, \#FIX and \#2CYCLE are \#P-Complete.
\end{MyThm}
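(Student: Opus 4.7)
The plan is to establish \#P membership for all three counting problems and then prove \#P-hardness via polynomial-time reductions from a standard \#P-complete counting problem, which I take to be \#MONOTONE-2-SAT (equivalent to \#IS, \#P-complete by Valiant). Membership in \#P is routine: given a candidate action profile $a \in \A_n$, Proposition \ref{ruleInt} lets me compute $G_k a$ in $O(n + |E(G)|)$ time, so verifying that $a$ is a fixed point reduces to checking $G_k a = a$, and verifying that $\{a, b\}$ is a non-degenerate cycle reduces to checking $G_k a = b$, $G_k b = a$, and $a \neq b$. Hence \#FIX, \#2CYCLE, and \#CYCLE all lie in \#P.

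For \#P-hardness of \#FIX, given $\phi = \bigwedge_{j=1}^m (x_{i_1(j)} \vee x_{i_2(j)})$, I would build $(G_\phi, k_\phi) \in \G_N {\times} \K_N$ in polynomial time whose fixed points are in bijection with satisfying assignments of $\phi$. The construction combines three ingredients: (i) a \emph{variable gadget} for each $x_i$, for example the pair $(u_i, v_i)$ joined by an edge with $v_i$ private and $k_{u_i} = k_{v_i} = 1$, which in isolation admits exactly the two monochromatic fixed-point restrictions $(\black,\black)$ and $(\white,\white)$ encoding $x_i = 1$ and $x_i = 0$; (ii) a \emph{clause gadget} for each $C_j$ connecting the gadgets of $x_{i_1(j)}$ and $x_{i_2(j)}$ through auxiliary vertices whose threshold balance admits a fixed point only when at least one of the two variables is $\black$; and (iii) \emph{pinning vertices} with non-valid thresholds (Definition \ref{nonvalid}) --- $k = 0$ forces $\black$, $k > d$ forces $\white$ --- used to rebalance degrees at variable-gadget attachment points so that adding clause edges does not collapse the two-state property of the variable gadget.

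For \#2CYCLE I would proceed analogously but design the variable gadget to encode $x_i$'s truth value through a non-degenerate cycle: the same pair $(u_i, v_i)$ with $k_{u_i} = k_{v_i} = 1$ admits the 2-cycle $(\black,\white) \leftrightarrow (\white,\black)$ alongside its two fixed points, and carefully chosen pinning vertices can be used to rule out the fixed-point states, leaving 2-cycles as the only stable variable configurations. A complementary tool is the \emph{bipartite doubling} of $(G, k)$: set $V(G') = V(G) {\times} \{0, 1\}$ with edges $\{(u,0),(v,1)\}$ and $\{(u,1),(v,0)\}$ for each $uv \in E(G)$ and thresholds $k'_{(v,i)} = k_v$. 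Then $G'_{k'}(a^0, a^1) = (G_k a^1, G_k a^0)$, yielding the identity $\#FIX(G', k') = \#FIX(G, k) + 2 \cdot \#2CYCLE(G, k)$, from which the three counting problems are polynomially inter-translatable. Since $|CYCLE_n(G,k)| = \#FIX(G,k) + \#2CYCLE(G,k)$, \#P-hardness propagates to \#CYCLE.

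The hard part will be the gadget design itself. Once a clause gadget is attached to a variable gadget, the degree and neighbor sum at the attachment points change, which can collapse or proliferate the set of consistent fixed-point (or 2-cycle) configurations of the variable. Introducing just-right pinning vertices to restore the threshold balance, while simultaneously ensuring that the clause gadget genuinely forces fixed-point inconsistency (as opposed to merely determining the color of an auxiliary node) whenever the clause is unsatisfied, keeping $G_\phi$ connected as required by $\G_N$, and, in the \#2CYCLE reduction, suppressing all spurious fixed points, is where the bulk of the technical work lies.
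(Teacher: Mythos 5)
Your \#P-membership argument and the overall plan for \#FIX (reduce from a monotone 2-CNF/2-DNF counting problem via variable and clause gadgets) are in the spirit of the paper, which reduces \#monotone-2DNF to \#FIX restricted to bipartite graphs. Note, though, that even there the paper does not achieve the clean bijection you are hoping for: its construction triplicates nodes so that a rigidity lemma (Lemma \ref{BSubgraphProp}) forces monochromaticity, and the fixed-point count ends up being $\#sat + 8(\#nsat-1) + 1$, recovered by solving a small linear system rather than read off directly. Deferring the gadget design is acceptable in a sketch, but be aware that this is precisely where the paper has to work hardest, and your proposed two-node variable gadget with clause attachments re-balanced by pinning vertices is not obviously realizable with a one-to-one count.

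The genuine gaps are in how you get from \#FIX to the other two problems. First, the inference ``$|CYCLE_n(G,k)| = \#FIX + \#2CYCLE$, hence hardness propagates to \#CYCLE'' is invalid: the sum of two \#P-hard functions need not be \#P-hard (the paper itself flags this, with the example that the number of fixed points plus the number of non-fixed-points is trivially $2^n$). Second, your bipartite-doubling identity $\#FIX(G',k') = \#FIX(G,k) + 2\,\#2CYCLE(G,k)$ is correct but runs in the wrong direction: it shows $\#2CYCLE \in FP^{\#FIX}$, whereas hardness of \#2CYCLE requires computing a hard function \emph{from} a \#2CYCLE oracle. Your fallback --- a direct \#2CYCLE gadget that ``suppresses all spurious fixed points'' --- is left entirely open and collides with a structural fact you have not noticed: on bipartite instances every non-degenerate cycle arises by interleaving two \emph{distinct fixed points} across the two sides of a 2-Partition, so 2-cycles cannot exist without fixed points there. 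That structural fact is exactly the paper's missing ingredient (Lemma \ref{FunctionOFix}): for bipartite $(G,k)$ with $F$ fixed points, the non-degenerate cycles are in bijection with unordered pairs of distinct fixed points, so $\#2CYCLE = F(F-1)/2$ and $\#CYCLE = F(F+1)/2$. An oracle for either problem on a bipartite instance then yields $F$ by solving a quadratic with a unique non-negative root, and hardness of \#bipartite-FIX transfers cleanly to both \#2CYCLE and \#CYCLE. Without this lemma (or an equivalent device), your argument establishes at most the hardness of \#FIX.
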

One has to be subtle towards what such result entails. This result does not imply that no characterization of the number of fixed-points or non-degenerate cycles is possible whatsoever, but rather that we would be unable to get an arbitrarily refined characterization of that number assuming \#P is not in FP.\\

We further show in Section 6.1 that those counting problems remain hard even if we restrict the graphs to be bipartite and impose homogeneous thresholds on the players.\\

We then proceed to the question of reachability whereby given a graph structure $G$, a type distribution $k$ we decide whether a certain action configuration can be reached from some other configuration. We define the language PRED to consist of all 4-tuples $<n,G,k,a>$, where $n$ is a positive integer, $(G,k,a)$ belongs to $\G_n\times\K_n\times\A_n$ with $G_k(a)^{-1} \neq \emptyset$.
\begin{MyThm}
 PRED is NP-Complete.
\end{MyThm}

Given a graph structure $G$, a type distribution $k$ and a configuration $a$, computing the number of configurations $b$ from which $a$ can be reached by applying $G_k$ only once on $b$ is then necessarily hard. Instead, we define the counting problem \#reachable-PRED to take $<n,G,k,a>$ as input, where $n$ is a positive integer, $(G,k,a)$ and $G_k(a)^{-1} \neq \emptyset$ and output the cardinality of $G_k^{-1}(a)$. We get the following result:
\begin{MyThm} 
 \#reachable-PRED  is \#P-Complete.
\end{MyThm}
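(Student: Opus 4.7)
The plan is to show that \#reachable-PRED lies in \#P and then to establish \#P-hardness by a parsimonious reduction built on top of the NP-completeness argument for PRED. Membership in \#P is direct: a certificate is a candidate preimage $b \in \A_n$, which has size linear in $n$; verifying that $G_k(b) = a$ amounts to checking, for each player $i$, whether $|b^{-1}(\black) \cap \N_i|$ is at least $k_i$, and this can be done in time $O(|E(G)| + n)$. Since the number of such certificates $b$ is exactly $|G_k^{-1}(a)|$, the problem is in \#P.

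For \#P-hardness, I would revisit the reduction used to show PRED NP-complete. Such a reduction takes a SAT-type instance (e.g.\ 3SAT or a restricted monotone variant) and produces a tuple $\langle n, G, k, a \rangle$ with a distinguished set of \emph{variable} nodes whose colors in any preimage encode a truth assignment, together with \emph{clause} and \emph{gadget} nodes whose thresholds are tuned so that the one-step transition $G_k$ produces the specified pattern $a$ precisely when the assignment satisfies every clause. The strengthening needed for the counting version is that this reduction be \emph{parsimonious}: each preimage of $a$ corresponds to exactly one satisfying assignment of the source formula. I would guarantee this by making all non-variable nodes either non-valid (in the sense of Definition \ref{nonvalid}) or equipped with pendant neighbors that force their preimage color to be uniquely determined by the colors of the variable nodes. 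Under such a reduction, $|G_k^{-1}(a)|$ equals the number of satisfying assignments of the source formula, so \#3SAT reduces to \#reachable-PRED via a polynomial-time parsimonious reduction.

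Two additional technical points must be handled. First, \#reachable-PRED is defined under the promise that $G_k^{-1}(a) \neq \emptyset$. To stay inside the promise class, I would append to every instance a small fixed gadget (for example, an isolated edge whose endpoints have thresholds that admit a single consistent predecessor pattern) that always contributes one guaranteed preimage independent of the SAT instance. Then $|G_k^{-1}(a)|$ equals (number of satisfying assignments) $+\,1$, which is positive, and a single subtraction recovers \#3SAT from the oracle answer. Second, one needs to verify that the variable nodes themselves have independent color freedom --- meaning that both colors are compatible with $a$ in the absence of clause constraints --- which is typically achieved by placing each variable node in a configuration where its color in $a$ is determined by its own threshold crossing only, with no extraneous neighbors biasing the outcome.

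The main obstacle I expect is precisely this parsimony step: ensuring that the gadgets admit no spurious combinatorial freedom in the preimage beyond what is encoded by the variable colors. If the original PRED reduction introduces auxiliary nodes whose preimage color is undetermined, the map from preimages to satisfying assignments is only many-to-one, and \#P-hardness does not follow automatically. The tool for eliminating this freedom is Definition \ref{nonvalid}, which permits nodes whose threshold is $0$ or exceeds their degree and therefore play a forced action regardless of neighbors; combined with pendant structures that pin down the one-step predecessor of each gadget node, this collapses any auxiliary degrees of freedom. Once the gadget designs are verified and the preimage count is matched to the satisfying-assignment count up to the additive constant from the promise-enforcing gadget, the remaining verifications --- polynomial running time of the reduction and correctness of the target configuration $a$ --- are routine.
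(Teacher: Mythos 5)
Your membership argument is fine, and your overall hardness strategy --- a parsimonious reduction from a SAT-counting problem in which variable nodes carry the assignment and all other nodes have their preimage colors forced --- is the same in spirit as the paper's, which reduces from \#monotone-2-SAT by building a graph on variable nodes $v_p$, clause nodes $u_c$, and a connecting node $d$, with all thresholds equal to $1$ and target configuration $a = \black$ everywhere. However, there is a concrete error in your handling of the promise $G_k^{-1}(a)\neq\emptyset$. Preimage counts of a configuration combine \emph{multiplicatively} across parts of the graph that do not interact, not additively: if you append a gadget (isolated or pendant) admitting exactly one consistent predecessor pattern, the total count is multiplied by $1$, so when the source 3-CNF formula is unsatisfiable you still get $G_k^{-1}(a)=\emptyset$ and the constructed instance falls outside the promise class. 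The claimed identity $|G_k^{-1}(a)| = \#\mathrm{sat}+1$ therefore has no justification; obtaining an additive $+1$ would require a genuinely different ``master switch'' construction in which one color of some node forces a unique global preimage while the other color releases the assignment-encoding freedom, and you have not described such a gadget. (Separately, $\G_n$ consists of \emph{connected} graphs, so ``an isolated edge'' is not an admissible component.)

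The paper sidesteps this entirely by choosing the source problem and the target configuration so that the promise holds for free: it reduces from a \emph{monotone} formula, which is always satisfied by the all-true assignment, and picks $a$ to be a fixed point of $G_k$ (all nodes $\black$ with thresholds $1$), so that $a\in G_k^{-1}(a)$ and reachability is automatic. If you switch your source problem to a monotone variant and arrange for $a$ to be a fixed point, your appended gadget becomes unnecessary and the promise issue disappears. The remaining weakness is that the parsimony step --- which you yourself identify as the main obstacle --- is only asserted to be achievable via non-valid nodes and pendant structures, but no gadget is actually exhibited or verified; since the entire hardness claim rests on the preimage count equalling the satisfying-assignment count, this cannot be left as a plan.
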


The results are derived from thresholds in $\K_n$ instead of types of $\Q_n$. However, the results trivially extend to types as follows: Convergence results hold by inclusion; complexity results hold since they still hold if we restrict $(G,k)$ to contain no non-valid node.  We devote Sections 4 and 5 to convergence results, and Section 6 to complexity results.

\section{On Convergence Cycles}

We begin by studying the following problem: given a graph $G$ in $\G_n$ and a threshold distribution $k$ in $\K_n$, how many action configurations does a cycle in $CYCLE_n(G,k)$ contain? Ultimately, we show that for any graph and any threshold distribution, the cycles in $CYCLE_n(G,k)$ consist of at most two action configurations. We begin the analysis by considering cycle graphs\footnote{A cycle graph is a 2-regular connected graph, we use both terms interchangeably.} , then proceed to trees. We refer the reader to \cite{MyThesis} for an analysis on complete graphs. Each of those special cases is treated by exploiting its graphic properties. Obviously, most of those properties are not shared among all graphs, and some cannot even be generalized to general graphs. Nevertheless, we explicitly provide results over those toy examples to build up the intuition of the reader and construct the combinatorial framework slowly as we go along. After trees, we consider general graphs. We then generalize the results to the extension model.

\subsection{Cycle Graphs}\label{CRing}

Let us consider a pair $(G,k)$ in $\G_n{\times}\K_n$ where $G$ is a path, i.e. every agent is connected to at most two other agents and no cycles in the graphs are allowed. Our first intent is to characterize the length of the limiting cycles in that case. Suppose $k$ is picked in such a way that some players are non-valid\footnote{See definition \ref{nonvalid}}, then we know that those players can only play one action after some finite time step. With respect to the analysis concerned, we may remove those players, update the thresholds of the neighboring players accordingly and end up with a collection of disconnected paths. Restricting the analysis to one of the paths leads us back to the initial case. Therefore, we will assume that every node in the graph is valid: this implies that $k$ is equal to $1$ for the nodes having degree $1$ and $k$ takes values in $\{1,2\}$ for the nodes having degree equal to 2. Moreover, to take care of the boundary case, we will connect the 1-degree nodes together, and so forming a ring of agents. The graph in consideration is then the 2-regular connected graph. We then relax $k$ to take values in $\{1,2\}$ over $\I_n$.\\
 
Given that the thresholds of the nodes are either $1$ or $2$, it is useful to explicitly state the decision rules as follows. Let $a$ be some action configuration in $\A_n$, if node $i$ has a threshold $k_i$ equal to $1$, then node $i$ is $\black$ in $G_ka$ if and only if either one of its neighbors is $\black$ in $a$. Similarly, if node $i$ has a threshold $k_i$ equal to $2$, then node $i$ is $\black$ in $G_ka$ if and only if both of its neighbors are $\black$ in $a$.

Let us impose a strict ordering on $\{\white,\black\}$ such that $\min\{\white,\black\} = \white$. This translates to $\white \wedge \black = \white$ for notational convenience.\footnote{ \label{veewedge} For $x_1$ and $x_2$ in a strictly ordered set, we denote $\max\{x_1,x_2\}$ and $\min\{x_1,x_2\}$ by $x_1 \vee x_2$ and $x_1 \wedge x_2$ respectively.} For elements $a$, $b$ and $c$ in $\{\white,\black\}$, the following identities can be checked:
\begin{align}
a \wedge a = a																					&\qquad a \vee a = a\nonumber\\
a \wedge \black = a																			&\qquad a \vee \white = a\nonumber\\
a \wedge b = b \wedge a 																&\qquad  a \vee b = b \vee a\nonumber\\
a \wedge (b \vee c) = ( a \wedge b) \vee (a \wedge c) 	&\qquad a \vee (b \wedge c) = ( a \vee b) \wedge (a \vee c)\nonumber\\
a \wedge (b \wedge c) = (a \wedge b) \wedge c 					&\qquad a \vee (b \vee c) = (a \vee b) \vee c.\nonumber
\end{align}

Given a pair $(G,k)$ in $\G_n {\times} \K_n$ where $G$ is 2-regular and $k$ takes values in $\{1,2\}$, we define a map $\tau$ from $\I_n$ into $\{\vee, \wedge\}$ such that $\tau_i = \vee$ if and only if $k_i = 1$.
Similarly, let us define two maps $s$ and $p$ from $\I_n$ into $\I_n$ (we refer to them as successor and predecessor) such that 
$i$ and $s_i$ are neighbors, $i$ and $p_i$ are neighbors and $(sp)_i = (ps)_i = i$. The dynamics are then represented as follows:
\begin{equation}
 (G_k a)_i =   a_{p_i} \tau_i a_{s_i}. \nonumber
\end{equation}

We give a quick example to illustrate. Let us consider a 2-regular connected graph $G$ over the set $\I_n$ for $n \geq 5$ and suppose we are given a threshold distribution $k$ in $\K_n$ taking values in $\{1,2\}$. Let us choose a node $i$ from $\I_n$. The nodes $s_i$ and $p_i$ are then (distinct) neighbors of node $i$. Let $a$ be an action configuration in $\A_n$ and suppose $k_i = 1$,  then $(G_k a)_i = \black$ if and only if either $a_{s_i} = \black$ or $a_{p_i} = \black$ i.e. at least one neighbor is $\black$. We can rewrite the previous statement as:
\begin{equation}
(G_k a)_i = \max\{a_{s_i},a_{p_i}\} = a_{s_i} \vee a_{p_i} = a_{s_i} \tau_i a_{p_i}. \nonumber
\end{equation}
This follows from the fact that we imposed a strict ordering on $\{\white,\black\}$ such that $\min\{\white,\black\} = \white$, and defined $\tau_i = \vee$ if and only if $k_i = 1$.
Furthermore, $s_i$ has both $i$ and $(ss)_i$ as neighbors. Suppose $k_{s_i} = 2$, then $(G_k a)_{s_i} = \black$ if and only if both $a_{i} = \black$ and $a_{(ss)_i} = \black$ i.e. at least two neighbor are $\black$. Similarly, we can rewrite the previous statement as:
\begin{equation}
(G_k a)_{s_i} = \min\{a_{i},a_{(ss)_i}\} = a_{i} \wedge a_{(ss)_i} = a_{i} \tau_{s_i} a_{(ss)_i}. \nonumber
\end{equation}
To conclude the example, we further suppose that $k_{p_i} = 2$. The node $p_i$ has both $i$ and $(pp)_i$ as neighbors, and similarly to the rule of node $s_i$, we have $(G_k a)_{p_i} = a_{i} \wedge a_{(pp)_i}$. We may now express $(G^2_k a)_i$ in terms of actions in $a$ as follows:
\begin{equation}
 (G^2_k a)_i = (G_k (G_ka))_i = (G_ka)_{s_i} \vee (G_ka)_{p_i} = (a_{i} \wedge a_{(ss)_i}) \vee (a_{i} \wedge a_{(pp)_i}). \nonumber
\end{equation}
Using distributivity (as defined in the identities earlier) we get:
\begin{equation}
	(a_{i} \wedge a_{(ss)_i}) \vee (a_{i} \wedge a_{(pp)_i}) = a_{i} \wedge (a_{(ss)_i} \vee  a_{(pp)_i}). \nonumber
\end{equation}

We now generalize the last part of the example. We consider a pair $(G,k)$ in $\G_n {\times} \K_n$ where $G$ is a cycle graph and $k$ takes values in $\{1,2\}$. In this setting, $(G_k^2 a)_i$ would only depend on the actions of nodes $(ss)_i$, $(pp)_i$ and $i$ itself in $a$. In particular, there are a total of eight possible decision rules, we summarize them in the following table:\\
\begin{center}
	\begin{tabular}{l ccc c  l } 
	\qquad & $\tau_{p_i}$ & $\tau_{i}$ & $\tau_{s_i}$ & $\qquad (G_k a)_i \qquad$ & $\qquad (G_k^2 a)_i \quad $\\[0.5cm]
	{c.1} \qquad& $\vee$ & $\vee$ & $\vee$ & $a_{p_i} \vee a_{s_i}$ &  $a_i \vee (a_{(ss)_i} \vee a_{(pp)_i})$\\
	{c.2} \qquad& $\vee$ & $\vee$ & $\wedge$ & $a_{p_i} \vee a_{s_i}$ & $a_i  \vee a_{(pp)_i}$\\
	{c.3} \qquad& $\vee$ &  $\wedge$ & $\vee$ & $a_{p_i} \wedge a_{s_i}$ & $a_i \vee (a_{(ss)_i} \wedge a_{(pp)_i})$\\
	{c.4} \qquad& $\vee$ & $\wedge$ & $\wedge$ & $a_{p_i} \wedge a_{s_i}$ & $a_i \wedge a_{(ss)_i}$\\
	{c.5} \qquad& $\wedge$ & $\vee$ & $\vee$ & $a_{p_i} \vee a_{s_i}$ & $a_i \vee a_{(ss)_i}$\\
	{c.6} \qquad& $\wedge$ & $\vee$ & $\wedge$ & $a_{p_i} \vee a_{s_i}$ & $a_i \wedge (a_{(ss)_i} \vee a_{(pp)_i})$\\
	{c.7} \qquad& $\wedge$ & $\wedge$ & $\vee$ & $a_{p_i} \wedge a_{s_i}$ & $a_i \wedge a_{(pp)_i}$\\
	{c.8} \qquad& $\wedge$ & $\wedge$ & $\wedge$ & $a_{p_i} \wedge a_{s_i}$ &  $a_i \wedge (a_{(ss)_i} \wedge a_{(pp)_i})$\\[0.5cm]
	\end{tabular}
\end{center}

We proceed by defining strong assignments, then state a first proposition on the dynamics induced by the above table. 

\begin{MyDef} \label{StrongAssignment}
For any positive integer $n$ greater than 1, any graph $G$ in $\G_n$, every threshold distribution $k$ in $\K_n$, an action $c$ is called a strong action (or strong assignment) for player $i$ in $\I_n$ if once played by player $i$ at time step $T$, it is played by player $i$ at time step $T+2m$ for all positive integers $m$, regardless of what is played by the neighbors of node $i$.
\end{MyDef}

Given the definition, every node in the setting concerned in this subsection has a strong assignment.

\begin{MyPro} \label{RingSA}
For any integer $n$ greater than 2, any 2-regular graph $G$ in $\G_n$, any threshold distribution $k$ in $\K_n$ taking values in $\{1,2\}$, every player $i$ in $\I_n$ has a strong assignment. 
\end{MyPro}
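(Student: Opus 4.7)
The plan is to observe that $(G_k^2 a)_i$ admits, for every combination of thresholds in a $2$-neighborhood of $i$, a simplified form in which $a_i$ sits as the outermost argument of either a $\vee$ or a $\wedge$. Once this is established the strong assignment is immediate: when the outer operator is $\vee$, setting $a_i = \black$ forces $(G_k^2 a)_i = \black$ (since $\black \vee x = \black$), and when the outer operator is $\wedge$, setting $a_i = \white$ forces $(G_k^2 a)_i = \white$ (since $\white \wedge x = \white$). In either case this holds regardless of the neighbors' actions, and the property for all even offsets $T+2m$ then follows by a trivial induction on $m$ against the definition of strong assignment.

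Concretely, composing the one-step rule once at $p_i$ and $s_i$ gives
\begin{equation*}
(G_k^2 a)_i \;=\; (a_i\,\tau_{p_i}\,a_{(pp)_i})\;\tau_i\;(a_i\,\tau_{s_i}\,a_{(ss)_i}),
\end{equation*}
and I would simplify each of the eight patterns of $(\tau_{p_i}, \tau_i, \tau_{s_i})$ using the lattice identities already listed: idempotence, the two distributivity laws, and the absorption laws $a \vee (a\wedge b) = a$ and $a \wedge (a \vee b) = a$. These simplifications yield exactly the right-most column of the table preceding the proposition, and a direct inspection shows that the outer operator combining $a_i$ with the rest is $\vee$ in rows c.1, c.2, c.3, c.5 and $\wedge$ in rows c.4, c.6, c.7, c.8. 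Nodes whose pattern falls in the first group therefore admit $\black$ as a strong assignment, and nodes in the second group admit $\white$.

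The proof is essentially a case enumeration; the only subtle steps are the four mixed cases (c.2, c.4, c.5, c.7), where absorption is what collapses an otherwise hybrid expression and keeps $a_i$ at the outer level. A boundary point to note is $n=3$, where $G$ is a triangle and the identifications $(ss)_i = p_i$ and $(pp)_i = s_i$ force some of the variables to coincide; in that situation the same formula above still applies and idempotence absorbs the repetitions, so no separate argument is needed. The main obstacle is therefore bookkeeping rather than insight: one must check, in each simplification, that $a_i$ is not trapped inside a subterm whose outer connective would spoil the monotonicity that produces the strong assignment.
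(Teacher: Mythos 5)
Your proposal is correct and follows essentially the same route as the paper: both arguments reduce to inspecting the eight rows of the table for $(G_k^2 a)_i$ and observing that $a_i$ appears as the outer argument of a $\vee$ (rows c.1, c.2, c.3, c.5, giving $\black$ as strong assignment) or of a $\wedge$ (rows c.4, c.6, c.7, c.8, giving $\white$), i.e.\ that $a_i = \black$ is either sufficient or necessary for $(G_k^2 a)_i = \black$. Your additional remarks --- re-deriving the table via absorption and noting that the coincidences of $(ss)_i$, $(pp)_i$ with other nodes for small $n$ are harmless --- are correct but do not change the substance of the argument.
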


\begin{proof}
Let $i$ be a player in $\I_n$, to prove the result it would be enough to investigate the update rule over two iterated applications of $G_k$ of player $i$, i.e. the value $(G_k^2 a)_i$ takes. Referring back to the previous table, notice that in the case of each threshold distribution over $\{p_i,i,s_i\}$, for $(G_k^2 a)_i$ to be equal to $\black$, the condition $a_i = \black$ is either sufficient or necessary. In the case where $a_i = \black$ is sufficient, if $a_i=\black$ then $(G_k^{2m} a)_i = \black$ for all positive integers $m$, and so $\black$ is a strong assignment for player $i$. Likewise, in the case where $a_i = \black$ is necessary, if $a_i=\white$ then $(G_k^{2m} a)_i = \white$ for all positive integers $m$, and so $\white$ is a strong assignment for player $i$.   
\end{proof}

In particular, c.1, c.2, c.3 and c.5 correspond to $\black$ being a strong assignment for player $i$, and c.4, c.6, c.7 and c.8 correspond to $\white$ being a strong assignment for player $i$. We now characterize the length of the convergence cycles.

\begin{MyPro} \label{RingConv}
For any integer $n$ greater than 2, any 2-regular graph $G$ in $\G_n$ and any threshold distribution $k$ in $\K_n$ taking values in $\{1,2\}$, each cycle $C$ in $CYCLE_n(G,k)$ has cardinality less than or equal to $2$.
\end{MyPro}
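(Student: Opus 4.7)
The plan is to leverage Proposition \ref{RingSA}, which provides every player with a strong assignment. The proof hinges on the fact that if a strong assignment is ever played, it persists on the same parity of time steps. Combined with the eventual periodicity forced by being inside a convergence cycle, this will pin down the period of each node's action sequence to $1$, yielding $G_k^2 a = a$ for every $a$ in the cycle.

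Concretely, I would fix an arbitrary cycle $C$ in $CYCLE_n(G,k)$ and an arbitrary $a \in C$. By Definition \ref{CycleDef} (and the finiteness/determinism argument from the excerpt) there exists some positive integer $L$ with $G_k^L a = a$. To prove $|C| \leq 2$ it suffices to show $G_k^2 a = a$, which in turn reduces to showing $(G_k^2 a)_i = a_i$ for each player $i$. Fix such an $i$ and let $c_i \in \{\white,\black\}$ be its strong assignment guaranteed by Proposition \ref{RingSA}. Consider the sequence
\begin{equation}
s_m := (G_k^{2m} a)_i, \qquad m = 0, 1, 2, \ldots. \nonumber
\end{equation}
The strong-assignment property applied at time $T = 2m$ gives the key monotonicity: if $s_m = c_i$, then $s_{m'} = c_i$ for every $m' \geq m$. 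Meanwhile, since $G_k^L a = a$, iterating gives $G_k^{2L} a = a$, hence $s_L = s_0$.

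Now the argument concludes in two easy cases. If $s_0 = c_i$, monotonicity immediately forces $s_m = c_i$ for all $m \geq 0$, so in particular $s_1 = s_0$. If instead $s_0 \neq c_i$, then no $s_m$ for $0 \leq m \leq L$ can equal $c_i$: were some $s_{m^*} = c_i$, monotonicity would propagate this to $s_L = c_i \neq s_0$, contradicting $s_L = s_0$. Hence $s_m \neq c_i$ for all these $m$, and since $\{\white,\black\}$ has only two elements, $s_m = s_0$ throughout; again $s_1 = s_0$. In both cases $(G_k^2 a)_i = s_1 = s_0 = a_i$, and because $i$ was arbitrary we obtain $G_k^2 a = a$. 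Since this holds for every $a \in C$, the cycle $C$ contains at most two distinct configurations.

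The main conceptual step, and the only place where anything nontrivial is invoked, is Proposition \ref{RingSA}: the structural table of eight cases shows that $a_i = \black$ is either sufficient or necessary for $(G_k^2 a)_i = \black$, and this dichotomy is exactly what yields a strong assignment. The rest of the proof is the neat packaging above: \emph{monotone plus periodic equals constant}, executed on the parity-2 subsequence at each node. I do not anticipate serious obstacles; the only subtlety is being careful that the strong assignment is indexed by the player (not shared across players) and that the monotonicity is phrased on the subsequence of even iterates rather than on the full orbit.
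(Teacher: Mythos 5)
Your proof is correct and follows essentially the same route as the paper's: both hinge on Proposition \ref{RingSA} and the persistence of a strong assignment along the even-iterate subsequence of each player. The only cosmetic difference is that the paper argues the even subsequence is eventually constant from an arbitrary start, whereas you combine the same monotonicity with periodicity inside the cycle to get $G_k^2a=a$ directly.
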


\begin{proof}
Let $a$ be an action configuration in $\A_n$ and suppose we construct the sequence $a,G_ka, G^2_ka, G^3_ka,\cdots$. For notational convenience, let us denote $G^T_ka$ by $a^T$. We consider the subsequence $a^0,a^2,a^4,\cdots$, choose a player $i$ in $\I_n$ and then observe the evolution of the action played by player $i$ over two time step, i.e. we consider the sequence $a^0_i,a^2_i,a^4_i,\cdots$. Without any loss of generality, let us assume that $\black$ is the strong assignment. Either $\black$ appears in the sequence or $\black$ does not appear in the sequence. If it does appear, then there exists a positive integer $M$ such that $a^{2m}_i = \black$ for all $m \geq M$. If it does not appear, then $a^{2m}_i = \white$ for all non-negative integers $m$. Either way, for every player $i$, there exists a non-negative integer $T_i$ and an action $c$ in $\{\white,\black\}$ such that $a^{2m}_i = c$ for all $m\geq T_i$. If we set $T = \max_i T_i$, then there exists an action profile $\hat{a}$ such that $a^{2m} = \hat{a}$ for all $m\geq T$. We then get that $\{\hat{a},G_k\hat{a}\}$ is the cycle reached from $a$. It follows that if we let $C$ be any cycle in $CYCLE_n(G,k)$ and we let $a$ be an action configuration in $C$, then necessarily $C = \{a,G_ka\}$.  
\end{proof}

We transition to investigate the behavior when the graph structure is a tree.

\subsection{Trees}

We consider in this section dynamics on trees, namely acyclic connected graphs. In this section, the letter $T$ shall always be used to denote trees, and never time as was done sometimes in previous sections. Given a tree $T$ in $\G_n$, if we label a node $r$ in $\I_n$ as \emph{root}, the children of node $i$ (with respect to the root $r$) are all the neighbors of $i$, not lying on the path from the root~$r$ to node~$i$.
Finally, a leaf in the tree $T$ is a node having degree $1$. Fortunately, strong assignments appear in the dynamics on trees. We begin by stating the following proposition:

\begin{MyPro} \label{TreeSA}
For any integer $n$ greater than 1, any tree $T$ in $\G_n$ and any threshold distribution $k$ in $\K_n$ such that all nodes are valid with respect to $(G,k)$, pick a root $r$ for the tree, then for every node $i$ where all its children (with respect to $r$) are leaves, $i$ has at least one strong assignment. In particular, if $k_i>1$, then $\white$ is a strong assignment and if $k_i<d_i$ then $\black$ is a strong assignment.
\end{MyPro}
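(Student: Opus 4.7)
The plan is to exploit the fact that each leaf acts as a one-step echo of its unique neighbor. Since every node is valid and each leaf $\ell$ has degree $1$, we must have $k_\ell = 1$, and hence $\ell$ plays $\black$ at time $T+1$ if and only if its unique neighbor plays $\black$ at time $T$. Consequently, if $i$ is a node all of whose children (with respect to $r$) are leaves and $i$ plays an action $c$ at time $T$, then at time $T+1$ every leaf child of $i$ also plays $c$.

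First I would fix such a time $T$ and action $c$, and then count the $\black$-playing neighbors of $i$ at time $T+1$. The only neighbor of $i$ not pinned down by the echo observation is the parent of $i$ (which is present exactly when $i \neq r$). Hence at time $T+1$ the number of $\black$-playing neighbors of $i$ is at most $1$ when $c = \white$ (only the parent could possibly contribute) and at least $d_i - 1$ when $c = \black$ (every leaf child contributes). Applying Proposition \ref{ruleInt} at time $T+2$: if $c = \white$ and $k_i > 1$, then strictly fewer than $k_i$ neighbors are $\black$, forcing $i$ to play $\white$; if $c = \black$ and $k_i < d_i$, then $d_i - 1 \geq k_i$ neighbors are $\black$, forcing $i$ to play $\black$. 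In either case $i$ replays $c$ at $T+2$.

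A routine induction on $m$ then yields $a_{i,T+2m} = c$ for every positive integer $m$, certifying $c$ as a strong assignment under the stated threshold condition. The overall claim that some strong assignment exists then follows from the observation that validity forces $1 \leq k_i \leq d_i$, so at least one of $k_i > 1$ or $k_i < d_i$ must hold as long as $d_i \geq 2$ (and the corner case $d_i=1$, which can only occur when $i$ is the root with a single leaf child, is trivial). The argument is essentially mechanical once the amplifying role of the leaves is identified, so I do not anticipate a real obstacle here; the harder work will come later when we try to propagate strong assignments from such boundary nodes inward into the tree, where the parent's contribution is no longer a harmless $\pm 1$ slack.
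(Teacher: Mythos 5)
Your proposal is correct and follows essentially the same route as the paper: the key observation in both is that each leaf (having threshold $1$ by validity) echoes $i$'s action one step later, so at time $T+1$ at least $d_i-1$ neighbors of $i$ agree with $a_{i,T}$, and only the parent is unaccounted for. The paper merely packages the resulting two-step update algebraically as $(G_k^2 a)_i = a_i \wedge \phi(a)$ (when $k_i>1$) or $a_i \vee \phi(a)$ (when $k_i<d_i$), whereas you carry out the explicit count and induction; the content is identical.
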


\begin{proof}
We know that for each node $i$ where all its children (with respect to $r$) are leaves, node $i$ has at least $d_i - 1$ leaves connected to it. Since all nodes are considered to be valid, then each leaf has a threshold of $1$. Suppose $k_i > 1$, it then follows that $(G_k^2 a)_i = a_i \wedge \phi(a)$ for some map $\phi$ from $\A_n$ into $\{\white,\black\}$. Suppose $k_i < d_i$, it then follows that $(G_k^2 a)_i = a_i \vee \phi(a)$ for some map $\phi$ from $\A_n$ into $\{\white,\black\}$.
\end{proof}

In this case, note that if $1<k_i<d_i$, then $i$ has both $\black$ and $\white$ as strong assignment. This fact implies that $i$ will never change its color over two time steps. We note that the proposition considers only the case where all nodes in concern are valid.\\

Aside being acyclic, trees enjoy bipartiteness: a crucial property that will be heavily relied on when considering general graphs. We begin to convey how the bipartite property of graphs may be exploited. The definitions and results to follow apply to general bipartite graphs.

\begin{MyDef}
Let $P$ be a subset of $\I_n$, for $(G,k)$ in $\G_n{\times}\K_n$, we define $G_k|_P$ to be the restriction of $G_k$ to act on the actions of the players in $P$. Formally, for $a$ in $\A_n$,
\begin{equation}
(G_k|_Pa)_i =
\left\{
	\begin{array}{cl}
		(G_ka)_i & \text{if } i\in P \\
		a_i & \text{if } i\notin P
	\end{array} \nonumber
\right.
\end{equation}
\end{MyDef}
We note that we are not restricting the domain of the function, $G_k|_P$ is indeed a map from $\A_n$ into $\A_n$. To proceed, it is known that any bipartite graph has a 2-(node)-coloring. We avoid the wording \emph{coloring} to avoid confusion. Instead, we define \emph{2-Partitions}. Let $\G^b_n$ be the set of all connected undirected bipartite graphs defined over the vertex set $\I_n$.

\begin{MyDef}
Given a graph $G(\I_n,E^b)$ in $\G^b_n$, a 2-Partition of $\I_n$ with respect to $G$, is a pair $(P_o,P_e)$ of disjoint subsets of $\I_n$ such that $P_o \cup P_e = \I_n$ and there does not exist an $(i,j)$ in $P^2_o \cup P^2_e$ such that $ij \in E^b$.
\end{MyDef}

We eventually restrict $G_k$ to act on the nodes in $P_o$ and $P_e$ separately. For convention, $o$ would refer to \emph{odd} and $e$ to \emph{even}. The dynamics will be presented in such a way, that nodes in $P_o$ (resp. $P_e$) will be allowed to change actions only at odd (resp. even) time steps. Let us first clearly define a partition of a set.

\begin{MyDef}
Let $X$ be a set, a partition $P_1,\cdots,P_m$ of $X$ is a finite collection of disjoint non-empty subsets of $X$ whose union is $X$.
\end{MyDef}
The definition to follow serves mainly as a notational clarification, its technical value is rather intuitive.

\begin{MyDef}
 Consider a function $f$ mapping $\I_n$ into some set. Let $P_1,\cdots,P_m$ be a partition of $\I_n$, and let $f{\upharpoonright}P_l$ be the restriction of $f$ to have domain $P_l$. Let $\pi$ be any permutation on $\{1,\cdots,m\}$, we consider $f$ to be equal to $(f{\upharpoonright}P_{\pi(1)},\cdots, f{\upharpoonright}P_{\pi(m)})$. 
\end{MyDef}

Given a 2-Partition, we may `decouple' the dynamics and the following identities would emerge:

\begin{MyPro} \label{Identities}
Given a pair $(G,k)$ in $\G^b_n {\times} \K_n$, if we consider a 2-Partition $(P_o,P_e)$ of $\I_n$ with respect to $G$, then:
\begin{enumerate}
 \item $G_ka = ((G_k|_{P_o}a){\upharpoonright}P_o, (G_k|_{P_e}a){\upharpoonright}P_e )$
 \item $G_k|_{P_e} G_k|_{P_o} a = (G_ka{\upharpoonright}P_o, G^2_ka{\upharpoonright}P_e )$
 \item $G^2_ka = ( (G_k|_{P_e} G_k|_{P_o} a){\upharpoonright}P_e, (G_k|_{P_o} G_k|_{P_e} a){\upharpoonright}P_o)$.
\end{enumerate}
\end{MyPro}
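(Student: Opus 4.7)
The plan is to prove the three identities by carefully unfolding the definitions, leveraging the single key structural property of the 2-partition: every edge of $G$ connects $P_o$ to $P_e$, so for any $i \in P_o$, $\N_G(i) \subset P_e$, and symmetrically. As a consequence, the value $(G_k a)_i$ for $i \in P_o$ depends only on the values of $a$ at nodes of $P_e$, and vice versa. The argument is essentially a bookkeeping exercise in three stages, each reducing to the previous.

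For part (1), I would simply unfold the definitions of $G_k|_{P_o}$ and $G_k|_{P_e}$ componentwise. For $i \in P_o$, $(G_k|_{P_o}a)_i = (G_k a)_i$ by the defining rule of $G_k|_{P_o}$, and for $j \in P_e$, $(G_k|_{P_e}a)_j = (G_k a)_j$. Since $P_o$ and $P_e$ partition $\I_n$, concatenating the two restrictions reconstructs $G_k a$ over the whole vertex set.

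For part (2), I would introduce $b = G_k|_{P_o} a$, which satisfies $b{\upharpoonright}P_o = (G_ka){\upharpoonright}P_o$ and $b{\upharpoonright}P_e = a{\upharpoonright}P_e$. Now apply $G_k|_{P_e}$ to $b$. On $P_o$, $G_k|_{P_e}$ acts as the identity, so the $P_o$ coordinates remain $(G_k a){\upharpoonright}P_o$, giving the first factor. On $P_e$, for any $j \in P_e$, $(G_k|_{P_e}b)_j = (G_k b)_j$, which by the threshold rule of Proposition \ref{ruleInt} depends only on $b_i$ for $i \in \N_j$; by bipartiteness, $\N_j \subset P_o$, and on $P_o$ we already have $b = G_k a$. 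Hence $(G_k|_{P_e}b)_j = (G_k(G_k a))_j = (G^2_k a)_j$, which establishes part (2).

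For part (3), I would combine part (1) applied to $G_k a$ in place of $a$ with part (2) and its symmetric counterpart. Part (1) yields $G^2_k a = ((G_k|_{P_o}G_k a){\upharpoonright}P_o, (G_k|_{P_e}G_k a){\upharpoonright}P_e)$, so it suffices to show that $(G^2_ka){\upharpoonright}P_e$ coincides with $(G_k|_{P_e}G_k|_{P_o}a){\upharpoonright}P_e$ and that $(G^2_ka){\upharpoonright}P_o$ coincides with $(G_k|_{P_o}G_k|_{P_e}a){\upharpoonright}P_o$. The first equality is exactly part (2); the second is proved by repeating the argument of part (2) with the roles of $P_o$ and $P_e$ swapped, which is legitimate because $(P_e,P_o)$ is also a 2-partition of $\I_n$ with respect to $G$. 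There is no substantive obstacle here: the whole proof is a matter of tracking which coordinates are updated at each stage and invoking the bipartite structure to ensure that sequential and parallel updates agree on each half of the partition.
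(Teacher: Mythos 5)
Your proposal is correct and follows essentially the same route as the paper's proof: unfold the definitions componentwise and use the bipartite property that the update of a node in one part depends only on the actions in the other part, so that sequential and parallel updates agree on each half. The only cosmetic difference is that you derive part (3) by invoking part (2) and its mirror image rather than writing out the second chain of equalities directly, which is an equally valid bookkeeping choice.
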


\begin{proof}
The fact that $G_ka = ((G_k|_{P_o}a){\upharpoonright}P_o, (G_k|_{P_e}a){\upharpoonright}P_e )$ follows from the definition of $G_k$.
We have $(G_k|_{P_o}a){\upharpoonright}P_o = (G_ka) {\upharpoonright}P_o$, so $(G_k|_{P_e} G_k|_{P_o} a) {\upharpoonright}P_o = (G_ka) {\upharpoonright}P_o$, and on the other hand since $G_k|_{P_e}$ modifies the action of the players in $P_e$ based only on the actions in $P_o$, we get: 
\begin{align}
(G_k|_{P_e} G_k|_{P_o} a) {\upharpoonright}P_e &= G_k|_{P_e} ((G_ka) {\upharpoonright}P_o,a {\upharpoonright}P_e){\upharpoonright}P_e\nonumber\\
											 &= G_k|_{P_e}((G_ka) {\upharpoonright}P_o, (G_ka) {\upharpoonright}P_e){\upharpoonright}P_e\nonumber\\
											 &= G_k|_{P_e}G_k a {\upharpoonright}P_e\nonumber\\
											 &= G^2_ka{\upharpoonright}P_e.\nonumber
\end{align}
As for the last statement:
\begin{align}
	G^2_ka{\upharpoonright}P_o &= G_k(G_ka{\upharpoonright}P_e, G_ka {\upharpoonright}P_o){\upharpoonright}P_o\nonumber\\
							&= G_k|_{P_o} (G_ka{\upharpoonright}P_e, G_ka {\upharpoonright}P_o){\upharpoonright}P_o\nonumber\\
						    &= G_k|_{P_o} (G_k|_{P_e}a{\upharpoonright}P_e, G_k|_{P_e}a {\upharpoonright}P_o){\upharpoonright}P_o\nonumber\\
						    &= (G_k|_{P_o}G_k|_{P_e}a){\upharpoonright}P_o.\nonumber
\end{align}
Similarly, we get $G^2_ka{\upharpoonright}P_e = (G_k|_{P_e}G_k|_{P_o}a){\upharpoonright}P_o$.
\end{proof}

This fact allows us to say something about the sequence $a, G_ka, G^2_k a, \cdots$ for all $a$ by observing  $b, G_k|_{P_e} G_k|_{P_o} b , (G_k|_{P_e} G_k|_{P_o})^2 b, \cdots$ for all $b$. It allows us to observe the process \emph{diagonally}, in a zig-zag fashion. We shall use this fact, and we state the following lemma to formalize the idea.

\begin{MyPro} \label{TreeSplit}
Given a pair $(G,k)$ in $G^b_n {\times} \K_n$, for every cycle $C$ in $CYCLE_n(G,k)$, if for some action configuration $a$ in $C$, for any 2-Partition $(P_o,P_e)$ of $\I_n$ with respect to $G$ we have $(G_k|_{P_e} G_k|_{P_o} a){\upharpoonright}P_e = a{\upharpoonright}P_e$ then $C$ has a cardinality of at most $2$.
\end{MyPro}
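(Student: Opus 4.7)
The plan is to exploit the symmetry built into the hypothesis. In a connected bipartite graph the 2-Partition is unique only up to interchanging the two parts, so the quantifier ``for any 2-Partition $(P_o,P_e)$'' forces the stated equality to hold for both labelings $(P_o,P_e)$ and $(P_e,P_o)$. This is the engine that will convert the single-part equality in the hypothesis into a full equality $G_k^2 a = a$ on all of $\I_n$.

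Executing the plan: I would first apply the hypothesis to the labeling $(P_o,P_e)$. Item~2 of Proposition~\ref{Identities} says $G_k|_{P_e} G_k|_{P_o} a = (G_k a{\upharpoonright}P_o,\, G_k^2 a{\upharpoonright}P_e)$, so restricting the left-hand side of the hypothesis to $P_e$ collapses it to $G_k^2 a{\upharpoonright}P_e = a{\upharpoonright}P_e$. Re-applying the hypothesis to the swapped labeling $(P_e,P_o)$ and invoking item~2 of Proposition~\ref{Identities} again with $o$ and $e$ interchanged yields the symmetric equality $G_k^2 a{\upharpoonright}P_o = a{\upharpoonright}P_o$. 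Together these two give $G_k^2 a = a$ on the whole vertex set.

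From $G_k^2 a = a$ the forward orbit $\{G_k^t a : t \geq 0\}$ is contained in $\{a, G_k a\}$. Conversely, $a$ and $G_k a$ each reach the other under $G_k$ (the non-trivial direction is $G_k a \, \R_{G_k}\, G_k^2 a = a$), so by Definition~\ref{CycleDef} the cycle $C$ equals $\{a, G_k a\}$, giving $|C| \leq 2$. The only point that calls for care is interpretive rather than technical: one must recognize that the universal quantifier over 2-Partitions truly delivers two independent restrictions, one per labeling. With that observation in place, the proof reduces to a direct substitution into Proposition~\ref{Identities}, and there is no real combinatorial obstacle to overcome.
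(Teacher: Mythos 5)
Your proof is correct and follows essentially the same route as the paper: both arguments use the universal quantification over 2-Partitions to obtain the stated equality for the swapped labeling $(P_e,P_o)$ as well, combine the two via Proposition \ref{Identities} to conclude $G_k^2 a = a$, and deduce $C = \{a, G_k a\}$. The only cosmetic difference is that you invoke item 2 of Proposition \ref{Identities} twice (once per labeling) where the paper invokes item 3 once, which packages the same information.
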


Of course, the condition could have also been written as $(G_k|_{P_o} G_k|_{P_e} a){\upharpoonright}P_o = a{\upharpoonright}P_o$, but that would make no difference in the previous statement: we need the property to hold for all 2-Partitions.

\begin{proof}
From Proposition \ref{Identities}, we have $G^2_ka = ( (G_k|_{P_e} G_k|_{P_o} a){\upharpoonright}P_e, (G_k|_{P_o} G_k|_{P_e} a){\upharpoonright}P_o)$. Let $C$ be a cycle in $CYCLE_n(G,k)$, if we have $(G_k|_{P_e} G_k|_{P_o} a){\upharpoonright}P_e = a{\upharpoonright}P_e$ and $(G_k|_{P_o} G_k|_{P_e} a){\upharpoonright}P_o = a{\upharpoonright}P_o$ for some action configuration in $C$, then $G_k^2a = a$ and so $C$ has a cardinality of at most $2$.
\end{proof}

To prove that all cycle have cardinality at most $2$, we will study for all 2-Partitions $(P_o,P_e)$ and every $a$ in $\A_n$, the sequence $a, G_k|_{P_e} G_k|_{P_o} a , (G_k|_{P_e} G_k|_{P_o})^2 a, \cdots$ and show that this sequence is eventually constant, i.e. there exists a finite time step after which all terms in the sequence become equal. If that is the case, then for any $a$ in $\A_n$, the sequence $a, G_k^2a, G_k^4a, \cdots$ is eventually constant and so cycles cannot have a cardinality greater than 2. To this end, we note the following fact:

\begin{MyPro} \label{TreeCond}
Given a pair $(G,k)$ in $\G^b_n{\times}\K_n$ and a 2-Partition $(P_o,P_e)$ of $\I_n$, for every player $i$ in $\I_n$, if $c \in \{\white,\black\}$ is a strong assignment for player $i$, then for every $a$ in $\A_n$, if $(G_k|_{P_e} G_k|_{P_o} a)_i = c$, then $((G_k|_{P_e} G_k|_{P_o})^m a)_i = c$ for all positive integers $m$.
\end{MyPro}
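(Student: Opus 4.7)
My plan is to relate the iterates of $G_k|_{P_e} G_k|_{P_o}$ to the standard iterates of $G_k$, and then invoke the strong assignment property directly. The key observation is a zig-zag extension of Proposition \ref{Identities}: for every positive integer $m$, I claim that
\[
\bigl((G_k|_{P_e} G_k|_{P_o})^m a\bigr){\upharpoonright}P_e = G_k^{2m} a {\upharpoonright}P_e
\quad \text{and} \quad
\bigl((G_k|_{P_e} G_k|_{P_o})^m a\bigr){\upharpoonright}P_o = G_k^{2m-1} a {\upharpoonright}P_o.
\]
I would prove these two identities jointly by induction on $m$. The base case $m=1$ is precisely Proposition \ref{Identities}(2). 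For the inductive step, applying $G_k|_{P_o}$ to the profile at stage $m$ does not alter $P_e$ (so it still carries the time-$2m$ values) and updates $P_o$ using only its neighbors, which by bipartiteness lie entirely in $P_e$; hence on $P_o$ the result matches $G_k^{2m+1}a$. Then applying $G_k|_{P_e}$ fixes $P_o$ at its time-$(2m+1)$ values and updates $P_e$ using its $P_o$-neighbors, producing the time-$(2m+2)$ values on $P_e$. This closes the induction.

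Next, I split into two cases based on the partition to which $i$ belongs. Suppose first that $i \in P_e$; then $(G_k|_{P_e} G_k|_{P_o} a)_i = (G_k^2 a)_i = c$, so player $i$ plays $c$ at time step $2$ of the standard dynamics starting from $a$. Since $c$ is a strong assignment for $i$, Definition \ref{StrongAssignment} yields $(G_k^{2+2m'}a)_i = c$ for every $m' \geq 0$, i.e. $(G_k^{2m}a)_i = c$ for all $m \geq 1$; the first identity then gives $\bigl((G_k|_{P_e} G_k|_{P_o})^m a\bigr)_i = c$. Suppose instead that $i \in P_o$; then $(G_k|_{P_e} G_k|_{P_o} a)_i = (G_k a)_i = c$, so player $i$ plays $c$ at time step $1$. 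The strong assignment property gives $(G_k^{1+2m'}a)_i = c$ for every $m' \geq 0$, i.e. $(G_k^{2m-1}a)_i = c$ for all $m \geq 1$, which by the second identity equals $\bigl((G_k|_{P_e} G_k|_{P_o})^m a\bigr)_i$.

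The only real work in the argument is the inductive bookkeeping for the two zig-zag identities, where one must be careful that the parity of the time steps on the two sides of the partition differs by exactly one and that the bipartite structure is what licenses the computation (neighbors of any node in $P_o$ lie in $P_e$, and conversely). Once those identities are in hand, the strong assignment property does all the remaining work, and the conclusion holds in both cases.
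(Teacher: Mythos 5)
Your proof is correct and follows essentially the same route as the paper's: split on whether $i$ lies in $P_e$ or $P_o$, identify the zig-zag iterates $(G_k|_{P_e}G_k|_{P_o})^m$ with standard iterates of $G_k$ via the bipartite decoupling of Proposition \ref{Identities}, and then invoke the definition of strong assignment. The only cosmetic difference is that for $i\in P_o$ the paper re-bases the dynamics at the configuration $G_k|_{P_o}a$ and writes the iterate as $(G_k^{2m-2}G_k|_{P_o}a)_i$, whereas you track the odd time steps $(G_k^{2m-1}a)_i$ explicitly; both are valid.
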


\begin{proof}
 If $i$ belongs to $P_e$, we have $(G_k|_{P_e} G_k|_{P_o} a)_i = (G^2_k a)_i$, and the result then follows from the definition of strong assignment. If $i$ belongs to $P_o$, then $(G_k|_{P_e} G_k|_{P_o} a)_i = (G_k|_{P_o} a)_i$. So if $(G_k|_{P_e} G_k|_{P_o} a)_i = c$ then $(G_k|_{P_o} a)_i = c$ and since:
 \begin{align}
 	((G_k|_{P_e} G_k|_{P_o})^m a)_i &= (G_k|_{P_e} (G_k|_{P_o}G_k|_{P_e})^{m-1}G_k|_{P_o}  a)_i\nonumber\\
 																	&= ((G_k|_{P_o}G_k|_{P_e})^{m-1}G_k|_{P_o}  a)_i\nonumber\\
 																	&= (G^{2m-2}_k G_k|_{P_o}  a)_i,\nonumber
 \end{align}
the result would follow from the definition of strong assignment.
\end{proof}

We present a rather natural statement that will be used in the inductive argument to prove the theorem. Although the statement is intuitively simple, we give an effort to state it carefully so to take care of any minor technicalities involved. We first give a simplistic informal version of it for general graphs: given a triplet $(G,k,a)$ in $\G_n {\times}\K_n {\times} \A_n$, let us consider the sequence $a,G_ka,G_ka^2,\cdots$. If some player \emph{never} changes action along this sequence, then we may delete the player from the graph and modify the thresholds of the neighboring players in such a way that the effect of the action played by the player is `seen' by the neighbors. We do so by keeping the thresholds of the neighbors unchanged if that action is $\white$, and by decreasing the thresholds by $1$ (to a minimum of zero) if that action is $\black$. In this case, we would obtain a different triplet $(G',k',a')$ where all the players in $V(G')$ play the exact actions as in the sequence $a,G_ka,G_ka^2,\cdots$. We now specialize the previous idea to meet our needs for the bipartite case.

\begin{MyPro}\label{BE}
Let $(G,k)$ be a pair in $\G^b_n{\times}\K_n$, $(P_o,P_e)$ be a 2-Partition of $\I_n$ and $a$ an action configuration in $\A_n$. If there exists a player $i$ and an action $c$, such that $((G_k|_{P_e} G_k|_{P_o})^m a)_i = c$ for all non-negative integers $m$, consider $H$ to be the induced subgraph of $G$ over $\I_n\backslash\{i\}$. Suppose $G'$ is a connected component of $H$ with vertex set $\J$, define $a'$ to be the action configuration $a$ restricted to the players in $\J$, define $P'_o$ and $P'_e$ to be $P_o \cap \J$ and $P_e \cap \J$ respectively, and define $k'$ to be the map from $\J$ into $\Zgeqz$ such that $k' = k$ on $\J\backslash \N_i$, $k' = k$ on $\N_i \cap \J$ if $c = \white$ and $k' = (k-1) \vee 0$ on $\N_i \cap \J$ if $c = \black$. Then:
\begin{equation}
 ((G_k|_{P_e} G_k|_{P_o})^m a)_j = ((G'_{k'}|_{P'_e} G'_{k'}|_{P'_o})^m a')_j,\nonumber
\end{equation}
for all non-negative integers $m$ and all players $j$ in $\J$.
\end{MyPro}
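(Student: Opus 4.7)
My plan is to prove the statement by induction on $m$. The base case $m=0$ is immediate: by construction, $a'$ is the restriction of $a$ to $\J$, so the two expressions agree on $\J$.

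For the inductive step, it is cleanest to split the combined map $G_k|_{P_e} G_k|_{P_o}$ into its two half-steps and show that equality on $\J$ is preserved through each. Write $X := (G_k|_{P_e} G_k|_{P_o})^m a$ and $Y := (G'_{k'}|_{P'_e} G'_{k'}|_{P'_o})^m a'$, and assume inductively that $X_j = Y_j$ for all $j \in \J$. The key geometric observation is that because $G'$ is a connected component of the induced subgraph of $G$ on $\I_n \setminus \{i\}$, every $j \in \J$ satisfies $\N_G(j) = \N_{G'}(j)$ when $j \notin \N_i$, and $\N_G(j) = \N_{G'}(j) \cup \{i\}$ when $j \in \N_i$. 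Fix a node $j \in P'_o$: by the inductive hypothesis the $\black$-count of $X$ on $\N_{G'}(j) \subseteq \J$ equals the $\black$-count of $Y$ on $\N_{G'}(j)$, and since $X_i = c$, the total $\black$-count of $X$ on $\N_G(j)$ differs from the total $\black$-count of $Y$ on $\N_{G'}(j)$ by exactly $1$ when $j \in \N_i$ and $c = \black$, and by $0$ otherwise. This is precisely the offset $k_j - k'_j$ built into the definition of $k'$ (with the $\vee 0$ clause absorbing the corner case $k_j = 0$, where both sides play $\black$ unconditionally). Hence $j$ plays $\black$ under $G_k$ from $X$ iff it plays $\black$ under $G'_{k'}$ from $Y$. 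Since $G_k|_{P_e}$ and $G'_{k'}|_{P'_e}$ leave $P_o$ and $P'_o$ untouched, this equality on $P'_o$ persists through the subsequent half-step.

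For the second half-step I must compare the full-step values on $j \in P'_e$. Bipartiteness of $G$ forces $\N_G(j) \subseteq P_o$, so every neighbor value relevant at this stage sits in $P'_o$ or is $i$ itself, and the previous paragraph has already matched the two sides on $P'_o$. To close the counting argument I need the intermediate value at $i$ (after $G_k|_{P_o}$ but before $G_k|_{P_e}$) to still equal $c$: if $i \in P_e$, then $G_k|_{P_o}$ does not touch $i$, and if $i \in P_o$, then $G_k|_{P_e}$ does not touch $i$, so in either case the full-step hypothesis $((G_k|_{P_e} G_k|_{P_o})^{m+1} a)_i = c$ forces the intermediate value of $i$ to equal $c$ as well. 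The same $\black$-count comparison then goes through verbatim, completing the induction.

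The main technical obstacle is careful bookkeeping: there are four layers of indexing interacting (the time index $m$, whether a node lies in $P_o$ or $P_e$, whether it is adjacent to $i$, and which of the two half-steps one is examining), and at each combination one must re-assemble $\N_G(j)$ out of $\N_{G'}(j)$ and the possibly-adjacent $i$ while keeping track of the threshold offset. Once one has extracted from the hypothesis the fact that $i$ holds value $c$ at every half-step and not merely at every full step, the proof reduces to the observation that the choice $k' = (k - 1) \vee 0$ on $\N_i \cap \J$ (with $k' = k$ when $c = \white$) is exactly the integer shift that compensates for deleting $i$ from each $\black$-count, so the two decision rules become visibly the same.
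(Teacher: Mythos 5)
Your proof is correct and follows essentially the same route as the paper's: both arguments reduce to checking that the threshold shift $k' = (k-1)\vee 0$ on $\N_i\cap\J$ exactly compensates for deleting node $i$, whose action is pinned at $c$, so that the local decision rule of each $j\in\J\cap\N_i$ is unchanged in each half-step. Your version merely makes explicit the induction on $m$ and the verification that $i$'s value is still $c$ at the intermediate half-step (using that $G_k|_{P_e}$ or $G_k|_{P_o}$ leaves $i$ untouched depending on which side of the 2-Partition $i$ lies), a point the paper's proof passes over silently.
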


\begin{proof}
It would be enough to show that the local decision rules of the players in $\J \cap \N_i$ does not change, simply that
\begin{equation}
 (G_k|_{P_e} G_k|_{P_o} a)_j = (G'_{k'}|_{P'_e} G'_{k'}|_{P'_o} a')_j.\nonumber
\end{equation}
Let $j$ be a player in $\J \cap \N_i$, either $j$ belongs to $P_o$ or $j$ belongs to $P_e$. Suppose $j$ belongs to $P_o$, then $(G_k|_{P_e} G_k|_{P_o} a)_j = (G_k|_{P_o} a)_j$, and $(G_k|_{P_o} a)_j = \black$ if and only if at least $k_j$ nodes in $\N_j$ play $\black$ in $a$, or equivalently at least $k'_j$ nodes in $\N_j \backslash \{i\}$ play $\black$ in $a'$, since $k'$ takes into account the action of player $i$. And that is equivalent to $(G'_{k'}|_{P'_e} G'_{k'}|_{P'_o} a')_j = \black$. Similarly, if $j$ belongs to $P_e$ we get that $(G_k|_{P_e} G_k|_{P_o} a)_j = (G'_{k'}|_{P'_e} (G_{k}|_{P_o} a {\upharpoonright} \J))_j = (G'_{k'}|_{P'_e}G'_{k'}|_{P'_o} a)_j$.
\end{proof}

With this settled, we go now to proving our theorem.

\begin{MyPro} \label{TreeConv}
For any positive integer $n$, any tree $T$ in $\G_n$ and any threshold distribution $k$ in $\K_n$, every cycle $C$ in $CYCLE_n(T,k)$ has cardinality less than or equal to $2$.
\end{MyPro}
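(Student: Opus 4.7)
The plan is to proceed by strong induction on $n = |V(T)|$. Fix the unique $2$-partition $(P_o, P_e)$ of $T$ and set $F = G_k|_{P_e} G_k|_{P_o}$. By Proposition \ref{TreeSplit}, $|C| \leq 2$ whenever some $a \in C$ satisfies $(Fa) \upharpoonright P_e = a \upharpoonright P_e$, and iterating identity 3 of Proposition \ref{Identities} shows $(F^m a) \upharpoonright P_e = (G_k^{2m} a) \upharpoonright P_e$ for all $m \geq 0$. It therefore suffices to prove the following stabilization claim: for every initial configuration $a \in \A_n$, the sequence $\{(F^m a) \upharpoonright P_e\}_{m \geq 0}$ is eventually constant. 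Indeed, if $a$ lies in a convergence cycle $C$, then this sequence is also periodic (since $C$ is $G_k^2$-invariant), and periodic plus eventually constant forces it to be constant, whence Proposition \ref{TreeSplit} yields $|C| \leq 2$.

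The base cases $n = 1, 2$ are handled directly; in particular, for $T$ a single edge with both nodes valid, each has threshold $1$, $G_k$ simply swaps the two colors, and $(Fa) \upharpoonright P_e = a \upharpoonright P_e$ holds trivially. For the inductive step with $n \geq 3$, I identify a vertex whose $F$-orbit value becomes eventually constant and then excise it via Proposition \ref{BE}. If $T$ contains a non-valid node $i$, then $(G_k b)_i$ is forced for every $b$, so $(F^m a)_i$ is already constant for all $m \geq 1$. Otherwise every node is valid; root $T$ at an arbitrary vertex and pick a deepest internal node $i$. Since $i$ is internal we have $d_i \geq 2$, and by minimality of depth all children of $i$ are leaves, so Proposition \ref{TreeSA} furnishes a strong assignment $c \in \{\white, \black\}$ for $i$. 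Proposition \ref{TreeCond} then guarantees that once $(F^m a)_i = c$ occurs it persists, and since the $F$-orbit is eventually periodic over the finite space $\A_n$, there exists $M$ after which $(F^m a)_i$ is constant (equal to $c$ throughout the periodic portion if $c$ is ever attained there, otherwise equal to the opposite color).

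In either situation, setting $a^\star = F^M a$ places us in the hypothesis of Proposition \ref{BE} at node $i$. Removing $i$ splits $T$ into a forest of strictly smaller trees, and on each component $T'$ the orbit of $a^\star \upharpoonright V(T')$ under the induced map $F'$ coincides with the restriction of the $F$-orbit of $a^\star$ to $V(T')$. The inductive hypothesis applied to each component yields eventual stabilization of the $F'$-orbit restricted to $P_e \cap V(T')$, and together with the constant value at $i$ this stabilizes $(F^m a^\star) \upharpoonright P_e$. Shifting back by $M$ steps completes the induction.

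The main obstacle is the deployment of Proposition \ref{BE}, which demands the excised vertex be constant from time zero and not merely eventually constant; this is precisely why one must pass from $a$ to $a^\star = F^M a$ before invoking the excision rather than applying it to $a$ directly. A secondary delicacy is that the existence of a near-leaf internal vertex requires $n \geq 3$, forcing a separate direct verification of the edge case $n = 2$ in which Proposition \ref{TreeSA} is vacuous.
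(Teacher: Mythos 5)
Your proof is correct and takes essentially the same route as the paper's: induct on $n$, locate a node that eventually stops changing color along the $G_k|_{P_e}G_k|_{P_o}$-orbit (a non-valid node, or a deepest internal node whose strong assignment from Proposition \ref{TreeSA} persists by Proposition \ref{TreeCond}), shift time so that Proposition \ref{BE} applies, excise it, and recurse on the components. You are if anything more explicit than the paper about the need for the time shift before excision and about converting eventual stabilization into the cycle-length bound; the only unstated step is that Proposition \ref{TreeSplit} requires the condition for both orderings of the bipartition, which follows immediately since stabilization of the $P_e$-restriction of the orbit forces stabilization of the $P_o$-restriction one step later.
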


\begin{proof}
 We prove this statement by induction. The statement can be exhaustively checked for $n=2$, and it is trivial for $n=1$. So let us assume that it holds when the number of players is less than or equal to $n\geq 3$. Let $(T,k,a)$ be a triplet in $\G^b_{n+1}{\times}\K_{n+1}{\times}\A_{n+1}$ where $T$ is a tree and consider a 2-Partition $(P_o,P_e)$ of $\I_{n+1}$. We consider the sequence $a, G_k|_{P_e} G_k|_{P_o} a , (G_k|_{P_e} G_k|_{P_o})^2 a, \cdots$. We claim that at least one node will never change colors along the sequence after some finite time step. If there exists a non-valid node, then clearly that is the case. If all nodes are valid, then (since $n\geq 2$) necessarily there exists a parent and a leaf. By Lemma \ref{TreeSA}, the parent has a strong assignment, and so it stops changing colors along the sequence after some finite time step. Either way, we may then apply Proposition \ref{BE} on that node at the time step when the color becomes constant, removing the node and updating the thresholds accordingly. The result follows since it holds on the obtained connected components by assumption.
\end{proof}

We proceed to extend such results to general graphs. Unfortunately, the notion of strong assignment does not extend to the general case, we construct a different approach.

\subsection{General Graphs}

Given a pair $(G,k)$ in $\G_n{\times}\K_n$ we will have this pair undergo two procedures: a bipartite-expansion and a symmetric-expansion. The bipartite-expansion will produce a pair $(G',k')$ where $G'$ is bipartite, and the symmetric-expansion will produce a pair $(G'',k'')$ where $G''$ has only nodes with odd degrees and $k''$ induces a majority decision rule. We show that by restricting the set of initial action profiles on $(G',k')$ and $(G'',k'')$, these two pairs simulate $(G,k)$ (see Lemmas \ref{OneStepM}, \ref{SymmetricM} and \ref{BipartiteM}). Given a pair $(G,k)$, we are then able to assume without any loss of generality (in proving the bound) that $G$ is bipartite, has only odd-degree nodes and that $k$ induces a majority decision rule. The bipartite property allows us to consider a 2-Partition of the graph and decouple the parallel decision scheme into a sequential process (with respect to the 2-Partition) as was done in the case of Trees. The majority decision rule then allows us to prove (by counting specific edges in the graph) that no node can flip infinitely many times along this sequential process. This argument establishes fixed-point convergence of the sequential process, translating to a upper-bound of $2$ on the length of the convergence cycles for the parallel dynamics.\\

For technical and notational convenience, we will assume (throughout the rest of the section) that when dealing with only two graphs $G$ and $G'$ in $\G_n$ and $\G_{n'}$ respectively with $n < n'$, the set $V(G)$ lies inside $V(G')$, i.e. $\I_n$ lies inside $\I_n'$. This only removes the need to explicitly identify a subset of $V(G')$ to $V(G)$.\\

First, recall that we denote by $\G^b_n$ the set of all connected undirected bipartite graphs defined over the vertex set $\I_n$. We begin by formally introducing the bipartite-expansion and the symmetric-expansion.

\begin{MyDef} \label{Bexp}
Given a pair $(G,k)$ in $\G_n {\times} \K_n$ where $G$ is non-bipartite, we construct a pair $(G',k')$ in $\G^b_{2n} {\times} \K_{2n}$ as outlined in the procedure to follow. We refer to $(G',k')$ as the bipartite-expansion of $(G,k)$.
\end{MyDef}

{\bf Bipartite-expansion procedure} --- Given a pair $(G,k)$ in $\G_n {\times} \K_n$ where $G$ is non-bipartite and has edge set $E$, we construct a pair $(G',k')$ in $\G^b_{2n} {\times} \K_{2n}$ as follows. We suppose $G'$ is equal to $(\I_{2n},E')$, and partition $\I_{2n}$ into two sets $\I_n$ and $\J$. We define a bijection $\phi$ from $\J$ into $\I_n$ and define $E'$ to be the set of undirected edges on $\I_{2n}$ such that for $i$ and $j$ in $\I_{2n}$, $\{i,j\} \in E'$ if and only if $(i,j)\in \I_n {\times} \J$ and $\{i,\phi(j)\} \in E$. Finally, set $k'$ to be equal to $k$ on $\I_n$ and $k \circ \phi$ on $\J$.

\begin{MyExa}
Under a bipartite-expansion, a cycle over $3$ nodes (left) becomes a cycle over $6$ nodes (right).
\begin{center}
\begin{tikzpicture}
  [scale=.5,auto=center,every node/.style={circle,fill=gray!10}]
  \node (n1) at (1,10) {1};
  \node (n2) at (4,8)  {1};
  \node (n3) at (1,6)  {2};

    \node (n4) at (9,10) {1};
  \node (n5) at (9,8)  {1};
  \node (n6) at (9,6)  {2};
     \node (n7) at (12,10) {1};
  \node (n8) at (12,8)  {1};
  \node (n9) at (12,6)  {2};
  
    \foreach \from/\to in {n1/n2,n2/n3,n1/n3,n4/n8,n4/n9,n5/n7,n5/n9,n6/n7,n6/n8}
    \draw (\from) -- (\to);
\end{tikzpicture}
\end{center}
The integer tag on each node represents its corresponding threshold.
\end{MyExa}

We define $\Sy_n$ to be the set of all pairs $(G,k)$ in $\G_n{\times}\K_n$ such for each player $i$ in $\I_n$, the degree $d_i$ is odd and $k_i$ is equal to $(d_i+1)/2$. We refer to $\Sy_n$ as the set of \emph{symmetric models}, in the sense that for $(G,k)$ in $\Sy_n$ the property is such that for any action profile $a$ in $\A_n$, and any player $i$, the action $(G_k a)_i$ is the action played by the majority in $\N_i$ with respect to the action profile $a$. In this case, the two actions $\black$ and $\white$ are treated as having equal weights by all players in the network.

\begin{MyDef} \label{OSexp}
Given a pair $(G,k)$ in $(\G_n \times \K_n) \backslash \Sy_n$, we construct a pair $(G',k')$ in $\G_n' \times \K_n'$ as outline in the procedure to follow. We refer to $(G',k')$ as a one-step symmetric-expansion of $(G,k)$.
\end{MyDef}

{\bf One-step symmetric-expansion procedure} --- Given a pair $(G,k)$ in $(\G_n \times \K_n) \backslash \Sy_n$, we construct a pair $(G',k')$ in $\G_n' \times \K_n'$ as follows. We suppose that $G$ is equal to $(\I_n,E)$, and choose a player $i$ in $\I_n$ such that either $d_i$ is even, or $d_i$ is odd and $k_i$ is not equal to $(d_i + 1)/2$. Surely such a node exists since $(G,k)$ does not belong to $\Sy_n$. We call the node $i$ the {\bf pivot} node in the one-step symmetric-expansion of $(G,k)$ into $(G',k')$. We construct an instance $(G',k')$ in $\G_{n+3(d_i + 1)}{\times}\K_{n+3(d_i + 1)}$. We suppose that $G'$ is equal to $(\I_{n+3(d_i + 1)},E')$ and partition $\I_{n+3(d_i + 1)}$ into $\I_n,P_1,\cdots,P_{d_i + 1}$ where each partition different than $\I_n$ has cardinality exactly equal $3$. We define $E'$ to be the undirected set of edges such that $E'$ contains $E$, and for every $m$, we suppose $P_m = \{j,j',j''\}$ and let $E'$ contain $jj'$, $jj''$ and $ij$. To visualize the obtained graph structure $G'$, we attached $d_i+1$ three-node Y-shaped graphs to node $i$. Finally, we set $k'$ to be equal to $k$ on $\I_n\backslash\{i\}$, to be equal to $d_i + 1$ at $i$, equal to $2$ on the remaining nodes having degree $3$ and equal to $1$ everywhere else.

\begin{MyExa}
The figure below shows a one-step symmetric expansion (right) of a pair $(G,k)$ (left) where $G$ is a cycle graph over 3 nodes.
\begin{center}
\begin{tikzpicture}
  [scale=.9,auto=center,every node/.style={circle,fill=gray!10}]
  \node (n1) at (4,10) {1};
  \node (n2) at (1,8)  {1};
  \node (n3) at (4,6)  {2};

  \node (n4) at (9,10.5) {1};
  \node (n5) at (9,9.5)  {1};
  \node (n6) at (9,8.5)  {1};
  \node (n7) at (9,7.5) {1};
  \node (n8) at (9,6.5)  {1};
  \node (n9) at (9,5.5)  {1};
  
  \node (n10) at (11,10) {2};
  \node (n11) at (11,8)  {2};
  \node (n12) at (11,6)  {2};
  
  \node (n13) at (16,10) {1};
  \node (n14) at (13,8)  {3};
  \node (n15) at (16,6)  {2};

    \foreach \from/\to in {n1/n2,n2/n3,n1/n3,n10/n4,n10/n5,n10/n14,n11/n6,n11/n7,n11/n14,n12/n8,n12/n9,n12/n14,n14/n13,n14/n15,n13/n15}
    \draw (\from) -- (\to);
\end{tikzpicture}
\end{center}
The integer tag on each node represents its corresponding threshold. The pivot in this example is a node (in the cycle graph) having a threshold of $1$.
\end{MyExa}

Iterated one-step symmetric-expansion of pair in $(\G_n \times \K_n) \backslash \Sy_n$ eventually yields a pair in $\Sy_n$. We formalize the idea:

\begin{MyDef} \label{Sexp}
Given a pair $(G_0,k_0)$ in $(\G_n \times \K_n) \backslash \Sy_n$, we construct a finite sequence $(G_0,k_0),(G_1,k_1),\cdots,(G_m,k_m)$ for some positive integer $m$, where $(G_l,k_l)$ is a one-step symmetric-expansion of $(G_{l-1},k_{l-1})$ and $(G_m,k_m)$ belongs to $\Sy_{n'}$ for some $n'$. We refer to $(G_m,k_m)$ as the symmetric-expansion of $(G,k)$.
\end{MyDef}

We note that the symmetric-expansion of $(G,k)$ is uniquely defined (up to isomorphism\footnote{ \label{isofoot} Two pairs $(G,k)$ and $(G',k')$ in $\G_n {\times} \K_n$ are said to be isomorphic if there exists a bijective map $\phi$ from $V(G)$ into $V(G')$ such that $ij \in E(G)$ if and only if $\phi_i \phi_j \in E(G')$ and $k_i = (k' \circ \phi)_i$ for $i$ in $V(G)$.}) regardless of the order the pivot nodes were chosen. However, we shall not write a formal proof for this fact, it would suffice to say that when we perform the one-step symmetric expansion of a pair $(G,k)$, we leave all neighborhoods and thresholds unchanged for all nodes different than the pivot.\footnote{ Given a pair $(G,k)$ in $(\G_n \times \K_n) \backslash \Sy_n$, it can be easily checked that if $(G',k')$ be the one-step symmetric-expansion of $(G,k)$, then $G$ is bipartite if and only if $G'$ is bipartite. Similarly, given a pair $(G,k)$ in $(\G_n \backslash \G^b_n) \times \K_n$, let $(G',k')$ be the bipartite-expansion of $(G,k)$, then $(G,k)$ belongs to $\Sy_{n}$ if and only if $(G',k')$ belongs to $\Sy_{2n}$.}\\

Our goal is to prove a bound (equal to 2) on the length of convergence cycles for all pairs $(G,k)$. We first establish that such a bound applies for a pair $(G,k)$ if it applies for either its symmetric-expansion (Lemmas \ref{OneStepM} and \ref{SymmetricM}) or its bipartite-expansion (Lemma \ref{BipartiteM}). We combine both expansions (Proposition \ref{CombineExp}), and thus show that, to get the general result, we need only prove the bound for symmetric models $(G,k)$ where $G$ is bipartite (Lemmas \ref{SBM} and \ref{SBM2}).

\begin{MyDef}
We define the set $\M$ to be a subset of $\bigcup_{n\geq1}\G_n{\times}\K_n$ such that $(G,k)$ in $\G_n{\times}\K_n$ belongs to $\M$ if and only if for every $C$ in $CYCLE_n(G,k)$, the cardinality of $C$ is less than or equal to $2$.
\end{MyDef}

What we ultimately show is that $\M = \bigcup_{n\geq1}\G_n{\times}\K_n$.

\begin{MyLem} \label{OneStepM}
Given a pair $(G,k)$ in $(\G_n \times \K_n) \backslash \Sy_n$, define $(G',k')$ to be the one-step symmetric-expansion of $(G,k)$. If $(G',k')$ belongs to $\M$ then $(G,k)$ belongs to $\M$.
\end{MyLem}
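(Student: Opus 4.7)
The plan is to embed every trajectory of $G_k$ into a trajectory of $G'_{k'}$ by extending each action profile $a \in \A_n$ to an action profile $\hat{a} \in \A_{n'}$ (where the pivot $i$ and the attached $d_i{+}1$ Y-gadgets are as in the one-step symmetric-expansion procedure), in such a way that the Y-gadgets are dynamically frozen while simultaneously compensating for the gap between the original threshold $k_i$ and the new majority threshold $d_i{+}1$ at the pivot. Once the embedding is shown to commute with one step of the dynamics, the $G_k$-orbit of any $a$ inherits the period of the $G'_{k'}$-orbit of $\hat{a}$, and the hypothesis $(G', k') \in \M$ immediately yields the desired bound of $2$.

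Concretely, I would first observe that without loss of generality $k_i$ lies in $\{0, 1, \ldots, d_i{+}1\}$ (all values $k_i > d_i$ induce the same always-$\white$ behaviour), then fix an arbitrary set $S$ of exactly $d_i{+}1 - k_i$ of the $d_i{+}1$ Y-root nodes. Define $\hat{a}$ by: $\hat{a}_j = a_j$ for every $j \in \I_n$; $\hat{a}_j = \black$ for $j \in S$ and $\hat{a}_j = \white$ for every other Y-root; and each of the two leaves hanging off a Y-root copies the colour of that Y-root. The main structural claim is then $G'_{k'}(\hat{a}) = \widehat{G_k(a)}$, where the right-hand extension reuses the very same set $S$.

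Verifying the claim is a case split by node type. Nodes in $\I_n \setminus \{i\}$ keep their original neighbourhood and threshold, hence update exactly as under $G_k$. At the pivot, the Y-roots contribute exactly $|S| = d_i{+}1 - k_i$ black neighbours, so the new threshold $d_i{+}1$ is met iff at least $k_i$ of the original neighbours in $G$ are black, matching the $G_k$-rule. Each Y-root has threshold $2$ and its two leaves share its own colour, so at least two of its three neighbours agree with it and it freezes irrespective of $i$'s colour. Each leaf has threshold $1$ and copies its (frozen) Y-root, so it too is constant. An easy induction upgrades the claim to $(G'_{k'})^t \hat{a} = \widehat{G_k^t a}$ for all $t \geq 0$.

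Finally, picking any cycle $C \in CYCLE_n(G,k)$ and any $a \in C$, the identity $G_k^{|C|} a = a$ forces $(G'_{k'})^{|C|} \hat{a} = \hat{a}$, placing $\hat{a}$ on some cycle $C' \in CYCLE_{n'}(G', k')$ whose period divides $|C|$; conversely, projecting $(G'_{k'})^{|C'|} \hat{a} = \hat{a}$ back to $\I_n$ shows $|C|$ divides $|C'|$, so $|C| = |C'| \leq 2$ by the hypothesis $(G',k') \in \M$. The main obstacle I anticipate is really just the careful bookkeeping behind the structural claim at the pivot and inside the Y-gadgets; the boundary cases $k_i = 0$ (so $|S| = d_i{+}1$, all Y-roots black) and $k_i = d_i{+}1$ (so $|S| = 0$, all Y-roots white) should fall out as clean sanity checks.
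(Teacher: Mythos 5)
Your proposal is correct and follows essentially the same route as the paper: an injective extension map that freezes each Y-gadget at a fixed colour, with exactly $d_i+1-k_i$ gadgets black so that the pivot's majority threshold $d_i+1$ is met precisely when at least $k_i$ original neighbours are black, followed by commutation with one step of the dynamics and transfer of the cycle length. Your explicit WLOG reduction of $k_i$ to $\{0,\dots,d_i+1\}$ is a small extra care the paper leaves implicit, but the argument is otherwise the same.
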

\begin{proof}
Let node $i$ be the \emph{pivot} node in the one-step symmetric-expansion of $(G,k)$ into $(G',k')$. We suppose $G'$ belongs to $\G_{n'}$, and consider the induced subgraph $H$ in $G'$ over the vertex set $\I_{n'}\backslash \I_n$. The subgraph $H$ necessarily consists (by construction of $G'$) of $d_i + 1$ connected components, each consisting of three vertices. We set integers $b_i$ and $w_i$ to be equal to $k_i$ and $d_i-k_i+1$ respectively, we then partition $\I_{n'}\backslash \I_n$ into $W_1,\cdots,W_{b_i},B_1,\cdots,B_{w_i}$ such that each partition contains the set of nodes in one of the connected components. Suppose that $(G',k')$ belongs to $\M$, we show that $(G,k)$ belongs to $\M$. Let $C$ be an element of $CYCLE_n(G,k)$ and suppose $C$ has cardinality greater or equal to 3. In particular, suppose $C$ is equal to $\{a_1,\cdots,a_m\}$ for $m\geq 3$ where $a_{l+1} = G_k a_l$ for $1 \leq l < m$ and $a_1 = G_k a_m$.

We define a map $\alpha$ from $\A_n$ into $\A_{n'}$ such that for $a$ in $\A_n$,
\begin{equation}
 \alpha a =
 	\left\{
		\begin{array}{ll}
			a & \text{on } \I_n \\
			\black & \text{on } B_1 \cup \cdots \cup B_{w_i}\\
			\white & \text{on } W_1 \cup \cdots \cup W_{b_i}\\
		\end{array}.\nonumber
	\right.
\end{equation}
First, the map $\alpha$ is clearly injective. Then, $\alpha a_1,\cdots, \alpha a_m$ are distinct elements of $\A_{n'}$. Second, it can be checked that
\begin{equation}
	\alpha (G_ka) = G'_{k'}(\alpha a). \nonumber
\end{equation}
To see this fact, notice that for every $b_1$ and $b_2$ in $\A_n$ every node in $\I_{n'}\backslash \I_n$ has the same color both in $\alpha b_1$ and $\alpha b_2$, and the same color both in $\alpha b_1$ and $G'_{k'}\alpha b_1$. Since every node in $\I_n$ other than the pivot $i$ keeps the same neighborhood and threshold, we need only show that $(\alpha G_ka)_i = (G'_{k'}\alpha a)_i$. To this end, node $i$ is $\black$ in $G_ka$ if and only if at least $k_i$ neighbors of $i$ in $G$ are $\black$ in $a$, and that is the case if and only if at least $k_i + w_i$ neighbors of $i$ in $G'$ are $\black$ in $\alpha a$, or equivalently at least $d^G(i) + 1 = (d^{G'}(i) + 1)/2$ neighbors of $i$ in $G'$ are $\black$ in $\alpha a$. Finally, it follows that $\alpha C = \{\alpha a_1,\cdots, \alpha a_m \}$ is a cycle in $CYCLE_{n'}(G',k')$ contradicting the fact that $m\geq 3$.
\end{proof}

\begin{MyLem} \label{SymmetricM}
Given a pair $(G,k)$ in $(\G_n \times \K_n) \backslash \Sy_n$, define $(G',k')$ to be the symmetric-expansion of $(G,k)$. If $(G',k')$ belongs to $\M$ then $(G,k)$ belongs to $\M$.
\end{MyLem}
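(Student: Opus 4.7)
The plan is to reduce this to Lemma~\ref{OneStepM} by an induction on the length of the sequence of one-step symmetric-expansions that defines the symmetric-expansion. By Definition~\ref{Sexp}, there is a finite sequence $(G_0,k_0),(G_1,k_1),\ldots,(G_m,k_m)$ with $(G_0,k_0)=(G,k)$, with $(G_m,k_m)=(G',k')$, and such that each $(G_l,k_l)$ for $l\geq 1$ is a one-step symmetric-expansion of $(G_{l-1},k_{l-1})$. Note that for every $l<m$, the pair $(G_l,k_l)$ must lie in $(\G_{n_l}\times\K_{n_l})\backslash\Sy_{n_l}$, since otherwise the sequence would already have terminated at step $l$; this is exactly the hypothesis needed to invoke Lemma~\ref{OneStepM} at each intermediate stage.

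The induction is on $m$. In the base case $m=1$ the claim is precisely Lemma~\ref{OneStepM}. For the inductive step, assume the statement
\begin{equation}
	(G_m,k_m)\in\M \ \Longrightarrow\ (G_{m-1},k_{m-1})\in\M \nonumber
\end{equation}
which is again an application of Lemma~\ref{OneStepM}, using that $(G_{m-1},k_{m-1})\notin\Sy_{n_{m-1}}$ as noted above, and that $(G_m,k_m)$ is a one-step symmetric-expansion of $(G_{m-1},k_{m-1})$. Then $(G_{m-1},k_{m-1})$ is itself the terminal element of a sequence of one-step symmetric-expansions of length $m-1$ starting from $(G_0,k_0)=(G,k)$, so by the inductive hypothesis $(G_0,k_0)\in\M$, which is the desired conclusion. (Strictly speaking, the inductive statement to carry is the slightly stronger one: for any chain of one-step symmetric-expansions of length $m$, membership of the last element in $\M$ implies membership of the first; this avoids insisting that the middle element lie in $\Sy$.)

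There is no real obstacle to this argument, since the combinatorial content sits entirely inside Lemma~\ref{OneStepM}. The only subtle point to verify is that each intermediate pair $(G_l,k_l)$ with $l<m$ genuinely fails to lie in $\Sy_{n_l}$, so that the one-step symmetric-expansion operation is well-defined there and Lemma~\ref{OneStepM} legitimately applies; this is immediate from the definition of the sequence that realizes the symmetric-expansion.
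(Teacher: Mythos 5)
Your proof is correct and follows essentially the same route as the paper: both unwind the symmetric-expansion into its defining chain of one-step symmetric-expansions and apply Lemma~\ref{OneStepM} recursively down the chain. Your added care in checking that each intermediate pair lies outside $\Sy_{n_l}$ (so that the one-step expansion and Lemma~\ref{OneStepM} apply) is a detail the paper leaves implicit, but it does not change the argument.
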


\begin{proof}
Given that the pair $(G,k)$ belongs to $(\G_n \times \K_n) \backslash \Sy_n$, we can construct a finite sequence $(G_1,k_1), \cdots, (G_{m-1},k_{m-1})$ in such a way that $(G_1,k_1)$ is the one-step symmetric expansion of $(G,k)$, $(G',k')$ is the one-step symmetric-expansion of $(G_{m-1},k_{m-1})$ and $(G_{l},k_{l})$ is the one-step symmetric-expansion of $(G_{l-1},k_{l-1})$ for $1<l<m$. If $(G',k')$ belongs to $\M$, then $(G_{m-1},k_{m-1})$ belongs to $\M$ by Lemma \ref{OneStepM}. Recursively, it follows that $(G,k)$ belongs to $\M$.
\end{proof}

\begin{MyLem} \label{BipartiteM}
Given a pair $(G,k)$ in $(\G_n\backslash\G^b_n) \times \K_n$, define $(G',k')$ to be the bipartite-expansion of $(G,k)$. If $(G',k')$ belongs to $\M$ then $(G,k)$ belongs to $\M$.
\end{MyLem}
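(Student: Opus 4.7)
The plan is to lift action profiles on $G$ to action profiles on $G'$ through an injective map $\alpha : \A_n \to \A_{2n}$ that intertwines the two dynamics, i.e.\ $G'_{k'} \circ \alpha = \alpha \circ G_k$. Once such an $\alpha$ is in hand, every cycle $C = \{a_1,\ldots,a_m\}$ in $CYCLE_n(G,k)$ pushes forward to a cycle $\alpha(C) \in CYCLE_{2n}(G',k')$ of the same cardinality, and the bound $|\alpha(C)| \leq 2$ guaranteed by $(G',k') \in \M$ transfers immediately back to $C$.

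The natural candidate is $(\alpha a)_i = a_i$ for $i \in \I_n$ and $(\alpha a)_j = a_{\phi(j)}$ for $j \in \J$, i.e.\ we duplicate the action profile across the two sides of the bipartition through $\phi$. Injectivity is immediate since $(\alpha a){\upharpoonright}\I_n = a$. To verify the intertwining identity I would compute $(G'_{k'} \alpha a)_v$ separately for $v \in \I_n$ and $v \in \J$. For $v = i \in \I_n$, by definition of the bipartite-expansion the $G'$-neighbors of $i$ all lie in $\J$ and are precisely $\phi^{-1}(\N_G(i))$; the number of these playing $\black$ under $\alpha a$ equals the number of elements of $\N_G(i)$ playing $\black$ under $a$; combined with $k'_i = k_i$, this yields $(G'_{k'} \alpha a)_i = (G_k a)_i = (\alpha G_k a)_i$. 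For $v = j \in \J$ the argument is symmetric: the $G'$-neighbors of $j$ are $\N_G(\phi(j)) \subseteq \I_n$, the threshold is $k'_j = k_{\phi(j)}$, and a parallel count gives $(G'_{k'} \alpha a)_j = (G_k a)_{\phi(j)} = (\alpha G_k a)_j$.

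To close the argument, suppose for contradiction that $(G,k) \notin \M$ and pick $C = \{a_1,\ldots,a_m\} \in CYCLE_n(G,k)$ with $m \geq 3$ and $a_{l+1} = G_k a_l$ (indices mod $m$). The intertwining identity yields $G'_{k'}(\alpha a_l) = \alpha a_{l+1}$, so $\{\alpha a_1,\ldots,\alpha a_m\}$ is a $G'_{k'}$-periodic orbit; injectivity of $\alpha$ makes the $m$ images distinct, mutual reachability within the orbit is immediate, and determinism forbids escape from it. Hence $\{\alpha a_1,\ldots,\alpha a_m\} \in CYCLE_{2n}(G',k')$ has cardinality $m \geq 3$, contradicting $(G',k') \in \M$.

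The only delicate step is the neighborhood/threshold bookkeeping behind the intertwining identity; once the bipartite-double-cover structure of $G'$ is unpacked, it drops out cleanly. Unlike the symmetric-expansion lemmas, no padding of thresholds or handling of auxiliary boundary nodes is required, because the two halves $\I_n$ and $\J$ of $G'$ are perfectly symmetric through $\phi$.
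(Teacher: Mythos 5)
Your proposal is correct and follows essentially the same route as the paper: the paper also defines $\alpha a = (a, a\circ\phi)$, asserts the intertwining identity $G'_{k'}(\alpha a) = \alpha(G_k a)$, and uses injectivity of $\alpha$ to push a cycle of length $m>2$ forward to a cycle of the same length in $CYCLE_{2n}(G',k')$, contradicting $(G',k')\in\M$. Your write-up simply spells out the neighborhood/threshold bookkeeping that the paper leaves implicit.
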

\begin{proof}
The graph $G'$ is bipartite and has vertex set $\I_{2n}$, let us partition $\I_{2n}$ into $\I_n$ and $\J$, then $(\I_n,\J)$ forms a 2-Partition with respect to $G'$ by construction. We define a bijection $\phi$ from $\J$ into $\I_n$ such that for $j_1$ and $j_2$ in $\J$, $j_1\phi(j_2) \in E'$ if and only if $\phi(j_1)\phi(j_2) \in E$. Given an action configuration in $\A_n$, we define the map $\alpha$ from $\A_n$ into $\A_{2n}$ such that for $a$ in $\A_n$ we have:
\begin{equation}
	\alpha a = (a , a \circ \phi). \nonumber
\end{equation}
It then follows that:
\begin{equation}
  G'_{k'} (\alpha a) = \alpha (G_k a). \nonumber
\end{equation}
The map $\alpha$ is clearly injective, it then follows that if $C = \{a_1, \cdots, a_m\}$ is a cycle in $CYCLE_n(G,k)$ with $m>2$, then $\alpha C = \{\alpha a_1, \cdots, \alpha a_m\}$ is a cycle in $CYCLE_n(G',k')$ contradicting the fact that $(G',k')$ belongs to $\M$.
\end{proof}

\begin{MyPro} \label{CombineExp}
Given a pair $(G,k)$ in $((\G_n\backslash\G^b_n) \times \K_n) \backslash \Sy_n$, we define $(G_1,k_1)$ and $(G_2,k_2)$ to be respectively the bipartite-expansion and the symmetric expansion of $(G,k)$. If $(G'_1,k'_1)$ is the symmetric-expansion of $(G_1,k_1)$ and $(G'_2,k'_2)$ is the bipartite-expansion of $(G_2,k_2)$, then $(G'_1,k'_1)$ is equal to $(G'_2,k'_2)$ (up to isomorphism\footnote{ See footnote \ref{isofoot}.}).
\end{MyPro}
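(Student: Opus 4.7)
The plan is to establish the identity by commuting the bipartite-expansion $B$ with the one-step symmetric-expansion pivot by pivot, and then iterating. Write $S_i$ for the one-step symmetric-expansion whose pivot is $i$, so that $S$ is a composition of such one-step expansions. Recall first that $B$ acts as a bipartite double cover: each $i \in \I_n$ acquires a partner $\phi^{-1}(i) \in \J$ of identical degree and threshold, and every edge $ij$ of $G$ lifts to the two crossing edges $\{i, \phi^{-1}(j)\}$ and $\{\phi^{-1}(i), j\}$ in $B(G,k)$. Hence a node is non-symmetric in $(G,k)$ iff both of its copies are non-symmetric in $B(G,k)$ with matching parameters. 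Moreover, $S_j$ affects only its pivot $j$ (through its threshold and through attaching fresh Y-gadgets disjoint from the rest of the graph), so one-step expansions at distinct pivots commute up to isomorphism, and the full $S$ is realized by performing $S_j$ once at each non-symmetric node of $G$ in any order.

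The core step is the one-pivot identity $B(S_i(G,k)) \cong S_{i}\, S_{\phi^{-1}(i)}(B(G,k))$ for any non-symmetric pivot $i$ of $G$. In $S_i(G,k)$ we attach $d_i + 1$ Y-gadgets $\{j_l, j'_l, j''_l\}$ at $i$, with thresholds $k_{j_l} = 2$, $k_{j'_l} = k_{j''_l} = 1$, and $k_i = d_i + 1$. Taking the double cover, each anchoring edge $\{i, j_l\}$ and each internal Y-edge lifts to two crossing edges; a direct inspection shows that the lift of every Y-gadget decomposes into two vertex-disjoint Y-gadgets---one centered at the $\J$-copy of $j_l$ and attached to $i \in \I_n$, the other centered at $j_l$ (now in the $\I_n$-side of the doubled graph) and attached to $\phi^{-1}(i) \in \J$. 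Thresholds are inherited unchanged, giving both copies of $i$ threshold $d_i + 1 = (2d_i + 1 + 1)/2$, which matches the new degree $2d_i + 1$. This is exactly the configuration produced by $S_{\phi^{-1}(i)}$ followed by $S_i$ on $B(G,k)$, since in $B(G,k)$ both $i$ and $\phi^{-1}(i)$ have degree $d_i$, and each one-step expansion attaches $d_i + 1$ fresh Y-gadgets and updates the pivot's threshold to $d_i + 1$.

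Enumerating the non-symmetric nodes of $G$ as $i^{(1)}, \ldots, i^{(r)}$, we realize $S(G,k)$ as $S_{i^{(1)}} \cdots S_{i^{(r)}}(G,k)$. Applying $B$ and iterating the one-pivot identity, we obtain an isomorphism between $B(S(G,k))$ and $S_{i^{(1)}} S_{\phi^{-1}(i^{(1)})} \cdots S_{i^{(r)}} S_{\phi^{-1}(i^{(r)})}(B(G,k))$. The non-symmetric nodes of $B(G,k)$ are precisely the $2r$ copies $\{i^{(m)}, \phi^{-1}(i^{(m)})\}_{m=1}^{r}$, so the right-hand side is exactly $S(B(G,k))$ up to isomorphism, which is the claim.

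The main obstacle is the explicit verification of the Y-gadget splitting in the double cover. The cleanest way to see it is that each Y-gadget together with its anchoring edge $\{i, j_l\}$ is itself a tree and hence bipartite, and any bipartite subgraph lifts to two vertex-disjoint isomorphic copies inside a bipartite double cover. Thus attaching the $d_i + 1$ Y-gadgets at $i$ and then doubling produces the same configuration as doubling first and then attaching $d_i + 1$ Y-gadgets at each copy of $i$, which is precisely what is needed to commute the two expansions.
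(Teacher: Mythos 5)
Your proof is correct. The paper itself gives no real argument here: it asserts that the required isomorphism ``could be easily constructed following the expansion procedure'' and omits the construction, so your proposal supplies exactly the content the paper leaves out. Your organization is also the natural one: factor the symmetric-expansion into one-step expansions $S_i$ (justified because each $S_i$ changes only its pivot's degree and threshold, leaves the set of remaining non-symmetric nodes intact, and introduces only symmetric gadget nodes), and reduce everything to the single commutation identity $B(S_i(G,k)) \cong S_i\, S_{\phi^{-1}(i)}(B(G,k))$. The key computation --- that each anchored Y-gadget, being a connected bipartite (indeed tree) subgraph, lifts in the double cover to two vertex-disjoint anchored Y-gadgets with inherited thresholds, one at each copy of the pivot, and that both copies of the pivot end up with degree $2d_i+1$ and threshold $d_i+1$ --- checks out, as does the observation that the non-symmetric nodes of $B(G,k)$ are precisely the two copies of each non-symmetric node of $G$, so the iterated right-hand side is exactly the symmetric-expansion of $B(G,k)$. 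Two points worth tightening if you write this up: state the lifting claim for \emph{connected} bipartite subgraphs (a disconnected one lifts componentwise), and note explicitly that every intermediate graph $S_{i^{(m)}}\cdots S_{i^{(r)}}(G,k)$ remains non-bipartite (attaching trees preserves non-bipartiteness), so the bipartite-expansion is defined at every stage of the recursion.
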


\begin{proof}
It suffices to prove existence of a bijective map $\phi$ from $V(G'_2)$ to $V(G'_1)$, such that $ij \in E(G'_1)$ if and only if $\phi_i \phi_j \in E(G'_2)$ and $(k'_1)_i = (k'_2 \circ \phi)_i$ for $i$ in $V(G'_1)$. Such a map could be easily constructed following the expansion procedure. We omit the construction.
\end{proof}

\begin{MyLem} \label{SBM}
Given a pair $(G,k)$ in $((\G_n\backslash\G^b_n) \times \K_n) \backslash \Sy_n$, define $(G',k')$ to be the bipartite-expansion of $(G,k)$ and $(G'',k'')$ to be the symmetric-expansion of $(G',k')$. If $(G'',k'')$ belongs to $\M$ then $(G,k)$ belongs to $\M$.
\end{MyLem}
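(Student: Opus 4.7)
The plan is that this lemma is essentially a composition of the two preceding lemmas, Lemma \ref{BipartiteM} and Lemma \ref{SymmetricM}. Starting from the hypothesis $(G'',k'') \in \M$, I would first apply Lemma \ref{SymmetricM} to conclude that $(G',k') \in \M$, and then apply Lemma \ref{BipartiteM} to conclude that $(G,k) \in \M$. No new combinatorial content is needed here; the real work was done in the two previous lemmas by constructing the injective embeddings of cycles under $\alpha$.

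Before chaining, I must verify that the domains of the two lemmas are actually matched up. Lemma \ref{BipartiteM} is applicable because, by hypothesis, $(G,k) \in (\G_n \backslash \G^b_n) \times \K_n$, so its bipartite-expansion $(G',k')$ is well-defined and lies in $\G^b_{2n} \times \K_{2n}$. Lemma \ref{SymmetricM} demands as input a pair lying in the complement of $\Sy$; so I need to check that $(G',k') \notin \Sy_{2n}$. This follows directly from the footnote observation that $(G,k) \in \Sy_n$ if and only if $(G',k') \in \Sy_{2n}$, together with the hypothesis $(G,k) \notin \Sy_n$. Consequently the symmetric-expansion $(G'',k'')$ is well-defined, and Lemma \ref{SymmetricM} applies to the pair $(G',k')$.

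Hence the full chain reads: assume $(G'',k'') \in \M$; since $(G'',k'')$ is the symmetric-expansion of $(G',k') \in (\G^b_{2n} \times \K_{2n}) \backslash \Sy_{2n}$, Lemma \ref{SymmetricM} yields $(G',k') \in \M$; since $(G',k')$ is the bipartite-expansion of $(G,k) \in (\G_n \backslash \G^b_n) \times \K_n$, Lemma \ref{BipartiteM} yields $(G,k) \in \M$. The main (and only) obstacle is the bookkeeping check that the intermediate pair $(G',k')$ actually sits in the complement of $\Sy_{2n}$ so that the second expansion is legitimate; once this is dispensed with, the conclusion is immediate. Proposition \ref{CombineExp} is not strictly needed here, but it justifies that one could equally well have defined $(G'',k'')$ by expanding in the opposite order without affecting the statement.
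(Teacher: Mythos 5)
Your proposal is correct and follows the paper's proof exactly: apply Lemma \ref{SymmetricM} to get $(G',k') \in \M$, then Lemma \ref{BipartiteM} to get $(G,k) \in \M$. Your additional check that $(G',k') \notin \Sy_{2n}$ (via the footnote) is a worthwhile bit of bookkeeping that the paper leaves implicit, but it does not change the argument.
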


\begin{proof}
If $(G'',k'')$ belongs to $\M$, then $(G',k')$ belongs to $\M$ by Lemma \ref{SymmetricM}. It follows that $(G,k)$ belongs to $\M$ by Lemma \ref{BipartiteM}.
\end{proof}

\begin{MyLem} \label{SBM2}
The set $\M$ is equal to $\bigcup_{n\geq1}\G_n{\times}\K_n$ if and only if $\M$ contains $\Sy_n \cap (G^b_n {\times} \K_n)$ for every positive integer $n$.
\end{MyLem}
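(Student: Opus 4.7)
The plan is to establish the biconditional by proving each direction separately. The forward direction is immediate: since $\Sy_n \cap (\G^b_n \times \K_n)$ is a subset of $\G_n \times \K_n$ for every positive integer $n$, if $\M$ equals $\bigcup_{n\geq 1}\G_n\times\K_n$ then it certainly contains each such subset.

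For the backward direction, I would assume $\Sy_n \cap (\G^b_n \times \K_n) \subseteq \M$ for every $n$, and let $(G,k) \in \G_n \times \K_n$ be arbitrary. The argument proceeds by a case analysis on whether $G$ is bipartite and whether $(G,k) \in \Sy_n$. When both properties hold, $(G,k)$ already lies in the base case and is in $\M$ by assumption. When $G$ is bipartite but $(G,k) \notin \Sy_n$, I would pass to its symmetric-expansion $(G',k')$: by the preservation property recorded in the footnote following Definition~\ref{Sexp}, $G'$ remains bipartite, and by construction $(G',k') \in \Sy$; hence $(G',k')$ lies in the base case and is in $\M$, so Lemma~\ref{SymmetricM} yields $(G,k) \in \M$.

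Symmetrically, when $G$ is non-bipartite but $(G,k) \in \Sy_n$, the bipartite-expansion $(G',k')$ is bipartite by construction and remains in $\Sy_{2n}$ by the second preservation observation in the same footnote; hence it falls into the base case, and Lemma~\ref{BipartiteM} then gives $(G,k) \in \M$. Finally, when $G$ is non-bipartite and $(G,k) \notin \Sy_n$, Lemma~\ref{SBM} applies directly: the pair $(G'',k'')$ obtained from $(G,k)$ by first performing the bipartite-expansion and then the symmetric-expansion is bipartite (since symmetric-expansion preserves bipartiteness) and belongs to $\Sy$ by definition, so it is in the base case and therefore in $\M$; Lemma~\ref{SBM} then yields $(G,k) \in \M$.

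No step here is particularly delicate. The proof is essentially a bookkeeping exercise invoking the already-proven expansion lemmas together with the recorded preservation properties of bipartiteness and symmetry under the two expansion operations; Proposition~\ref{CombineExp} additionally guarantees that the order in which the two expansions are performed is inessential up to isomorphism. The conceptual content of the lemma is that the general bound on convergence-cycle length reduces to the restricted setting of symmetric bipartite models, which is exactly the setting the subsequent analysis can attack by direct combinatorial means.
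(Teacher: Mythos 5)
Your proof is correct and follows essentially the same route as the paper's: reduce an arbitrary pair to the symmetric bipartite base case via the two expansions and pull membership in $\M$ back through Lemmas \ref{SymmetricM}, \ref{BipartiteM} and \ref{SBM}. The only difference is that you explicitly separate the degenerate cases where one of the expansions is undefined (already-bipartite graphs or pairs already in $\Sy_n$), which the paper's one-line converse glosses over; this is a welcome tightening rather than a new idea.
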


\begin{proof}
It is clear that if $\M=\bigcup_{n\geq1}\G_n{\times}\K_n$ then $\Sy_n \cap (G^b_n {\times} \K_n) \subset \M$. To prove the converse, given $(G,k)$ an element of $\bigcup_{n\geq1}\G_n{\times}\K_n$, we define $(G',k')$ to be the bipartite-expansion of $(G,k)$ and $(G'',k'')$ to be the symmetric-expansion of $(G',k')$. We have that $(G'',k'')$ belongs to $\Sy_n \cap (G^b_n {\times} \K_n)$. If $\M$ contains $\Sy_n \cap (G^b_n {\times} \K_n)$, it then follows by Lemma \ref{SBM} that $\M$ contains $(G,k)$.
\end{proof}

We proceed to show that $\M$ contains $\Sy_n \cap (G^b_n {\times} \K_n)$ for every positive integer $n$.

\begin{MyLem} \label{GeneralCond}
Given a pair $(G,k)$ in $G^b_n {\times} \K_n$, we consider a 2-Partition $(P_o,P_e)$ of $\I_n$ with respect to $G$. Then $(G,k)$ belongs to $\M$ if and only if for every $a$ in $\A_n$, there exists some integer $T$, such that $(G_k|_{P_e} G_k|_{P_o})^Ta = (G_k|_{P_e} G_k|_{P_o})^{T+1}a$.
\end{MyLem}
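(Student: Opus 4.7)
The plan is to translate the condition involving $F := G_k|_{P_e} G_k|_{P_o}$ into an equivalent condition on iterates of $G_k$ itself. First, using Proposition \ref{Identities} together with bipartiteness (nodes in $P_e$ have all neighbors in $P_o$, and vice versa), I would establish by induction the coordinate formulas $(F^m a){\upharpoonright}P_o = (G_k^{2m-1} a){\upharpoonright}P_o$ and $(F^m a){\upharpoonright}P_e = (G_k^{2m} a){\upharpoonright}P_e$ for every $m\geq 1$. With these in hand, $F^T a = F^{T+1} a$ reduces to the two equalities $(G_k^{2T-1} a){\upharpoonright}P_o = (G_k^{2T+1} a){\upharpoonright}P_o$ and $(G_k^{2T} a){\upharpoonright}P_e = (G_k^{2T+2} a){\upharpoonright}P_e$. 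I would also note that once $F^T a = F^{T+1} a$, iteration gives $F^m a = F^T a$ for every $m\geq T$, so the $F$-orbit of $a$ is genuinely eventually constant.

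For the \emph{only if} direction, assume $(G,k)\in\M$. Every $G_k$-orbit eventually enters a cycle of length $1$ or $2$, so $G_k^m a = G_k^{m+2} a$ for all sufficiently large $m$; feeding this into the coordinate formulas immediately yields $F^m a = F^{m+1} a$ for such $m$.

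For the \emph{if} direction, fix $C\in CYCLE_n(G,k)$ and pick $a\in C$. I would apply the hypothesis to $a$ to obtain some $T_1$ with $F^{T_1} a = F^{T_1+1} a$, and to $G_k a$ to obtain some $T_2$ with $F^{T_2}(G_k a) = F^{T_2+1}(G_k a)$, then take any $T\geq \max(T_1,T_2)$; by eventual constancy both equalities hold at this common $T$. The coordinate formulas applied to $a$ supply $(G_k^{2T} a){\upharpoonright}P_e = (G_k^{2T+2} a){\upharpoonright}P_e$, and applied to $G_k a$---with the consequent one-step shift in $G_k$-indices---they supply $(G_k^{2T} a){\upharpoonright}P_o = (G_k^{2T+2} a){\upharpoonright}P_o$. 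Combining, $G_k^{2T} a = G_k^{2T+2} a$, so $G_k^{2T} a \in C$ is a fixed point of $G_k^2$, forcing $|C|\leq 2$.

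The main obstacle I expect is this backward direction: eventual constancy of the $F$-orbit of a single $a$ only pins down the $P_e$-coordinate at consecutive even times and the $P_o$-coordinate at consecutive odd times, which is not enough to force $G_k^{2T} a = G_k^{2T+2} a$. The key trick is to invoke the hypothesis a second time on $G_k a$; the induced one-step shift of $G_k$-indices converts the $P_o$-equality at consecutive odd times into one at consecutive even times, supplying exactly the missing half. The coordinate formulas proved up front are what make this index-shift mechanism work transparently.
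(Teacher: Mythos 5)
Your proof is correct, and it rests on the same underlying mechanism as the paper's: use the decoupling identities of Proposition \ref{Identities} to translate between iterates of $G_k|_{P_e}G_k|_{P_o}$ and iterates of $G_k$. The paper's own proof, however, is a one-line reference to Proposition \ref{TreeSplit}, which glosses over exactly the subtlety you flag at the end: eventual constancy of the $F$-orbit of a single $a$ only pins down the $P_e$-coordinates at consecutive even $G_k$-times and the $P_o$-coordinates at consecutive odd $G_k$-times, which by itself does not yield $G_k^{2T}a = G_k^{2T+2}a$. Your coordinate formulas $(F^m a){\upharpoonright}P_o = (G_k^{2m-1}a){\upharpoonright}P_o$ and $(F^m a){\upharpoonright}P_e = (G_k^{2m}a){\upharpoonright}P_e$ are correct (the induction goes through because, by bipartiteness, the $P_o$-update reads only $P_e$-coordinates and vice versa), and invoking the hypothesis a second time on $G_k a$ to shift the index parity is a clean, valid way to recover the missing half; an equivalent repair, closer to the paper's surrounding discussion (which quantifies over all 2-Partitions), would be to apply the hypothesis to the swapped partition $(P_e,P_o)$. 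In short: same approach, but your write-up actually supplies a step that the paper's proof-by-reference leaves implicit.
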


\begin{proof}
The statement of this lemma is equivalent to the statement of Proposition \ref{TreeSplit}.
\end{proof}

\begin{MyDef}[Conflict Link]
 Given $(G,a)$ in $\G_n{\times}\A_n$ with $G = (\I_n,E)$, we call a conflict link in $G$ with respect to $a$, an element $ij$ of $E$ such that $a_i$ and $a_j$ are not equal. We denote by $E_{c}^G(a)$ the set of all conflict links in $G$ with respect to $a$.
\end{MyDef}

\begin{MyLem} \label{ConvLemma}
The set $\M$ contains $\Sy_n \cap (G^b_n {\times} \K_n)$ for every positive integer $n$.
\end{MyLem}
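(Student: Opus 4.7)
The plan is to invoke Lemma \ref{GeneralCond}: it suffices to show that for every $(G,k)$ in $\Sy_n \cap (\G^b_n {\times} \K_n)$, every 2-Partition $(P_o,P_e)$ of $\I_n$ with respect to $G$, and every $a$ in $\A_n$, the sequence $a,\,(G_k|_{P_e} G_k|_{P_o})a,\,(G_k|_{P_e} G_k|_{P_o})^2 a,\ldots$ is eventually constant. I would establish this by a potential-function argument, taking as potential the conflict-link count $\Phi(a) = |E_c^G(a)|$ and showing that $\Phi$ strictly decreases under one application of $G_k|_{P_e} G_k|_{P_o}$ until a fixed point is reached.

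Because $G$ is bipartite with sides $P_o$ and $P_e$, every edge has exactly one endpoint in $P_o$, hence
\begin{equation}
\Phi(a) = \sum_{i \in P_o} \bigl|\{\,j \in \N_i : a_j \neq a_i\,\}\bigr|. \nonumber
\end{equation}
First I would analyze the half-step $a' := G_k|_{P_o} a$: only the colors in $P_o$ are altered, while the colors of every neighbor of every $i \in P_o$ (which all lie in $P_e$) remain fixed. Since $(G,k)$ belongs to $\Sy_n$, each $i \in P_o$ has odd degree $d_i$ and threshold $(d_i+1)/2$, so $i$ adopts the \emph{strict} majority color of its neighbors; because $d_i$ is odd, ties cannot occur and the tie-breaking rule plays no role. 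If $i$ does not flip, its contribution to $\Phi$ is unchanged; if $i$ flips from $c$ to $\bar c$, then strictly more than $d_i/2$ neighbors held color $\bar c$ before the update, so the number of conflict edges incident to $i$ drops from that strict majority (before) to the corresponding strict minority (after), a strict decrease. Summing over $i \in P_o$ gives $\Phi(a') \leq \Phi(a)$, with equality if and only if $a' = a$. The same reasoning applied to $G_k|_{P_e}$ yields $\Phi\bigl((G_k|_{P_e} G_k|_{P_o}) a\bigr) < \Phi(a)$ whenever $(G_k|_{P_e} G_k|_{P_o}) a \neq a$.

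Since $\Phi$ takes values in $\{0,1,\ldots,|E(G)|\}$, the non-increasing integer sequence $\Phi\bigl((G_k|_{P_e} G_k|_{P_o})^m a\bigr)$ must stabilize in at most $|E(G)|$ iterations, producing an integer $T$ with $(G_k|_{P_e} G_k|_{P_o})^T a = (G_k|_{P_e} G_k|_{P_o})^{T+1} a$. Lemma \ref{GeneralCond} then yields $(G,k) \in \M$, completing the proof. The one step that requires care is the ``strict-majority implies strict-decrease of the conflict count'' observation; everything else is bookkeeping. In effect, all of the preceding reduction machinery (the bipartite- and symmetric-expansions) is arranged precisely to bring the general problem to this clean majority-rule, bipartite setting, where the conflict-link potential serves as a natural Lyapunov function.
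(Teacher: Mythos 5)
Your proposal is correct and follows essentially the same route as the paper: invoke Lemma \ref{GeneralCond} and use the conflict-link count as a strictly decreasing potential, with the odd-degree majority rule guaranteeing that any node that flips strictly reduces its incident conflict edges. The paper states this argument more tersely; you have simply filled in the bookkeeping (the decomposition of $|E^G_c(\cdot)|$ over $P_o$ and the bound of $|E(G)|$ on the number of decreasing steps), which is consistent with how the quadratic convergence-time bound is later extracted from this same proof.
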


\begin{proof}
Let $(G,k)$ in $\Sy_n \cap (G^b_n {\times} \K_n)$ be given and consider a 2-Partition $(P_o,P_e)$ of $\I_n$. Let $a$ be an action profile in $\A_n$. By Lemma \ref{GeneralCond} it would be enough to show that $(G_k|_{P_e} G_k|_{P_o})^Ta = (G_k|_{P_e} G_k|_{P_o})^{T+1}a$ for some non-negative integer $T$. In that case, it would be enough to prove that for every $b$ in $\A_n$,
\begin{equation}
  G_k|_{P_o} b \neq b \quad \text{ if and only if } \quad |E^G_c(G_k|_{P_o} b)| < |E^G_c(b)|, \nonumber
\end{equation}
and that similarly,
\begin{equation}
  G_k|_{P_e} b \neq b \quad \text{ if and only if } \quad |E^G_c(G_k|_{P_e} b)| < |E^G_c(b)|. \nonumber
\end{equation}
To show that, we state the following: for node $i$ in $\I_n$, $a_i \neq (G_k a)_i$ if and only if the majority of the players in $\N_i$ are not playing $a_i$, or equivalently, if and only if $i$ can decrease the number of conflict links by switching action.
\end{proof}

\begin{MyThm} \label{ConvThm}
For every positive integer $n$, every $(G,k)$ in $\G_n\times\K_n$ and every $C$ in $CYCLE_n(G,k)$, the cardinality of $C$ is less than or equal to $2$.
\end{MyThm}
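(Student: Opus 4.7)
The plan is to observe that this theorem is essentially the capstone synthesis of the combinatorial machinery already developed in this section, and amounts to invoking the two most recent lemmas in sequence. Recall that the set $\M$ was defined precisely as the collection of pairs $(G,k) \in \bigcup_{n\geq 1} \G_n{\times}\K_n$ for which every convergence cycle has cardinality at most $2$. Hence the theorem is equivalent to the assertion that $\M = \bigcup_{n\geq1}\G_n{\times}\K_n$.

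First I would invoke Lemma \ref{SBM2}, which reduces the task of establishing $\M = \bigcup_{n\geq1}\G_n{\times}\K_n$ to the much more restricted task of showing that $\M$ contains $\Sy_n \cap (\G^b_n {\times} \K_n)$ for every positive integer $n$. The work justifying this reduction is already done: the bipartite-expansion (Lemma \ref{BipartiteM}) and the symmetric-expansion (Lemma \ref{SymmetricM}) together guarantee that membership in $\M$ is inherited from the bipartite symmetric model back down to an arbitrary pair.

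Then I would invoke Lemma \ref{ConvLemma}, which establishes precisely this remaining containment: $\Sy_n \cap (\G^b_n {\times} \K_n) \subseteq \M$. Combining the two yields $\M = \bigcup_{n\geq1}\G_n{\times}\K_n$, which is the statement of the theorem.

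There is really no obstacle left at this stage; the difficulty was absorbed earlier into the bipartite- and symmetric-expansion constructions (which allowed reduction to a graph with a majority decision rule and a 2-Partition structure) and into the conflict-link counting argument in Lemma \ref{ConvLemma} (which used monotone decrease of $|E^G_c(\cdot)|$ under the decoupled updates $G_k|_{P_o}$ and $G_k|_{P_e}$ to force the sequential process to reach a fixed point, whence by Lemma \ref{GeneralCond} the parallel dynamics admit no cycle of length exceeding $2$). The theorem itself is thus only a one-line combination of Lemmas \ref{SBM2} and \ref{ConvLemma}.
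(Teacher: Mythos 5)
Your proposal is correct and follows exactly the paper's own argument: the theorem is restated as $\M = \bigcup_{n\geq1}\G_n{\times}\K_n$ and then obtained by combining Lemma \ref{SBM2} (the reduction to bipartite symmetric models via the two expansions) with Lemma \ref{ConvLemma} (the conflict-link argument establishing $\Sy_n \cap (\G^b_n {\times} \K_n) \subseteq \M$). Your additional commentary on where the real work was done matches the structure of the paper's Section 4.3.
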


\begin{proof}
The statement of the theorem is equivalent to $\M = \bigcup_{n\geq1}\G_n{\times}\K_n$. The fact that $\M = \bigcup_{n\geq1}\G_n{\times}\K_n$ follows directly from Lemma \ref{ConvLemma} and Lemma \ref{SBM2}.
\end{proof}

We will proceed to extend the result to the extension model. But before that, we perform a preparatory step to provide more insight on the problem. The primary model exhibits a (positive) monotonicity property in the following sense: a node is more likely to play $\black$ as the number of $\black$ playing neighbors increases. In what follows, we invert this property.

\subsection{Inverted Rules}

For a positive integer $n$, let $(G,k)$ be an element of $\G_n{\times}\K_n$. We define the map $\neg G_k$ from $\A_n$ into $\A_n$ such that
for $a\in A_n$, $(\neg G_ka)_i \neq (G_ka)_i$ for all $i$ in $\I_n$. Then $(\neg G_k a)_i$ is equal to $\black$ if and only if \emph{at most} $k_i - 1$ players are in $a^{-1}(\black)\cap \N_i$. Likewise, $(\neg G_ka)_i$ is equal to $\black$ if and only if \emph{at least} $d_i - k_i + 1$ players are in $a^{-1}(\white)\cap \N_i$. If we extend the definition of $CYCLE_n(G,k)$ to $\neg G_k$, and denote the set of cycles by $CYCLE_n(\neg G_k)$, we get the following theorem:

\begin{MyThm} \label{NegConvThm}
For every positive integer $n$, every $(G,k)$ in $\G_n\times\K_n$ and every $C$ in $CYCLE_n(\neg G_k)$, the cardinality of $C$ is less than or equal to $2$.
\end{MyThm}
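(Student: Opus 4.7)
The plan is to reduce the dynamics of $\neg G_k$ to the primary dynamics already settled by Theorem~\ref{ConvThm}, via a lift to the bipartite double cover. Concretely, I would construct a graph $\widetilde G$ on vertex set $\I_n\sqcup\I_n'$, where $\I_n'=\{i':i\in\I_n\}$ is a disjoint copy of $\I_n$, with edge set $\{ij':ij\in E(G)\}$. On $\widetilde G$ I would place the threshold $\tilde k$ defined by $\tilde k_i=\max\{0,d_i-k_i+1\}$ and $\tilde k_{i'}=k_i$. I would then introduce the injective lift $\alpha\colon\A_n\to\A_{2n}$ by $(\alpha a)_i=a_i$ and $(\alpha a)_{i'}=\psi(a_i)$, where $\psi$ denotes the swap of $\black$ and $\white$ on a single action.

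The heart of the argument is the intertwining identity $\widetilde G_{\tilde k}(\alpha a)=\alpha(\neg G_k\, a)$ for every $a\in\A_n$, which I would check coordinate by coordinate. For $i\in\I_n$, the neighbours of $i$ in $\widetilde G$ lie entirely in $\I_n'$ and carry values $\psi(a_j)$ for $j\in\N_G(i)$; hence $(\widetilde G_{\tilde k}\alpha a)_i=\black$ iff at least $d_i-k_i+1$ neighbours $j\in\N_G(i)$ satisfy $a_j=\white$, iff at most $k_i-1$ neighbours of $i$ play $\black$ in $a$, which is exactly $(\neg G_k a)_i=\black$. The case $i'\in\I_n'$ is symmetric: the neighbours of $i'$ in $\widetilde G$ lie in $\I_n$ and carry the values $a_j$ themselves, giving $(\widetilde G_{\tilde k}\alpha a)_{i'}=(G_k a)_i=\psi(\neg G_k a)_i=(\alpha\neg G_k a)_{i'}$.

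Once the identity is in place, the conclusion is short. If $G$ is non-bipartite then $\widetilde G$ is connected, so Theorem~\ref{ConvThm} applied to $(\widetilde G,\tilde k)$ forces every cycle in $CYCLE_{2n}(\widetilde G,\tilde k)$ to have size at most $2$. If $G$ is bipartite then $\widetilde G$ splits into two connected components each isomorphic to $G$ (with a rearranged threshold), and Theorem~\ref{ConvThm} applied separately to each component forces the combined cycle (an LCM of the component-cycles) to still have size at most $2$. Since $\alpha$ is injective and intertwines $\neg G_k$ with $\widetilde G_{\tilde k}$, the cycle of any $a$ under $\neg G_k$ has the same length as the cycle of $\alpha a$ under $\widetilde G_{\tilde k}$, and hence is of cardinality at most $2$, establishing the theorem.

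The main obstacle I anticipate is precisely the disconnection of $\widetilde G$ in the bipartite case: since $\G_n$ is defined only for connected graphs, Theorem~\ref{ConvThm} cannot be invoked as a single black box. This is a bookkeeping issue rather than a mathematical one, and can be handled by applying the theorem component-wise. A minor secondary nuisance is the edge-case of non-valid nodes (e.g.\ $k_i>d_i+1$), which is why $\tilde k_i$ is capped at $0$ so that the always-$\black$ behaviour in $\widetilde G_{\tilde k}$ matches the always-$\black$ behaviour of $\neg G_k$ on such nodes; apart from this, everything reduces to the verification of the intertwining identity.
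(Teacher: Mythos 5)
Your proposal is correct and follows essentially the same route as the paper: the graph $\widetilde G$ is exactly the paper's bipartite-expansion, your $\tilde k$ matches its $\bar k$ (threshold $d_i-k_i+1$ on the copy carrying $a$, threshold $k_i$ on the copy carrying $\neg a$), and the intertwining identity $\widetilde G_{\tilde k}(\alpha a)=\alpha(\neg G_k a)$ is the paper's computation $G'_{\bar k}\alpha a=(\neg G_k a,(G_k a)\circ\phi)$ followed by an appeal to Theorem~\ref{ConvThm}. Your two edge-case remarks (component-wise application when $G$ is bipartite, and capping $\tilde k_i$ at $0$ so that it stays in $\K_{2n}$ when $k_i>d_i+1$) are, if anything, slightly more careful than the paper's own treatment.
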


\begin{proof}
For a positive integer $n$, let $(G,k)$ be an element of $\G_n{\times}\K_n$. We will simulate $\neg G_k$ using an instance of the primary model. For $i$ in $\I_n$, $(\neg G_k a)_i = \black$ if and only if at most $k_i -1$ players are in $a^{-1}(\black)\cap \N_i$. Since the bipartite-expansion procedure is only defined for non-bipartite graphs, for notational (and technical) convenience, we will assume the graph $G$ to be non-bipartite. The bipartite case may be easily recreated from this proof by considering a 2-Partition directly. Let $(G',k')$ be the bipartite expansion of $(G,k)$, then $G'$ is bipartite, has vertex set $\I_{2n}$ and edge set $E'$. Let us partition $\I_{2n}$ into $\I_n$ and $\J$. We define a bijection $\phi$ from $\J$ into $\I_n$ such that for $j_1$ and $j_2$ in $\J$, 
$j_1\phi(j_2) \in E'$ if and only if $\phi(j_1)\phi(j_2) \in E$. We define $\bar{k}$ such that $\bar{k}_i$ is equal to $k_i'$ for $i$ in $\J$ and equal to $d_i - k'_i +1$ for $i$ in $\I_n$. For $a$ in $\A_n$, we first define $\neg a$ to be the element of $\A_n$ such that $\neg a_i \neq a_i$ for all $i$ in $\I_n$, and then define $\alpha$ to be the map from $\A_n$ into $\A_{2n}$ such that for $a$ in $\A_n$:
\begin{equation}
 \alpha a  = ( a,(\neg a)\circ\phi).\nonumber
\end{equation}
This done, we get:
\begin{equation}
 G'_{\bar{k}} \alpha a = G'_{\bar{k}}( a,(\neg a)\circ\phi) = (\neg G_k a, (G_k a) \circ \phi). \nonumber
\end{equation}
To see this, for $i$ in $\I_n$, $(G'_{\bar{k}} \alpha a)_i$ is $\black$ if and only if at least $d_i - k_i +1$ players are in $\phi^{-1}((\neg a)^{-1}(\black))\cap\N_{G'}(i)$. Or equivalently, at least $d_i - k_i +1$ are in $\phi^{-1}(a^{-1}(\white)) \cap \N_{G'}(i)$. But at least $d_i - k_i +1$ are in $\phi^{-1}(a^{-1}(\white)) \cap \N_{G'}(i)$ if and only if at least $d_i - k_i +1$ are in $a^{-1}(\white) \cap \N_{G}(i)$ since $\N_{G}(i) = \phi(\N_{G'}(i))$. It then follows that $(G'_{\bar{k}} \alpha a)_i$ is $\black$ if and only if at most $k_i - 1$ players are in $a^{-1}(\black) \cap \N_{G}(i)$.

Let $C = \{a_1,\cdots,a_m\}$ be a cycle in $CYCLE_n(\neg G_k)$, then $\neg C = \{\neg a_1,\cdots,\neg a_m\}$ is a cycle in $CYCLE_n(G_k)$ and so 
$\{(a_1,(\neg a_1)\circ\phi), \cdots, (a_m,(\neg a_m)\circ\phi) \}$ is a cycle in $CYCLE_{2n}(G_{\bar{k}})$. Therefore, $m$ is less than or equal to $2$.
\end{proof}

This `inverted' rule may be viewed as having negative weights equal to $-1$ along all edges then making the thresholds negative. We proceed to study the extension model.

\subsection{Convergence Cycles for the Extension Model}

We extend the convergence cycle result to the extension model in three steps. We initially assume that no self loops are present. We extend the result first to the case where weights are allowed to be equal to either $1$ or $-1$ and then to arbitrary weights. We finally allow weighted self-loops to be present, and prove the theorem.\\

We begin by assuming that no self-loops are present. We then expand $\G_n$ to incorporate the weights. We define $\W_n$ to be the set of connected graph over the vertex set $\I_n$ with weighted edges. The set $\W_n$ is then isomorphic to the set of symmetric matrices in $\mathbb{R}^{n{\times}n}$ having zero elements along the diagonal. Given a weighted graph $W$ of $\W_n$, we denote the set of edges by $E(W)$, the weight on edge $ij$ in $E(W)$ is then denoted by $w_{ij}$. As a natural extension of previous definitions, given a pair $(W,q)$ we define the map $W_q$ from $\A_n$ into $\A_n$ such that $(W_q a)_i = \black$ if and only if $\sum_{j \in \N_i} w_{ij}\One_{ \{\black \} }(a_{j}) \geq q_i \sum_{j \in \N_i} w_{ij}$.\\

We define $\hat{\K}_n$ to be the set of maps from $\I_n$ into the set of integers $\mathbb{Z}$. Following a similar reason as in Section 3, without any loss in generality, we substitute the set $\Q_n$ with the set $\hat{\K}_n$ as defined; the thresholds can be negative in the extension model. For a pair $(W,k)$ in $\W_n \times \hat{\K}_n$, the map $W_k$ extends naturally from $W_q$ for the pair $(W,q)$ in $\W_n \times \Q_n$, in the way that for an action configuration $a$ in $\A_n$, $(W_k a)_i = \black$ if and only if $\sum_{j \in \N_i} w_{ij}\One_{ \{\black \} }(a_{j}) \geq k_i$. Similarly, we define $CYCLE_n(W,k)$ to extends naturally from Section 3.

We now restrict the weights to be in $\{-1,1\}$. Let $(W,k)$ be a pair in $\W_n{\times}\hat{\K}_n$, we define, for each $i$ in $\I_n$, $\N^+_i$ and $\N_i^-$ to be a partition of $\N_i$ such that $w_{ij}$ is positive only if $j$ belongs $\N^+_i$. It then follows that $(W_k a)_i$ is $\black$ if and only if $|a^{-1}(\black)\cap\N_i^+| - |a^{-1}(\black)\cap\N_i^-| \geq k_i$.

\begin{MyThm} \label{MixedThm}
 For any positive integer $n$, and any pair $(W,k)$ in $\W_n {\times} \hat{\K}_n$ where all weights are in $\{-1,1\}$, every cycle $C$ in $CYCLE_n(W,k)$ has cardinality less than or equal to 2.
\end{MyThm}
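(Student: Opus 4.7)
The plan is to reduce Theorem \ref{MixedThm} directly to Theorem \ref{ConvThm} by constructing an unsigned pair $(G'',k'')$ whose primary dynamics simulate the signed dynamics of $(W,k)$ along an injective lift. For every $i \in \I_n$ set $\N_i^+ = \{j \in \N_i : w_{ij} = +1\}$ and $\N_i^- = \{j \in \N_i : w_{ij} = -1\}$. The key algebraic observation is that the signed firing condition
\[
|\N_i^+ \cap a^{-1}(\black)| - |\N_i^- \cap a^{-1}(\black)| \geq k_i
\]
can be rewritten, by adding $|\N_i^-|$ to both sides, as an unsigned count: the number of $j \in \N_i^+$ with $a_j = \black$ plus the number of $j \in \N_i^-$ with $a_j = \white$ must be at least $k_i + |\N_i^-|$. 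This suggests decoupling the signs by duplicating every node into one copy that reads colors as-is and one that reads them inverted.

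Concretely, I would build $G''$ on the vertex set $\I_n \times \{+,-\}$ as follows: for each edge $\{i,j\}$ of $W$ with $w_{ij} = +1$ add the parallel edges $\{(i,+),(j,+)\}$ and $\{(i,-),(j,-)\}$, and for each edge with $w_{ij} = -1$ add the crossed edges $\{(i,+),(j,-)\}$ and $\{(i,-),(j,+)\}$. Set the thresholds $k''_{(i,+)} = \max\{0,\, k_i + |\N_i^-|\}$ and $k''_{(i,-)} = \max\{0,\, |\N_i^+| - k_i + 1\}$, and define the lift $\alpha : \A_n \to \A_{2n}$ by $(\alpha a)_{(i,+)} = a_i$ and $(\alpha a)_{(i,-)} = \neg a_i$. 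A direct count shows that the number of $\black$ neighbors of $(i,+)$ in $\alpha a$ equals $|\N_i^+ \cap a^{-1}(\black)| - |\N_i^- \cap a^{-1}(\black)| + |\N_i^-|$, so the primary rule at $(i,+)$ fires exactly when the signed rule does at $i$; the analogous count at $(i,-)$ produces $\neg(W_k a)_i$, matching $\alpha$ on the second copy. This establishes the core identity
\[
G''_{k''}(\alpha a) = \alpha(W_k a).
\]

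Since $\alpha$ is injective, any cycle $C = \{a_1, \ldots, a_m\}$ in $CYCLE_n(W,k)$ lifts to distinct profiles $\alpha a_1, \ldots, \alpha a_m$ cycling under $G''_{k''}$ and therefore contained in a single cycle of $CYCLE_{2n}(G'',k'')$; Theorem \ref{ConvThm} then forces $m \leq 2$. The main difficulty I anticipate is bookkeeping rather than depth: one must verify the threshold formulas and check that the clipping at $0$ precisely encodes the degenerate cases where the signed rule at $i$ is constant (which become non-valid nodes in the primary model). A further minor nuisance is that $G''$ need not be connected, e.g.\ if $W$ has no negative edges then $G''$ splits into two disjoint unsigned copies of the underlying graph; this is harmless since Theorem \ref{ConvThm} applies to each connected component of $G''$ separately, and a cycle in a disjoint union has length at most the least common multiple of the component cycle lengths, hence at most $2$.
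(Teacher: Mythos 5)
Your proposal is correct and is essentially the paper's own argument: the same signed double-cover construction (parallel edges within layers for $+1$ weights, crossed edges between layers for $-1$ weights), the same thresholds $k_i+|\N_i^-|$ and $|\N_i^+|-k_i+1$ on the two copies, the same lift $\alpha a = (a,\neg a)$ satisfying $G''_{k''}\alpha a = \alpha(W_k a)$, and the same injectivity argument invoking Theorem \ref{ConvThm}. Your explicit handling of possible disconnectedness of $G''$ is a small point the paper glosses over, and your clipping of thresholds at $0$ plays the role of the paper's ``all nodes valid'' assumption.
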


\begin{proof}
Given a pair $(W,k)$ in $\W_n {\times}\hat{\K}_n$ where all weights are in $\{-1,1\}$, we construct a graph $G'$ over $\I_{2n}$ with edge set $E'$ as follows. We partition $\I_{2n}$ into $\I_n$ and $\J$, and set up a bijection $\phi$ from $\J$ into $\I_n$. We assume that at least two weights are of opposite sign, otherwise nothing needs to be done. We let $E^+$ and $E^-$ be a partition of $E(W)$ such that the edges in $E^+$ and $E^-$ have positive and negative weights respectively. We then let $E'$ contain both $ij$ and $\phi(i)\phi(j)$ for all $ij$ in $E^+$, and $i\phi(j)$ for all $ij$ in $E^-$. Given a node $i$, we denote by $d_i^+$ and $d_i^-$ to be the number of edges having $i$ as an endpoint in $E^+$ and $E^-$ respectively. In this sense, $d_i = d_i^+ + d_i^-$. We construct $k'$ to be an element of $\hat{\K}_{2n}$ such that $k'_i$ equals $k_i + d^-_i$ for $i$ in $\I_n$ and equals $d^+_i - k_i + 1$ for $i$ in $\J$. Without any loss of generality, we assume that all nodes are valid. Therefore, $-d_i^- \leq k_i \leq d^+_i$ for all $i$ in $\I_n$.

For $a$ in $\A_n$, we first define $\neg a$ to be the element of $\A_n$ such that $\neg a_i \neq a_i$ for all $i$ in $\I_n$, and then define $\alpha$ to be the map from $\A_n$ into $\A_{2n}$ such that for $a$ in $\A_n$:
\begin{equation}
 \alpha a  = ( a,(\neg a)\circ\phi).\nonumber
\end{equation}
This done:
\begin{equation}
 G'_{k'} \alpha a = G'_{k'}( a,(\neg a)\circ\phi) = (W_k a, (\neg W_k a) \circ \phi).\nonumber
\end{equation}
To see this, for $i$ in $\I_n$, $(G'_{k'} \alpha a)_i = \black$ if and only if at least $k_i + d^-_i$ players are in $(\alpha a)^{-1}(\black)\cap\N^{G'}_i$, or equivalently: 
\begin{equation}
|a^{-1}(\black)\cap\N_i^+| + |a^{-1}(\white)\cap\N_i^-| \geq k_i + d_i^-.\nonumber
\end{equation}
But $|a^{-1}(\white)\cap\N_i^-| = d_i^- - |a^{-1}(\black)\cap\N_i^-|$. Therefore, $(G'_{k'} \alpha a)_i = \black$ if and only if $(W_k a)_i = \black$.

Similarly, for $i$ in $\J$, $(G'_{k'} \alpha a)_i = \black$ if and only if at least $d^+_i - k_i + 1$ players are in $(\alpha a)^{-1}(\black)\cap\N^{G'}_i$, or equivalently
\begin{equation}
|a^{-1}(\white)\cap\N_i^+| + |a^{-1}(\black)\cap\N_i^-| \geq d^+_i - k_i + 1.\nonumber
\end{equation}
The previous proposition is also equivalent to:
\begin{equation}
d^+_i - |\neg a^{-1}(\black)\cap\N_i^+| - |a^{-1}(\black)\cap\N_i^-| < k_i.\nonumber
\end{equation}
But, $d^+_i - |\neg a^{-1}(\black)\cap\N_i^+| = |a^{-1}(\black)\cap\N_i^+|$. It follows that $(G'_{k'} \alpha a)_i = (\neg W_k a)_i$.\\

Let $C = \{a_1,\cdots,a_m\}$ be a cycle in $CYCLE_n(W,k)$, then $\{(a_1,(\neg a_1)\circ\phi), \cdots, (a_m,(\neg a_m)\circ\phi) \}$ is a cycle in $CYCLE_{2n}(G'_{k'})$. Therefore, $m$ is less than or equal to $2$.
\end{proof}

We now extend the result to the extension model allowing non-equal weights on edges but no self-loops. We put no restrictions on the weights in $\W_n$ other than being non-zero reals. Without any loss of generality, we may assume the weights to be rational (see \cite{MyThesis}). In this case, we can multiply all weights by a common factor and have them be integers. This said, without any loss of generality, we assume the weights to be integers. 

\begin{MyThm} \label{ExtendedThm}
 For any positive integer $n$, and any pair $(W,k)$ in $\W_n {\times} \hat{\K}_n$, every cycle $C$ in $CYCLE_n(W,k)$ has cardinality less than or equal to 2.
\end{MyThm}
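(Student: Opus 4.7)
The plan is to reduce the integer-weight case to the $\{-1,+1\}$-weight case already established in Theorem~\ref{MixedThm}. Given $(W,k)\in\W_n\times\hat{\K}_n$ with integer weights, I will build a pair $(W',k')\in\W_{n'}\times\hat{\K}_{n'}$ whose weights all lie in $\{-1,+1\}$, together with an injective map $\alpha:\A_n\to\A_{n'}$ satisfying the intertwining identity $W'_{k'}\circ\alpha=\alpha\circ W_k$. Granted such a construction, any cycle $C=\{a_1,\ldots,a_m\}$ in $CYCLE_n(W,k)$ is carried bijectively onto the set $\{\alpha a_1,\ldots,\alpha a_m\}$, which is then a cycle of the same cardinality in $CYCLE_{n'}(W',k')$. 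Theorem~\ref{MixedThm} bounds its length by $2$, forcing $m\leq2$.

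For the construction, set $N=\max_{ij\in E(W)}|w_{ij}|$ and, for each node $i\in\I_n$, introduce $N$ fresh copies $i^{(1)},\ldots,i^{(N)}$ as vertices of $W'$. For each edge $ij\in E(W)$, insert a $|w_{ij}|$-regular bipartite graph between the copy-classes $\{i^{(\ell)}\}_{\ell=1}^{N}$ and $\{j^{(\ell)}\}_{\ell=1}^{N}$, giving every one of its edges the weight $\mathrm{sgn}(w_{ij})$. Set $k'_{i^{(\ell)}}=k_i$ for each copy, and define $\alpha(a)(i^{(\ell)})=a(i)$ for every $a\in\A_n$.

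The intertwining identity then follows by direct computation: for each copy $i^{(\ell)}$ and each $W$-neighbor $j$ of $i$, the vertex $i^{(\ell)}$ has exactly $|w_{ij}|$ neighbors in $W'$ drawn from the copy-class of $j$, each contributing weight $\mathrm{sgn}(w_{ij})$. Since all copies of $j$ share the common action $a(j)$ under $\alpha(a)$, the weighted sum at $i^{(\ell)}$ collapses to $\sum_{j\in\N^W_i}|w_{ij}|\,\mathrm{sgn}(w_{ij})\,\One_{\{\black\}}(a_j)=\sum_{j\in\N^W_i}w_{ij}\One_{\{\black\}}(a_j)$, which coincides with the weighted sum at $i$ in $(W,k)$. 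Comparing against the common threshold $k_i$ gives $(W'_{k'}\alpha a)(i^{(\ell)})=(W_k a)(i)=\alpha(W_k a)(i^{(\ell)})$, and injectivity of $\alpha$ is immediate.

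The main technical obstacle is verifying that $(W',k')$ truly lies in $\W_{n'}\times\hat{\K}_{n'}$, chiefly the simplicity and connectedness of $W'$. Simplicity is automatic since $N\geq|w_{ij}|$ so that a simple $|w_{ij}|$-regular bipartite graph between two $N$-sets is realizable. Connectedness of $W'$ would be secured by choosing each inserted bipartite subgraph to be connected when $|w_{ij}|\geq1$ (e.g. by requiring it to contain a spanning $2N$-cycle, which is possible once $N$ is enlarged slightly), so that connectedness of the underlying $W$ lifts through the copies. Once these bookkeeping issues are discharged, Theorem~\ref{MixedThm} applied to $(W',k')$ immediately yields the desired bound.
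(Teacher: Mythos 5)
Your proposal is correct and follows essentially the same route as the paper: both reduce to Theorem~\ref{MixedThm} by replicating each node into copies, realizing each integer weight $w_{ij}$ as $|w_{ij}|$ parallel unit-weight edges between copy-classes, and pushing cycles forward through an injective intertwining map $\alpha$ that replicates the action profile. The only differences are bookkeeping: the paper uses $\prod_{e}|w_e|$ copies indexed by a product space (which makes the $|w_{ij}|$-regular interconnection canonical), whereas you use $\max_{ij}|w_{ij}|$ copies with an arbitrary regular bipartite gadget — a more economical construction, though note your connectedness fix via a spanning cycle cannot work for edges with $|w_{ij}|=1$ (a $1$-regular bipartite graph is a perfect matching); this is harmless since the cycle-length bound applies component-wise.
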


\begin{proof}  
  Given a pair $(W,k)$ in $\W_n {\times} \hat{\K}_n$ with integer weights, we construct an instance $(W',k')$ in $\W_{n'}{\times}\hat{\K}_{n'}$ with weights in $\{-1,+1\}$. We define $N$ to be equal to $\Pi_{ij \in E} w_{ij} $, and we consider the set of players $\I_{Nn}$. We now partition $\I_{Nn}$ into sets of $n$ players. Given the size of the object in hand, it would be appropriate to identify the partitions with the following $|E|$-dimensional space:
  \begin{equation}
  		\Omega = \Pi_{e \in E}[|w_e|], \nonumber
  \end{equation}
where $[|w_e|] = \{1,\cdots,|w_e|\}$.
Specifically, we define a map $P$ from $\Omega$	into $2^{\I_{Nn}}$ such that $|P(1,\cdots,1)| = \I_n$, $|P(\omega)| = n$ for all $\omega$, and $P(\omega) \cap P(\omega') = \emptyset$ for $\omega \neq \omega'$. Then the collection $\{P(\omega) : \omega \in \Omega\}$ is a partition of $\I_{Nn}$. For each $\omega$, we define a bijection $\phi_{\omega}$ from $\I_n$ into $P(\omega)$, such that $\phi_{(1,\cdots,1)}$ is the identity map. We now define a graph $G$ over the vertex set $\I_{Nn}$ with edge set $E = E^+ \cup E^-$. Then for each $ij$ that belongs to $E(W)$, for all $\omega$ in $\Pi_{e \in E\backslash{ij}}[|w_e|]$, $m$ and $m'$ in $[|w_{ij}|]$ (non-necessarily distinct), we let $\phi_{(\omega,m)}(i)\phi_{(\omega,m')}(j)$ belong to $E^+$ if $w_{ij}$ is positive and belong to $E^-$ if $w_{ij}$ is negative. We give all edges in $E^+$ and $E^-$ weights of $+1$ and $-1$ respectively. We finally construct a threshold distribution $k'$ on $\I_n$ in such a way that $k'$ equal $k \circ \phi_{\omega}^{-1}$ on $P(\omega)$ for all $\omega$ in $\Omega$.

To put a note on the construction, given any node $i$ in $\I_n$, suppose $j$ is a neighbor of $i$, with weight $w_{ij}$ on the edge. For any $\omega$ in $\Omega$, $\phi_{\omega}(i)$ has threshold $k_i$ and is connected to exactly $|w_{ij}|$ nodes having thresholds $k_j$. This large graph is interconnected in such a way that if we restrict the space of action configurations accordingly, the update are locally equivalent.

To this end, let us define the extension map $\alpha$ from $\A_n$ into $\A_{Nn}$ in such a way that for all $a$ in $\A_n$, for all $\omega$ and $i$ in $P(\omega)$,
\begin{equation}
 (\alpha a)_i = a_{\phi_\omega^{-1}(i)}.  \nonumber
\end{equation}
 It can be checked that for all $a$ in $\A_n$:
\begin{equation}
  \alpha (W_k a)  = W'_{k'} (\alpha a) \nonumber
\end{equation}
The map $\alpha$ is injective, and following the same reasoning as in the proof of Lemma \ref{BipartiteM}, the result then follows since any cycle in $CYCLE_{Nn}(W',k')$ has cardinality at most equal to 2 by Theorem \ref{MixedThm}. 
\end{proof}

We finally allow weighted self-loops in the network. We define $\hat{\W}_n$ to be the set of connected graph with self-loops allowed over the vertex set $\I_n$ with weighted edges. The set $\hat{\W}_n$ is then isomorphic to the set of symmetric matrices in $\mathbb{R}^{n{\times}n}$. Again, given a weighted graph $\hat{W}$ of $\hat{\W}_n$, we denote the set of edges by $E(\hat{W})$, the weight on edge $ij$ in $E(\hat{W})$ is then denoted by $w_{ij}$. As a natural extension of previous definitions, given a pair $(\hat{W},q)$ in $\hat{\W}_n{\times}\Q_n$ we define the map $\hat{W}_q$ from $\A_n$ into $\A_n$ such that $(\hat{W}_q a)_i = \black$ if and only if $\sum_{j \in \N_i} w_{ij}\One_{ \{\black \} }(a_{j}) \geq q_i \sum_{j \in \N_i} w_{ij}$ where $\N_i$ may possibly contain player $i$.\\

Without any loss of generality, we may assume the weights to be integers. Moreover, we substitute the set $\Q_n$ with the set $\hat{\K}_n$ as defined and therefore, given a pair $(\hat{W},k)$ in $\hat{\W}_n{\times}\hat{\K}_n$, we get that $(\hat{W}_k a)_i = \black$ if and only if $\sum_{j \in \N_i} w_{ij}\One_{ \{\black \} }(a_{j}) \geq k_i$ where $\N_i$ may possibly contain player $i$.

\begin{MyThm} \label{ExtendedThmWithSelfLoops}
 For any positive integer $n$, and any pair $(\hat{W},k)$ in $\hat{\W}_n {\times} \hat{\K}_n$, every cycle $C$ in $CYCLE_n(\hat{W},k)$ has cardinality less than or equal to 2.
\end{MyThm}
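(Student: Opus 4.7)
My plan is to reduce this theorem to Theorem \ref{ExtendedThm} (the no-self-loop case) via a bipartite-doubling construction in the same spirit as the bipartite-expansion used in Lemma \ref{BipartiteM}. As before, I would first assume without loss of generality that all weights in $\hat{W}$ are integers. Given $(\hat{W}, k)$ in $\hat{\W}_n \times \hat{\K}_n$, I would construct a pair $(W', k')$ with $W'$ in $\W_{2n}$ (so without self-loops) as follows. Partition $\I_{2n}$ into two copies $\I_n$ and $\J$ and fix a bijection $\phi : \J \to \I_n$. For each edge $ij$ of $E(\hat{W})$, possibly with $i = j$ in the case of a self-loop, add the edges $\{i, \phi^{-1}(j)\}$ and $\{j, \phi^{-1}(i)\}$ to $E(W')$, each carrying weight $w_{ij}$; when $i = j$ these two prescriptions coincide and produce the single edge $\{i, \phi^{-1}(i)\}$ of weight $w_{ii}$, which is precisely how the self-loop at $i$ becomes an ordinary link between $i$ and its copy. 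Set $k'_i = k_i$ on $\I_n$ and $k'_i = k_{\phi(i)}$ on $\J$.

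The central step is to verify the simulation identity. Let $\alpha : \A_n \to \A_{2n}$ be the injection $\alpha a = (a, a \circ \phi)$. I would then check that $W'_{k'}(\alpha a) = \alpha(\hat{W}_k a)$ for every $a \in \A_n$. This reduces to a local computation: for $i \in \I_n$, the weighted sum over the neighborhood $\N_{W'}(i) \subset \J$ satisfies
\begin{equation}
\sum_{j' \in \N_{W'}(i)} w'_{ij'} \One_{\{\black\}}((\alpha a)_{j'}) = \sum_{j \in \N_{\hat{W}}(i)} w_{ij} \One_{\{\black\}}(a_j), \nonumber
\end{equation}
where the self-loop contribution $w_{ii}\One_{\{\black\}}(a_i)$ is recovered correctly because $(\alpha a)_{\phi^{-1}(i)} = a_i$. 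The symmetric identity holds for nodes in $\J$. Once this intertwining is in place, any cycle $\{a_1, \ldots, a_m\}$ in $CYCLE_n(\hat{W}, k)$ lifts under the injection $\alpha$ to a cycle $\{\alpha a_1, \ldots, \alpha a_m\}$ in $CYCLE_{2n}(W', k')$, and Theorem \ref{ExtendedThm} forces $m \leq 2$.

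The main technical nuance, rather than a serious obstacle, is that the constructed $W'$ need not be connected: if $\hat{W}$ stripped of its self-loops is already bipartite, the doubling splits the graph into two weakly-connected components. I would handle this either by extending the ambient framework to disconnected weighted graphs (the dynamics decouple across weakly-connected components, so all earlier convergence results apply componentwise) or, more concretely, by invoking Theorem \ref{ExtendedThm} separately on each connected component of $W'$. Either way, the cycle-length bound transfers cleanly back to $(\hat{W}, k)$, completing the argument.
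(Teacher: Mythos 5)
Your proposal is correct and follows essentially the same route as the paper: double the vertex set, convert the self-loop at each node $i$ into an ordinary weighted edge between $i$ and its duplicate $\phi^{-1}(i)$, transport the dynamics through the injection $\alpha a = (a, a\circ\phi)$, and invoke Theorem \ref{ExtendedThm}. The only divergence is in where the non-loop edges go: the paper keeps an intact copy of the loop-free graph on each of $\I_n$ and $\J$ and joins the two copies solely through the self-loop edges (so $W'$ is automatically connected whenever a self-loop exists), whereas your full bipartite double cover sends every edge across the bipartition, which is why you need the closing remark about components --- a remark that is correct but only bites in the degenerate case where $\hat{W}$ has no self-loops at all, i.e.\ when $\hat{W}$ already lies in $\W_n$ and Theorem \ref{ExtendedThm} applies directly.
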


\begin{proof}
 Given a pair $(\hat{W},k)$ in $\hat{\W}_n {\times} \K_n$, we construct a pair as follows $(W',k')$ in $\W_{2n} {\times} \K_{2n}$ as follows. Partition $\I_{2n}$ into $\I_n$ and $\J$, set up a bijection $\phi$ from $\J$ into $\I_n$ and let $E(W')$ contain $ij$ and $\phi(i)\phi(j)$ for all distinct players $i$ and $j$, and $i\phi(i)$ whenever node $i$ has a self-loop. We set $w'_{ij} = w'_{\phi(i)\phi(j)} = w_{ij}$ for all $i$ and $j$ distinct and $w'_{i\phi(i)} = w_{ii}$ for all $i$. Finally, $k'$ equals $k$ on $\I_n$ and equals $k\circ \phi$ on $\J$. In this case, each cycle in $CYCLE_n(\hat{W},k)$ is of cardinality at most $2$ if each cycle in $CYCLE_{2n}(W',k')$ is of cardinality at most $2$. The result then follows from Theorem \ref{ExtendedThm}.
\end{proof}

We proceed to discuss convergence time in the following section.

\section{On Convergence Time}

We proceed to study the following problem: given a graph $G$ in $\G_n$, a threshold distribution $k$ in $\K_n$ and an initial action configuration $a$, how many times do we need to iteratively apply $G_k$ on $a$ to reach some cycle $C$ in $CYCLE_n(G,k)$? Recall that for every positive integer $n$, and every $(G,k,a)$ in $\G_n{\times}\K_n{\times}\A_n$, we define $\delta_n(G,k,a)$ to be equal to the smallest non-negative integer $T$ such that there exists a cycle $C$ in $CYCLE_n(G,k)$ and $b$ in $C$ with $G_k^Ta = b$. The quantity $\delta_n(G,k,a)$ denotes to the minimal number of iterations needed until a given action configuration $a$ reaches a cycle, when iteratively applying $G_k$. We refer to $\delta_n(G,k,a)$ as the \emph{convergence time} from $a$ under $G_k$. We begin by showing that there exists some positive integer $c$, such that for every positive integer $n$, and every $(G,k,a)$ in $\G_n{\times}\K_n{\times}\A_n$, the convergence time $\delta_n(G,k,a)$ is less than or equal to $cn^2$. We then proceed to improve the bound to be linear in the size of the network for some graph structure cases. Formally, for all positive integers $n$, and every $(G,k,a)$ in $\G_n{\times}\K_n{\times}\A_n$ where $G$ is a cycle graph (with an even number of nodes) or a tree, the convergence time $\delta_n(G,k,a)$ is less than or equal to $n$. Such a result also holds if the network structure is a complete graph, we refer the reader to \cite{MyThesis}. We will only consider the primary model in this section.

\subsection{Quadratic Time over General Graphs}

The proofs of Lemmas \ref{BipartiteM}, \ref{OneStepM} and \ref{SymmetricM} illustrate how we can simulate a pair $(G,k)$ on its symmetric-expansion $(G',k')$ and its bipartite-expansion $(G'',k'')$ by restricting the set of initial action profiles on the expansions.  It also follows (from the proofs) that any upper-bound on the convergence time of $(G',k')$ and $(G'',k'')$ is also an upper-bound on the convergence time of $(G,k)$. For the case of symmetric models where the graph is bipartite, the proof of Lemma \ref{ConvLemma} shows that the convergence time is upper-bounded by the number of initial conflict links in the graph i.e.\ (at worst) by a function quadratic in the size of the graph. As for any pair $(G,k)$, a symmetric-expansion followed by a bipartite-expansion cannot add `too many' edges. This yields a convergence time for any $(G,k)$ bounded by a function quadratic in the size of $G$.

\begin{MyThm} For some positive integer $c$, for all positive integers $n$, and every $(G,k,a)$ in $\G_n{\times}\K_n{\times}\A_n$, the convergence time $\delta_n(G,k,a)$ is less than or equal to $cn^2$.
\end{MyThm}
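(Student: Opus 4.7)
The plan is to build on the reductions developed in the earlier lemmas of this section and to track two quantities carefully: the size blow-up caused by the symmetric- and bipartite-expansions, and the number of parallel time steps needed to reach a cycle in the fully reduced model. First, I would observe that the simulation maps $\alpha$ constructed in the proofs of Lemmas~\ref{OneStepM}, \ref{SymmetricM}, and \ref{BipartiteM} are injective and satisfy $\alpha(G_k a) = G'_{k'}(\alpha a)$, so one step of the original dynamics corresponds to exactly one step of the expanded dynamics, and distinct configurations remain distinct. Combined, these yield $\delta_n(G,k,a) = \delta_{n'}(G',k',\alpha a)$ for each expansion, so it suffices to bound the convergence time in the symmetric bipartite case, provided the total expansion has polynomial size.

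Next, I would bound the size of the fully expanded pair $(G'',k'')$ obtained by composing a bipartite-expansion and a symmetric-expansion (in either order, by Proposition~\ref{CombineExp}). Starting from $(G,k)$ with $|E| \le \binom{n}{2}$ edges, one one-step symmetric expansion at pivot $i$ adds $3(d_i{+}1)$ vertices and $3(d_i{+}1)$ edges. Only the original non-symmetric nodes require pivoting, since each newly attached Y-graph node is already symmetric by construction, so the total number of edges added by the full symmetric expansion is at most $3\sum_{i}(d_i{+}1) \le 3(2|E|+n) = O(n^2)$. The bipartite-expansion then merely doubles the edge count, so $|E(G'')| = O(n^2)$.

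For the final step, I would adapt the Lyapunov argument from the proof of Lemma~\ref{ConvLemma}. Fix a 2-Partition $(P_o,P_e)$ of the vertex set of $G''$. That proof shows each non-trivial application of $G''_{k''}|_{P_o}$ or $G''_{k''}|_{P_e}$ strictly decreases the conflict link count $|E_c^{G''}(\cdot)|$, so some $T \le |E(G'')|$ yields that $(G''_{k''}|_{P_e}G''_{k''}|_{P_o})^T \alpha a$ is a fixed point of the sequential product, and symmetrically for $(G''_{k''}|_{P_o}G''_{k''}|_{P_e})^T$. A short induction on $T$, using bipartiteness together with Proposition~\ref{Identities}, establishes the identity
\begin{equation}
((G''_{k''}|_{P_e}G''_{k''}|_{P_o})^T \alpha a){\upharpoonright}P_e = ((G''_{k''})^{2T}\alpha a){\upharpoonright}P_e, \nonumber
\end{equation}
together with its $P_o$ analogue. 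Once both sequential processes have stabilized, $(G''_{k''})^{2T}\alpha a = (G''_{k''})^{2T+2}\alpha a$, so $\alpha a$ reaches a cycle in at most $2T \le 2|E(G'')| = O(n^2)$ parallel steps, and the step-by-step simulation pulls this back to $\delta_n(G,k,a) = O(n^2)$.

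The main obstacle is the last step: converting the sequential Lyapunov bound into a bound on the number of parallel applications of $G_k$. The bridging identity above, although routinely verifiable from Proposition~\ref{Identities} and the fact that updates on $P_e$ depend only on values on $P_o$, is what makes the translation quantitative rather than merely qualitative, and one must also check that the identity used for $P_o$ and for $P_e$ can both be driven to fixed points simultaneously within the same $O(n^2)$ budget—a consequence of the fact that each non-trivial half-step, on either partition class, drops the same global conflict link count.
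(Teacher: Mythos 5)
Your proposal is correct and follows essentially the same route as the paper's proof: simulate $(G,k)$ on its symmetric- and bipartite-expansions, bound the expanded edge set by $O(|E|+n)=O(n^2)$, and use the conflict-link count as a Lyapunov function for the decoupled sequential process from Lemma~\ref{ConvLemma}, pulling the bound back through the step-preserving simulations. If anything, you are more explicit than the paper about the sequential-to-parallel translation via the zig-zag identity of Proposition~\ref{Identities}, which the paper leaves implicit.
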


\begin{proof}
Given a positive integer $n$, let $(G,k)$ be a point in $\G_n\times\K_n$, let $(G',k')$ be the symmetric-expansion of $(G,k)$ in $\Sy_{n'}$ , and let $(G'',k'')$ be the bipartite-expansion of $(G',k')$ in $\Sy_{2n'}$. We have not formally defined bipartite-expansion for bipartite graphs and symmetric-expansion for elements of $\Sy_n$. Although an extension is trivial, we will assume that $G$ is non-bipartite and $(G,k)$ does not belong to $\Sy_n$. The other cases then follow directly from the proof by skipping the expansion steps.
We have the following fact:
 \begin{equation}
 		\delta_n{{(G,k)}} \leq \delta_{n'}{{(G',k')}} \leq \delta_{2n'}{{(G'',k'')}}. \nonumber
 \end{equation}
Moreover, we have that:
\begin{equation} 
	\delta_{2n'}{{(G'',k'')}} \leq \max_{a \in \A_{2n'}} {|E_{c}^{G''}(a)|}.\nonumber
\end{equation}
The previous statement follows from the fact that for any 2-Partition $(P_o,P_e)$ of $\I_{2n'}$ with respect to $G''$, $G''_k|_{P_o} b \neq b$  if and only if  $E_c^{G''}(G''_k|_{P_o} b) < E_c(b)$ and $G''_k|_{P_e} b \neq b$ if and only if  $E_c^{G''}(G''_k|_{P_e} b) < E_c(b)$.
Additionally for all $a$ in $\A_{2n'}$ we have:
\begin{equation}
	|E_{c}^{G''}(a)| \leq |E''| \leq {2|E'|} \leq 2[|E| + 3\sum_{i\in \I_{n}}d_i + 1], \nonumber
\end{equation}
where $E'$ and $E''$ denotes the set of edges of $G'$ and $G''$ respectively. 
Finally:
\begin{equation}
  \sum_{i\in \I_{n}}d_i+1 = 2|E| + n. \nonumber
\end{equation}
The result follows.
\end{proof}

The constant $c$ in the theorem statement can be optimized, but it is of no interest. Instead it would be interesting to prove a bound below quadratic. We may notice from the proof above is that if the graph has bounded degrees, the convergence time is less than a linear function of the size of the network.\\

We turn back to the cases of cycle graphs and trees to derive tighter upper bounds on the convergence time.

\subsection{Linear Time over Cycle Graphs}

We restrict the analysis in the subsection to \emph{even} positive integers $n$. In this case, every 2-regular connected graph in $\G_n$ is bipartite and we make use of the bipartite property. Let $G$ be cycle graph in $\G_n$. Recall from subsection \ref{CRing} that we defined $s$ and $p$ to be maps from $\I_n$ into $\I_n$ (we refer to them successor and predecessor) such that for node $i$ in $\I_n$, $i$ and $s_i$ are neighbors, $i$ and $p_i$ are neighbors and $(sp)_i = (ps)_i = i$. In this setting, $(ss)_i$ refers to the successor of the successor of node $i$ and is denoted as $s^2_i$. Recursively, the notation $s^m_i$, where $m$ is some non-negative integer, denotes the node obtained by iteratively applying ($m$ times) the successor function $s$ on $i$. A similar notation holds for the predecessor function $p$. We now pick a player $i$ in $\I_n$, and consider the subset $\{s^{2m}_i : m \geq 0\}$ of $\I_n$. First, $\{s^{2m}_i : m \geq 0\}$ is not equal to $\I_n$, this follows from the fact that $n$ is even. Furthermore, $\{s^{2m}_i : m \geq 0\}$ is equal to $\{p^{2m}_i : m \geq 0\}$. This implies that the update rule over two time steps of player $i$ depends only on the information available in the actions taken by the players in $\{s^{2m}_i : m \geq 0\}$. We consider a 2-Partition $(P_o,P_e)$ of $\I_n$ with respect to $G$, and point out that the sequence $a, G^2_ka, G^4_ka, \cdots$ is constant after time step $2T$ if and only if both sequences $a{\upharpoonright}P_o, G^2_ka{\upharpoonright}P_o, G^4_ka{\upharpoonright}P_o, \cdots$ and $a{\upharpoonright}P_e, G^2_ka{\upharpoonright}P_e, G^4_ka{\upharpoonright}P_e, \cdots$ are constant after time step $2T$.

\begin{MyPro}
  For all positive even integers $n$, and every $(G,k,a)$ in $\G_n{\times}\K_n{\times}\A_n$ where $G$ is 2-regular, the convergence time $\delta_n(G,k,a)$ is less than or equal to $n$.
\end{MyPro}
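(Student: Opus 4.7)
My plan is to exploit the bipartiteness of the even cycle together with the strong-assignment property. Since $n$ is even and $G$ is $2$-regular and connected, $G$ is a cycle graph of even length and hence bipartite. Fix a $2$-Partition $(P_o,P_e)$ of $\I_n$ with respect to $G$, noting that $|P_o| = |P_e| = n/2$. Write $a^T = G_k^T a$. By Proposition \ref{Identities}, the sequence $a^0{\upharpoonright}P_e, a^2{\upharpoonright}P_e, a^4{\upharpoonright}P_e, \ldots$ evolves autonomously, in the sense that each term is determined by the preceding one via the map induced by $G_k|_{P_o}G_k|_{P_e}$; the same holds on $P_o$. The goal is to show that each of these two traces stabilizes within $n/2$ of its own steps, so that $a^n = a^{n+2}$ and $a^n$ already belongs to a convergence cycle.

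The second step is to extend the strong-assignment property beyond Proposition \ref{RingSA} so that every player, valid or non-valid, has a strong action $c_i$. Proposition \ref{RingSA} already handles the valid case $k_i \in \{1,2\}$. For non-valid $i$, the extension is immediate: if $k_i = 0$ then $c_i = \black$, and if $k_i > d_i = 2$ then $c_i = \white$, in both cases forced from time step $1$ onwards. The crucial consequence is that once $a^{2T}_i = c_i$ for some $T$, one has $a^{2T'}_i = c_i$ for all $T' \geq T$. Hence, in the sequence $b_T = a^{2T}{\upharpoonright}P_e$, each player $i$ can change value at most once, and only in the direction from its non-strong action to $c_i$.

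The linear bound now follows from a direct counting argument. At every index $T$ with $b_T \neq b_{T+1}$, at least one player in $P_e$ commits to its strong assignment for the first time; since $|P_e| = n/2$, at most $n/2$ strict-change indices can occur. By determinism of the autonomous update on $P_e$, whenever $b_{T_0} = b_{T_0+1}$ the sequence is constant from $T_0$ onwards, so the first stable index $T^{\ast}$ satisfies $T^{\ast} \leq n/2$, giving $a^n{\upharpoonright}P_e = a^{n+2}{\upharpoonright}P_e$. The symmetric argument on $P_o$ yields $a^n{\upharpoonright}P_o = a^{n+2}{\upharpoonright}P_o$, whence $a^n = a^{n+2}$; consequently $\{a^n, a^{n+1}\}$ is a convergence cycle containing $a^n$, and therefore $\delta_n(G,k,a) \leq n$.

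The main technical burden will be the counting step: carefully verifying that each strict change of $b_T$ contributes at least one new \emph{irreversible} commitment (i.e.\ that flips are monotone, never reversing), uniformly over both valid and non-valid thresholds. The decoupling via the $2$-Partition and the balanced equality $|P_o| = |P_e| = n/2$ (which only holds for \emph{even} $n$, explaining the hypothesis) then make the linear bound fall out essentially for free.
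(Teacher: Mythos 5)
Your proposal is correct and takes essentially the same route as the paper's proof: fix a 2-Partition of the even cycle, note that the even-time-step dynamics decouple over the two classes, extend strong assignments to non-valid nodes so that every player can flip at most once along the even-step subsequence, and count at most $|P_o|=|P_e|=n/2$ flips per class to conclude $G_k^n a = G_k^{n+2}a$. The monotonicity worry you flag at the end is already discharged by Definition \ref{StrongAssignment} together with Proposition \ref{RingSA}, which is exactly how the paper argues it.
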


\begin{proof}
For any 2-Partition $(P_o,P_e)$ of $\I_n$ with respect to $G$, let us then consider the sequence $a{\upharpoonright}P_o, G^2_ka{\upharpoonright}P_o, G^4_ka{\upharpoonright}P_o, \cdots$. We define $T$ to be the minimum integer $m$ such that $G^{2m}_k a{\upharpoonright}P_o = G^{2m + 2}_k a{\upharpoonright}P_o$. We claim that the integer $T$ is less than or equal to the cardinality of $P_o$. Let $S(b)$ be the number of players in $P_o$ playing the strong assignment in action configurations $b$. If $b \neq G^2_kb$ then $S(b) < S(G^2_kb)$ for all action configuration $b$. But $G^{2m}_k a{\upharpoonright}P_o \neq G^{2m + 2}_k a{\upharpoonright}P_o$ for all $m<T$ and so $P_o \geq S(G^{2T}_ka) \geq S(a) + T$. But $P_o = n/2$, and the result follows. 
\end{proof}

In other words, as long as the sequence $a{\upharpoonright}P_o, G^2_ka{\upharpoonright}P_o, G^4_ka{\upharpoonright}P_o, \cdots$ is not constant, one player is switching his action over two time steps. However, every player has a strong assignment (non-valid nodes trivially have a strong assignment), and so each player is allowed to flip only once over two time steps if ever. But $P_o$ contains only $n/2$ players. So after $2(n/2)$ time step, no player is able to flip over two time steps.

\subsection{Linear Time over Trees}

We now show that a similar result holds for trees. In this subsection, the letter $T$ shall always be used to denote trees, and never time as was done sometimes in previous sections. We prove a lemma, and then derive our result from it. We begin by a definition.

\begin{MyDef}
Given a triplet $(G,k,a)$ in $\G^b_n{\times}\K_n{\times}\A_n$ and a 2-Partition $(P_o,P_e)$ of $\I_n$ with respect to $G$, we say that $a{\upharpoonright}P_o$ is reachable in $(G,k)$ if there exists $a'$ in $\A_n$ such that $a{\upharpoonright}P_o = (G_k|_{P_o}a'){\upharpoonright}{P_o}$. In this case, we say that $a'{\upharpoonright}P_e$ induces $a{\upharpoonright}P_o$. Similarly, $a{\upharpoonright}P_e$ is reachable in $(G,k)$ if there exists $a'$ in $\A_n$ such that $a{\upharpoonright}P_e = (G_k|_{P_e}a'){\upharpoonright}{P_e}$. And again, we say that $a'{\upharpoonright}P_o$ induces $a{\upharpoonright}P_e$.
\end{MyDef}

\begin{MyPro} \label{pro1}
Given a triplet $(G,k,a)$ in $\G^b_n{\times}\K_n{\times}\A_n$ and a 2-Partition $(P_o,P_e)$ of $\I_n$ with respect to $T$. If node $i$ in $P_e$ is non-valid and $a{\upharpoonright}P_e$ is reachable in $(G,k)$, then $((G_k|_{P_e} G_k|_{P_o})^m a)_i = a_i$ for all non-negative integers $m$. 
\end{MyPro}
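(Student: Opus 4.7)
The plan is to combine two simple observations. First, since node $i$ is non-valid with respect to $(G,k)$, Definition \ref{nonvalid} says that $k_i = 0$ or $k_i > d_i$, so the value $(G_k b)_i$ is the same constant $c \in \{\white,\black\}$ for every $b \in \A_n$ (namely $c = \black$ if $k_i = 0$ and $c = \white$ if $k_i > d_i$). Because $i \in P_e$, the definition of $G_k|_{P_e}$ gives $(G_k|_{P_e} b)_i = (G_k b)_i = c$ for all $b \in \A_n$. In other words, every time $G_k|_{P_e}$ is applied, the coordinate at $i$ is forced to be $c$, independent of the input.

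Second, I will use the reachability hypothesis to identify $a_i$ with that same constant $c$. By assumption there exists $a' \in \A_n$ with $a\upharpoonright P_e = (G_k|_{P_e} a')\upharpoonright P_e$. Evaluating at the coordinate $i \in P_e$ and invoking the first observation yields $a_i = (G_k|_{P_e} a')_i = c$. This is the only place where the reachability hypothesis is needed: without it, the initial value $a_i$ could a priori be different from the forced output $c$.

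With these two observations in hand, the proposition follows by a one-line induction on $m$. The base case $m=0$ is trivial. For $m \geq 1$, the outermost operator in $(G_k|_{P_e} G_k|_{P_o})^m$ is $G_k|_{P_e}$, applied to some element of $\A_n$; by the first observation the $i$-th coordinate of the result is $c$, and by the second observation $c = a_i$. Hence $((G_k|_{P_e} G_k|_{P_o})^m a)_i = a_i$ for every $m \in \Zgeqz$.

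There is no substantive obstacle here; the proposition is essentially a bookkeeping lemma that isolates the role of non-valid nodes inside the zig-zag sequence defined by the 2-Partition. The only subtlety worth flagging is that the reachability assumption is genuinely necessary: if one drops it, then the initial coordinate $a_i$ might disagree with the forced value $c$ at time $0$, even though all later iterates still equal $c$.
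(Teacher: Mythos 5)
Your proof is correct and is exactly the argument the paper has in mind: the paper's own proof simply declares the proposition ``rather trivial'' from the definitions of non-validity and reachability, and your write-up supplies precisely those details (the forced constant value $c$ at a non-valid node, the use of reachability to pin $a_i = c$, and the induction on $m$). Nothing further is needed.
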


\begin{proof}
The proposition is rather trivial and follows from the definition of non-validity and reachability.
\end{proof}

\begin{MyPro} \label{pro2}
Given a triplet $(T,k)$ in $\G^b_n{\times}\K_n$ where $T$ is a tree, and a 2-Partition $(P_o,P_e)$ of $\I_n$ with respect to $T$. Pick a node $r$ to be the root of $T$, then if player $i$ in $P_e$ has only leaves as children (with respect to $r$) and $1<k_i<d_i$ then $((T_k|_{P_e} T_k|_{P_o})^m a)_i = a_i$ for all non-negative integers $m$.
\end{MyPro}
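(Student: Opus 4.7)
The plan is to combine the both-strong-assignment property of $i$ (via the reasoning of Proposition \ref{TreeSA}) with the bipartite decoupling identity of Proposition \ref{Identities} and the persistence result of Proposition \ref{TreeCond}. Once these three tools are in hand, the argument is short.

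First I would establish that both $\black$ and $\white$ are strong assignments for $i$, arguing exactly as in Proposition \ref{TreeSA} but applied symmetrically. Since every child of $i$ is a (valid) leaf with threshold $1$, after one application of $T_k$ each leaf child of $i$ plays precisely $a_i$, and so $i$ sees $d_i - 1$ neighbors carrying action $a_i$ plus possibly its parent. The condition $k_i \leq d_i - 1$ then guarantees that whenever $a_i = \black$, node $i$ still has at least $d_i - 1 \geq k_i$ black neighbors and remains black, so $\black$ is strong; symmetrically $k_i \geq 2$ guarantees that whenever $a_i = \white$, node $i$ sees at most one black neighbor (the parent) and remains white, so $\white$ is strong. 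Having both actions strong forces $(T_k^2 a)_i = a_i$ for every $a \in \A_n$.

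Next, since $i \in P_e$, Proposition \ref{Identities} (part~2) gives $(T_k|_{P_e} T_k|_{P_o} a)_i = (T_k^2 a)_i = a_i$. Applying Proposition \ref{TreeCond} with the strong assignment $c = a_i$ then yields $((T_k|_{P_e} T_k|_{P_o})^m a)_i = a_i$ for every $m \geq 1$; the case $m = 0$ is trivial.

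The main obstacle is the implicit validity assumption on the leaves: the claim that each leaf child mirrors $a_i$ after one step of $T_k$ relies on threshold $1$, an assumption inherited from Proposition \ref{TreeSA}. If a child of $i$ were non-valid (threshold $0$ or exceeding its degree), its action would be decoupled from $a_i$ and the both-strong property could fail (e.g.\ with $k_i = 2$, $d_i = 3$, and a single non-valid black-forced leaf, $\white$ would no longer be strong at $i$). The cleanest remedy is to first invoke Proposition \ref{BE} to prune non-valid leaf children --- each of which freezes at its forced action after one application of $T_k|_{P_o}$ --- and then to verify that the adjusted threshold at $i$ still lies strictly between $1$ and the new degree, so that the both-strong argument re-applies to the reduced instance. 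Carrying out this book-keeping is the only delicate step, and is natural in the context of the surrounding section.
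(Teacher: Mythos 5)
Your proof is correct and is essentially the paper's: the paper's entire argument is the one-line observation that $1<k_i<d_i$ makes both $\white$ and $\black$ strong assignments for node~$i$ (via the reasoning of Proposition \ref{TreeSA}), with the combination through Propositions \ref{Identities} and \ref{TreeCond} left implicit. Your worry about non-valid leaf children is legitimate --- the statement as written omits the validity hypothesis that Proposition \ref{TreeSA} carries, and your $k_i=2$, $d_i=3$ example shows the claim can genuinely fail without it --- but in the only place Proposition \ref{pro2} is invoked, namely the proof of Lemma \ref{Dlemma}, all nodes have already been assumed valid, so no pruning step is needed there.
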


\begin{proof}
In the case where $1<k_i<d_i$, both actions ($\white$ and $\black$) are strong assignments for node~$i$.
\end{proof}

\begin{MyPro} \label{pro3}
Given a triplet $(T,k,a)$ in $\G^b_n{\times}\K_n{\times}\A_n$ where $T$ is a tree, and a 2-partition $(P_o,P_e)$ of $\I_n$ with respect to $T$. Suppose $a{\upharpoonright}P_o$ is reachable and $a{\upharpoonright}P_o$ induces $a{\upharpoonright}P_e$ both in $(T,k)$. Then if $P_e$ consists of only one node $i$, then $((T_k|_{P_e}T_k|_{P_o})^{m}a)_i = a_i$ for all non-negative integers~$m$.
\end{MyPro}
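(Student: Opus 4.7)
The key structural observation is that if $T$ is a tree and $P_e = \{i\}$ is a single-node half of a 2-partition, then $T$ must be a star centered at $i$ with $P_o$ its leaves: by connectedness of $T$ and the absence of $P_o$--$P_o$ edges in a 2-partition, every $j \in P_o$ must be adjacent to $i$. Each leaf's update $(T_k a)_j$ therefore depends only on $a_i$; defining $b_{\black}$ to be the profile on $P_o$ with $b_{\black}(j) = \black$ iff $k_j \leq 1$, and $b_{\white}$ to be the profile with $b_{\white}(j) = \black$ iff $k_j = 0$, the only candidates for a reachable $a{\upharpoonright}P_o$ are $b_{\black}$ and $b_{\white}$. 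So the reachability hypothesis pins $a{\upharpoonright}P_o$ to one of these two profiles.

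Next I would compute $F := T_k|_{P_e}T_k|_{P_o}$ on $a$ explicitly. Writing $B_0 = |\{j \in P_o: k_j = 0\}|$ and $B_1 = |\{j \in P_o: k_j \leq 1\}|$, so $B_0 \leq B_1$, a direct check gives $(Fa){\upharpoonright}P_o = b_{a_i}$, together with $(Fa)_i = \black$ iff $B_1 \geq k_i$ when $a_i = \black$, and iff $B_0 \geq k_i$ when $a_i = \white$. The induces hypothesis $(T_k a)_i = a_i$ unpacks to: if $a{\upharpoonright}P_o = b_{\black}$ then $a_i = \black$ iff $B_1 \geq k_i$; if $a{\upharpoonright}P_o = b_{\white}$ then $a_i = \black$ iff $B_0 \geq k_i$. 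A four-case comparison---where the monotonicity $B_0 \leq B_1$ is invoked only in the cross cases $(a_i = \black,\, a{\upharpoonright}P_o = b_{\white})$ and $(a_i = \white,\, a{\upharpoonright}P_o = b_{\black})$---then yields $(Fa)_i = a_i$.

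Finally, I would observe that $Fa$ again satisfies the hypotheses of the proposition: $(Fa){\upharpoonright}P_o = b_{a_i}$ is trivially reachable, and by the update formula above it induces $(Fa)_i = a_i$. In fact $F^2 a = Fa$, since $(F^2 a){\upharpoonright}P_o = b_{(Fa)_i} = b_{a_i} = (Fa){\upharpoonright}P_o$ and $(F^2 a)_i = (Fa)_i$ by the same computation applied to $Fa$; so $Fa$ is an honest fixed point of $F$, giving $(F^m a)_i = (Fa)_i = a_i$ for all $m \geq 1$, while the case $m = 0$ is trivial. The only real obstacle is the four-case check in the middle paragraph, but it reduces to a three-way integer comparison of $k_i$ against $B_0 \leq B_1$ and is mechanical.
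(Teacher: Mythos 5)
Your proof is correct, but it follows a genuinely different route from the paper's. The paper first reduces to the all-valid case: a non-valid center is dispensed with immediately, and non-valid leaves never switch (this is where reachability enters) and are removed from the network; it then observes that in a star whose leaves all have threshold $1$, both $\black$ and $\white$ are strong assignments for the center, so the center cannot move over two steps, and the conclusion follows from the strong-assignment machinery built earlier for trees. You instead work directly: you classify the image of $T_k|_{P_o}$ on $P_o$ as exactly the two profiles $b_\black$ and $b_\white$, translate the ``induces'' hypothesis into an integer comparison of $k_i$ against $B_0 \leq B_1$, verify $(Fa)_i = a_i$ by a four-case check (with monotonicity doing the work in the two cross cases), and close the argument by showing $Fa$ is an honest fixed point of $F$, so that no further induction hypothesis is needed. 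Your version is self-contained and uniform over valid and non-valid nodes, and it avoids the node-removal step, which in the paper's telling implicitly relies on the threshold-updating construction used elsewhere for deleting frozen players; the paper's version is shorter given the strong-assignment apparatus it has already set up. Both arguments are sound, and they use the two hypotheses (reachability and inducing) in essentially the same places.
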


\begin{proof}
We suppose that the node in $P_e$ is valid, otherwise nothing is to be done. Furthermore, without any loss of generality we may assume the nodes in $P_o$ to also be valid, otherwise they will never switch actions given that $a{\upharpoonright}P_o$ is reachable and we may remove them from the network. It then follows that both $\black$ and $\white$ are strong assignments for the node in $P_e$.
\end{proof}

\begin{MyLem} \label{Dlemma}
For every positive integer $n$, given a triplet $(T,k,a)$ in $\G^b_n{\times}\K_n{\times}\A_n$ where $T$ is a tree and a 2-Partition $(P_o,P_e)$ of $\I_n$ with respect to $T$. If $a{\upharpoonright}P_o$ is reachable and $a{\upharpoonright}P_o$ induces $a{\upharpoonright}P_e$ (both) in $(T,k)$, then there exists a player $i$ in $P_e$, such that $((T_k|_{P_e}T_k|_{P_o})^{m}a)_i = a_i$ for all non-negative integers~$m$.
\end{MyLem}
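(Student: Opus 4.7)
The plan is to prove the lemma by strong induction on $n = |V(T)|$, using Propositions \ref{pro1}--\ref{pro3} as the base cases and as tools to discharge special situations along the way.

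First I would reduce to the case where every node of $T$ is valid. Proposition \ref{pro1} directly handles the presence of a non-valid node in $P_e$, and any non-valid node in $P_o$ takes a fixed action (forced by the reachability hypothesis), so it can be pruned in the manner of Proposition \ref{BE}, decrementing its neighbors' thresholds by one when that fixed action is $\black$ and leaving them alone when it is $\white$. I would also assume $|P_e| \geq 2$, since $|P_e| = 1$ is precisely Proposition \ref{pro3}. Now root $T$ at a node of $P_o$ (possible because $P_o$ is nonempty for any bipartite connected graph containing an edge) and pick a node $i \in P_e$ of maximum depth; every child of $i$ is then a leaf in $P_o$.

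If $i$ has at least one child and satisfies $1 < k_i < d_i$, Proposition \ref{pro2} immediately yields that $i$ is frozen. Otherwise either (Case B) $i$ has leaf children but $k_i \in \{1, d_i\}$, or (Case A) $i$ is itself a leaf with $k_i = 1$. In Case B, Proposition \ref{TreeSA} shows that exactly one color $c^\star$ is a strong assignment for $i$; combining this with Proposition \ref{Identities}, one sees that $((T_k|_{P_e}T_k|_{P_o})^m a)_j = (T_k^{2m} a)_j$ for every $j \in P_e$, so a strong color of $i$ once played is always played under iterations of $F := T_k|_{P_e}T_k|_{P_o}$. Hence if $a_i = c^\star$, the node $i$ is frozen and we are done. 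If instead $a_i$ is the non-strong color, then the condition that $a{\upharpoonright}P_o$ induces $a{\upharpoonright}P_e$ forces every neighbor of $i$ (in particular every leaf child of $i$) to share color $a_i$ in $a$; I would then prune $i$ together with its leaf children from $T$, adjusting the threshold at the parent of $i$ as dictated by the value of $a_i$, to obtain a smaller instance $(T',k',a{\upharpoonright}V(T'))$, and apply the induction hypothesis to extract a frozen node $i' \in P_e \setminus \{i\}$, which remains frozen in the original $T$ because the pruned nodes stay at their pinned colors throughout the dynamics.

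Case A is the remaining obstacle: $i$'s sole neighbor is its parent $p \in P_o$, and the induces condition yields only the pointwise equality $a_i = a_p$, with no additional rigidity on a larger subtree. I would still attempt the pruning strategy, deleting $i$ and decrementing $k_p$ by one when $a_i = \black$ to form $(T',k')$, then applying the induction hypothesis on the bipartition $(P_o, P_e \setminus \{i\})$ of $V(T')$ to extract a frozen node $i' \in P_e \setminus \{i\}$, which is also frozen in $T$. The main obstacle of the whole argument, as I see it, is the verification that after pruning (in either Case A or Case B) the reachability and induces hypotheses transfer intact to the reduced instance: this amounts to constructing an appropriate witness $\tilde a'$ on $V(T')$ from the original witness $a'$, using the equalities forced by the induces condition and by the leaf structure at $i$ to match the threshold updates, in the same spirit as the local bookkeeping employed in Proposition \ref{BE}.
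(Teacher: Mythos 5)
Your reductions to the all-valid case, the dispatching of $|P_e|=1$ via Proposition \ref{pro3} and of the $1<k_i<d_i$ case via Proposition \ref{pro2}, and the observation that a depth-maximal node of $P_e$ has only leaf children are all sound and match the paper's setup. The genuine gap is in the pruning step of both Case A and Case B. The surgery of Proposition \ref{BE} is only legitimate for a node whose color is constant along the \emph{entire} sequence $a, (T_k|_{P_e}T_k|_{P_o})a, (T_k|_{P_e}T_k|_{P_o})^2a,\dots$, and you never establish this for the nodes you delete. In Case A the leaf $i\in P_e$ satisfies $a_i=a_p$ but simply tracks its parent $p\in P_o$ with one step of delay, so $i$ flips whenever $p$ does, and $p$ can certainly flip. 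In Case B, $a_i$ is the \emph{non-strong} color, so nothing prevents $i$ (and hence its leaf children, which track $i$) from switching to the strong color once the parent of $i$ changes. Once a pruned node flips, the reduced instance $(T',k')$ ceases to simulate the original dynamics, so the frozen node $i'$ extracted from the induction hypothesis need not be frozen in $T$; the sentence ``the pruned nodes stay at their pinned colors throughout the dynamics'' is precisely the unproved (and in general false) assertion. You correctly sense that something must be verified at the pruning step, but you locate the obstacle in transferring the reachability/induces hypotheses, whereas the real obstacle is that the simulation itself breaks down.

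The paper closes exactly this gap with an extremal argument by contradiction rather than a direct induction. Assuming every node of $P_e$ flips, it lets $M_i$ be the first flip time of node $i$, sets $M_f=\max_i M_i$, picks a node $e\in P_e$ achieving the maximum, and cuts an edge $eo$ with $a_o=a_e$, inducting on the component \emph{not} containing $e$ with the threshold of $o$ adjusted for the color $a_e$. Because $e$ does not flip before time $M_f$, the reduced dynamics agree with the original on that component for all times $m\le M_f$; the frozen node supplied by the induction hypothesis therefore cannot flip at or before $M_f$, contradicting $M_i\le M_f$. This choice of the \emph{last} flipper is the missing idea: it guarantees the agreement window is long enough to produce a contradiction even though no pruned node is known to be frozen forever. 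To repair your proof you would need to incorporate this (or an equivalent device); as written, the inductive step does not go through.
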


\begin{proof}
Given the nice structure the tree possesses, there are several ways we can perform the induction. We proceed by induction on the number of nodes in the tree. We start with the base case that refers to a two node graph with a single edge. In this setting, there are only six cases of possible threshold distribution. It is fairly straightforward to exhaustively check them, so we omit the proof for $n=2$. We suppose that the statement holds for graphs with $n$ nodes, and we show that it holds for graphs with $n+1$ nodes.

We pick a triplet $(T,k,a)$ in $\G^b_n{\times}\K_n{\times}\A_n$ where $T(\I_n,E)$ is a tree, and a 2-Partition $(P_o,P_e)$ of $\I_n$ with respect to $T$. We suppose that $a{\upharpoonright}P_o$ is reachable and $a{\upharpoonright}P_o$ induces $a{\upharpoonright}P_e$ both in $(T,k)$. If there exists a player in $P_e$ that is non-valid with respect to $(G,k)$, the statement trivially holds by Proposition \ref{pro1}. We will assume that all nodes $P_e$ are valid nodes. We may also assume that all nodes in $P_o$ are valid nodes, otherwise they would never change actions and so can be removed. We pick a node $r$ in $P_o$ to be the root of the tree. If there exists some player $i$ in $P_e$ that has only leaves as children (with respect to $r$) and $1<k_i<d_i$, then the statement holds by Proposition \ref{pro2}. Then we will assume that no such player exists. Moreover, if a player in $P_e$ is playing a strong assignment, then the statement trivially holds. So, we assume that no player is playing a strong assignment. And finally, if $P_e$ contains only one node, then the statement holds by Proposition \ref{pro3}. We then assume that $P_e$ contains at least two players.

We argue by contradiction. Suppose that for every player $i$ in $P_e$, there exits a positive integer $M_i$ such that  $((T_k|_{P_e} T_k|_{P_o})^{M_i}a)_i \neq a_i$ and $((T_k|_{P_e}T_k|_{P_o})^{m}a)_i = a_i$ for $m<M_i$. We set $M_f$ to be $\max_i M_i$, and pick a node $e$ in $P_e$ such that $M_e = M_f$. We consider an edge $eo$ in $E$ such that $a_o = a_e$ and if $e$ has only leaves as children, the connected components of $(\I_n,E\backslash\{oe\})$ contain at least one node in $P_e$. Such an edge always exists given what we assumed earlier.

We then construct a pair $(T',k')$ as follows. Define $T'$ to be the connected component of the graph with vertex set $\I_n$ and edge set $E \backslash \{oe\}$ not containing $e$. Set $k'$ to be equal to $k$ on $V(T') \backslash\{o\}$, $k$ on $\{o\}$ if $a_e = \white$, and $(k-1)\vee0$ on $\{o\}$ if $a_e = \black$.

One can check that $a{\upharpoonright}(P_o\cap V(T'))$ is reachable and $a{\upharpoonright}(P_o\cap V(T'))$ induces $a{\upharpoonright}(P_e \cap V(T'))$ both in $(T',k')$. In this case, there exists a player $i$ in $P_e \cap V(T')$, such that $((T'_{k'}|_{P_e}T'_{k'}|_{P_o})^{m}a)_i = a_i$ for all non-negative integers $m$. Then $((T_{k}|_{P_e} T_{k}|_{P_o})^{m}a)_i = a_i$ for all positive integer~$m$ such that
\begin{equation}
  ((T_{k}|_{P_e} T_{k}|_{P_o})^{m-1}a)_e = a_e, \nonumber
\end{equation}
that is for $m \leq M_f$. Then, player $i$ \emph{can} only flip in $(T,k)$ after $M_f$, contradicting the definition of $M_f$.
\end{proof}

\begin{MyThm}
For all positive integers $n$, and every $(T,k,a)$ in $\G_n{\times}\K_n{\times}\A_n$ where $T$ is a tree, the convergence time $\delta_n(T,k,a)$ is less than or equal to $n$.
\end{MyThm}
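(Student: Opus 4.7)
The plan is to proceed by induction on $n$, at each step peeling off a node whose colour in $T_k^t a$ is permanently pinned down after at most one iteration of $T_k$ and invoking the induction hypothesis on the resulting (smaller) forest.

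The base case $n=1$ is immediate: the singleton has no neighbours, so $T_k a$ is the forced action (black if $k_1=0$, white if $k_1\geq 1$) and every subsequent iterate coincides. For the inductive step, consider a tree $T$ with $n \geq 2$ nodes. If $(T,k)$ contains a non-valid node $i$, then $(T_k a)_i$ equals $i$'s forced colour $c_i$ and $i$ keeps this value at every later time. Following the recipe of Proposition~\ref{BE}, delete $i$ and update the thresholds of its neighbours (subtracting $1$ when $c_i=\black$, leaving them unchanged otherwise); each component of the resulting forest is a tree on at most $n-1$ vertices, and on it the $T_k$-dynamics of the original is reproduced for all times $t \geq 1$. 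The inductive hypothesis then bounds the remaining convergence by $n-1$, for a total of $\delta_n(T,k,a) \leq 1 + (n-1) = n$.

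When every node is valid, fix a 2-Partition $(P_o,P_e)$ of $T$ and set $b := T_k|_{P_e}T_k|_{P_o}a$. By construction $b{\upharpoonright}P_o = (T_k|_{P_o}a){\upharpoonright}P_o$ is reachable in $(T,k)$, and $b{\upharpoonright}P_o$ induces $b{\upharpoonright}P_e$, so the hypothesis of Lemma~\ref{Dlemma} is met at $b$. The lemma produces a node $i \in P_e$ with $((T_k|_{P_e}T_k|_{P_o})^m b)_i = b_i$ for every $m \geq 0$, which via Proposition~\ref{Identities} translates into $(T_k^{2m+2}a)_i = (T_k^2 a)_i$ for all $m\geq 0$: the action of $i$ at even times $\geq 2$ is completely fixed. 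Applying the same lemma with $P_o$ and $P_e$ swapped to the state $T_k|_{P_o}T_k|_{P_e}a$ produces a node $j \in P_o$ with an analogous even-time stability. Combining these two stabilities with the bipartite parity of $T$ (whereby the odd-time value of a $P_e$ node is completely determined by the even-time values of its $P_o$ neighbours, and vice versa, through Proposition~\ref{Identities}(3)) isolates a concrete node $i^\ast$ whose colour in $T_k^t a$ is fully fixed from some time $t_0 \leq 1$. Peeling $i^\ast$ exactly as in the non-valid case yields a forest whose components are trees on fewer than $n$ nodes, and the inductive hypothesis closes the bound $\delta_n(T,k,a) \leq n$.

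The principal obstacle is the valid case: Lemma~\ref{Dlemma} only supplies ``composite-sequence stuckness'', i.e.\ stability of $i$ at every \emph{other} iterate of $T_k$, while the peeling step requires a node whose colour is fixed at \emph{every} iterate from some $t_0 \leq 1$. Bridging this gap is the technical heart of the argument, and I expect it to use the two symmetric Lemma~\ref{Dlemma} applications in tandem with Proposition~\ref{TreeSA} (which yields strong assignments at near-leaf nodes in trees): the bipartite structure of $T$ should then guarantee that even-time stability of $i^\ast$, combined with even-time stability of its $P_o$ neighbours, forces odd-time stability of $i^\ast$ as well, after at most one iteration of $T_k$.
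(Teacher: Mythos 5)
Your outline has the right ingredients (Lemma~\ref{Dlemma}, peeling, induction on the tree size), and you have correctly located the difficulty, but the bridge you propose to cross it cannot work: in the all-valid case there is in general \emph{no} node whose colour is fixed at every iterate of $T_k$ from some time $t_0\leq 1$, so the object $i^\ast$ your peeling step requires need not exist. Take the path $1-2-3$ with $k_1=k_2=k_3=1$ and $a=(\black,\white,\white)$: the trajectory is $(\black,\white,\white)\to(\white,\black,\white)\to(\black,\white,\black)\to(\white,\black,\white)\to\cdots$, and every node alternates forever from time $1$ on. Note also that the weaker combination you hope for does not suffice: a node of $P_e$ that is even-time stable and all of whose $P_o$-neighbours are even-time stable is thereby constant on even times and constant on odd times, but these two constants may differ (node $2$ above is $\white$ at all even times and $\black$ at all odd times), so you still cannot delete it and faithfully reproduce the parallel dynamics on the remainder. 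This is why a ``one node per time step, $n$ nodes total'' induction cannot be made to close.

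The paper's accounting is different and is what makes the arithmetic work. Choose the 2-Partition with $|P_e|\leq|P_o|$, pass entirely to the sequential process $T_k|_{P_e}T_k|_{P_o}$ via Proposition~\ref{Identities}, and apply Lemma~\ref{Dlemma} to the reachable/induced state $(T_ka{\upharpoonright}P_o,\,T_k^2a{\upharpoonright}P_e)$ to find a node of $P_e$ that never changes \emph{in the sequential sequence} (equivalently, at even times). That node is deleted with the threshold update of Proposition~\ref{BE} --- which is a statement about the sequential dynamics, so even-time stability is all that is needed --- and the lemma is applied again to the component retaining the most $P_e$-nodes. Each peel removes a node of $P_e$ and costs two parallel time steps, so the total is $2|P_e|\leq 2\cdot n/2=n$. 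In short: not one step per node over $n$ nodes, but two steps per node over at most $n/2$ nodes, with ``stuck'' always meaning stuck in the decoupled zig-zag process rather than in the parallel one. Your non-valid case is fine as written; it is only the valid case that needs to be recast this way.
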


\begin{proof}
Let $T$ be a tree in $\G_n$, and consider a 2-Partition $(P_o,P_e)$ of $\I_n$ with respect to $T$ such that $|P_e| \leq |P_o|$. For any $k$ in $\K_n$ and $a$ in $\A_n$, if we consider $(a^o,a^e) = (T_ka{\upharpoonright}P_o,T^2_ka{\upharpoonright}P_e)$, then $a^o$ is reachable, and $a^o$ induces $a^e$. Then by Lemma \ref{Dlemma}, there exists at least one node $i$ in $P_e$ such that $((T_k|_{P_e}T_k|_{P_o})^{m}(a^o,a^e))_i = a^e_i$ for all non-negative integers~$m$. Let $T'$ be a connected component of the induced subgraph of $T$ over $\I_n \backslash \{i\}$ such that $|V(T') \cap P_e|$ is maximal. Define $k'$ to be equal to $k{\upharpoonright}V(T')$ on $V(T')\backslash \N_i$, equal to $k$ on $V(T')\cap \N_i$ if $a^e_i = \white$, and equal to $(k-1)\vee 0$ on $V(T')\cap \N_i$ if $a^e_i = \black$. Then,
\begin{equation}
 \delta_n(T,k,a) = 2 + \delta(T,k,T^2_ka) = 2 + \delta(T',k',(T^2_ka){\upharpoonright}V(T')) \leq 2|P_e| \nonumber
\end{equation}
by successive application of Lemma \ref{Dlemma}. But since $|P_e| \leq |P_o|$, we get $|P_e| \leq n/2$ and the result follows.
\end{proof}

Notice that leaving the bound at $2|P_e|$ in the preceding proof gives a better bound that is equal to twice the size of a minimal vertex cover. Unfortunately, Lemma \ref{Dlemma} does not hold over all bipartite graphs (see \cite{MyThesis}), however this does not disprove fast convergence time. We end this section with a conjecture: the convergence time $\delta_n(G,k,a)$ is less than or equal to $n$ whenever $G$ is bipartite. In this case, $\delta_n(G,k,a)$ will necessarily be less than or equal to $2n$ when $G$ is non-bipartite, and this bound would be tight. We now proceed to characterize the number of limiting cycles.


\section{On the Complexity of Counting}

So far, we have been dealing with bounds that are uniform over all graphs, all thresholds and all action configurations. The natural coming step would be to find bounds on the number of cycles (fixed-points and non-degenerate cycles), the number of fixed points and the number of non-degenerate cycles for all graphs $G$ in $\G_n$ and all threshold distributions $k$ in $\K_n$. We let $\overline{F}$ and $\underline{F}$ be respectively the maximum and minimum number of fixed-points over all pair $(G,k)$ in $\G_n{\times}\K_n$. Likewise, let $\overline{C}$ and $\underline{C}$ be respectively the maximum and minimum number of non-degenerate cycles over all pair $(G,k)$ in $\G_n{\times}\K_n$.

\begin{MyPro}
The lower bounds $\underline{F}$ and $\underline{C}$ are upper bounded by $2$ and $0$ respectively.
\end{MyPro}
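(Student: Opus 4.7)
The plan is to prove the two bounds independently by exhibiting witness pairs $(G,k) \in \G_n \times \K_n$ realizing each. Both bounds follow from degenerate choices of thresholds that collapse the dynamics.

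For $\underline{F} \leq 2$, I would fix any connected graph $G \in \G_n$ with $n \geq 2$ and assign $k_i = d_i$ for every $i \in \I_n$. Under Proposition \ref{ruleInt}, player $i$ plays $\black$ at time $T$ iff \emph{all} of its neighbors played $\black$ at time $T-1$. Both the all-white and all-black profiles are then trivially fixed. Conversely, if $a$ is any fixed point containing a black node $i$, every neighbor of $i$ must itself be black, and iterating this observation along a spanning tree of the connected graph $G$ forces $a$ to be the all-black profile. Hence all-white and all-black exhaust the fixed points, giving exactly two.

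For $\underline{C} \leq 0$, I would take any connected $G \in \G_n$ and set $k_i = 0$ for every $i$. Each player is then non-valid in the sense of Definition \ref{nonvalid}, and always plays $\black$ regardless of the actions of its neighbors. Consequently $G_k$ maps every action profile to the all-black profile in a single step, and $CYCLE_n(G,k)$ consists solely of the singleton $\{\black^{\I_n}\}$, which is a degenerate cycle. No non-degenerate cycle arises, so $\underline{C} = 0$, which is the strongest bound possible given non-negativity of a count.

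There is essentially no technical obstacle: the proposition is a warm-up showing that no nontrivial \emph{lower} bound on the number of fixed points or non-degenerate cycles holds uniformly across $\G_n \times \K_n$. The only modelling subtlety is choosing witnesses whose dynamics can be read off without analysis — saturated thresholds ($k_i = d_i$) algebraically pin the fixed points down to the two monochromatic profiles, and zero thresholds collapse every trajectory to the all-black profile after one step.
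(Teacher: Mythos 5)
Your proof is correct. Both of your witnesses are legitimate elements of $\G_n\times\K_n$ (in particular the paper explicitly keeps thresholds $k_i=0$ in $\K_n$, so the all-zero threshold distribution is allowed), and each verification is airtight: with $k_i=d_i$ the unanimity rule plus connectivity pins the fixed points to the two monochromatic profiles, and with $k_i=0$ the map $G_k$ is constant after one step, so the unique convergence cycle is the all-$\black$ fixed point and no non-degenerate cycle exists.

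The route differs from the paper's in the choice of witnesses. The paper proves both bounds with a \emph{single} instance: a $2$-regular connected graph on an odd number of nodes with every threshold equal to $1$, for which all-$\black$ and all-$\white$ are the only fixed points and (it asserts) no non-degenerate cycle arises. You instead use two separate degenerate instances, one per bound. What your version buys is that each claim can be read off with essentially no dynamical analysis --- the zero-threshold instance collapses every trajectory in one step, so the absence of $2$-cycles is immediate, whereas the paper's claim that the odd cycle with threshold $1$ has no non-degenerate cycles requires a (short but nontrivial) argument about how the black set spreads around an odd cycle. Your constructions also work for every $n\geq 2$, while the paper's single witness only exists for odd $n\geq 3$. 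What the paper's version buys is economy (one witness instead of two) and the aesthetic point that even a ``non-degenerate'' instance with all nodes valid already achieves both extremes; your $k_i=0$ witness relies on non-valid nodes, which the surrounding text treats as a boundary case of the model. Either way the proposition only asks for existence, so both proofs are complete.
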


\begin{proof}
To prove the proposition, it would be enough to consider any 2-connected regular graph of $n$ players where $n$ is odd, and provide each player with a threshold equal to $1$. All players playing $\black$ and all players playing $\white$ are the only fixed-points. No non-degenerate cycles exist at the limit.
\end{proof}

\begin{MyPro}
 The upper bounds $\overline{F}$ and $\overline{C}$ are lower bounded by $2^{n/3}-1$.
\end{MyPro}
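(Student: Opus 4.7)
The plan is to construct, for each $n \geq 3$, a single pair $(G,k) \in \G_n {\times} \K_n$ whose dynamics produce at least $2^{n/3}-1$ fixed-points \emph{and} at least $2^{n/3}-1$ non-degenerate cycles, establishing both lower bounds at once.

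The construction is a ``bouquet of triangles.'' Fix $m = \lfloor (n-1)/2 \rfloor$, pick a central vertex $v$, and for each $i = 1,\ldots,m$ attach two fresh vertices $a_i, b_i$ with edges $\{v,a_i\}, \{v,b_i\}, \{a_i,b_i\}$. If $n > 1 + 2m$, hang a single pendant leaf off $v$ so the vertex count is exactly $n$. Set $k_v = d_v + 1$ (making $v$ non-valid and forced to play $\white$ at every time step $T \geq 1$), $k_{a_i} = k_{b_i} = 1$ for all $i$, and give the padding leaf, if any, a threshold exceeding its degree so it too is non-valid. The graph is connected through $v$, so $(G,k) \in \G_n {\times} \K_n$. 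Because $v$ is frozen at $\white$, at every step $T \geq 1$ the update of $a_i$ collapses to $(G_k a)_{a_i} = \black$ iff $b_i = \black$, and symmetrically for $b_i$; inside each triangle the dynamics reduce to the coordinate swap $(a_i,b_i)\mapsto(b_i,a_i)$. Since the triangles share no vertex other than the always-$\white$ $v$, the joint dynamics on the $m$ pair-subspaces is the direct product of the $m$ swap maps.

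The product structure makes the counting immediate. Each triangle contributes one of three independent limiting behaviors: both vertices settle at $\black$, both settle at $\white$, or they oscillate through $\{(\black,\white),(\white,\black)\}$. Hence the number of fixed-points is $2^m$, and a routine count (if exactly $k$ triangles oscillate, the resulting joint orbit decomposes into $2^{k-1}$ distinct $2$-cycles of phases) gives the number of non-degenerate cycles as $\sum_{k=1}^{m} \binom{m}{k} 2^{m-k}2^{k-1} = 2^{m-1}(2^m-1)$. For $n \geq 3$ one has $m \geq n/3$, so $2^m \geq 2^{n/3}$ settles the bound on $\overline{F}$; and for $m \geq 2$ one has $2^{m-1}(2^m-1) \geq 2^m \geq 2^{n/3}$, settling the bound on $\overline{C}$. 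The single borderline instance, $n=4$ with $m=1$, is handled by an alternative small construction such as $K_4$ with uniform threshold $2$, which is easily checked to have $2$ fixed-points and $3$ non-degenerate $2$-cycles, both exceeding $2^{4/3}-1 \approx 1.52$. The main subtlety is justifying the product decomposition, i.e.\ that the freeze of $v$ really does decouple the triangles; once that is in hand, the rest is elementary combinatorics.
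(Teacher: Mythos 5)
Your construction is correct and follows essentially the same strategy as the paper's: exhibit a single instance containing roughly $n/3$ decoupled pairs of adjacent threshold-$1$ nodes, each of which independently either settles (both $\black$ or both $\white$) or oscillates, and count the product of these behaviors. The paper achieves the decoupling with a $2$-regular ring whose thresholds follow the pattern $1,1,2,1,1,2,\dots$ (the threshold-$2$ nodes isolate the pairs), whereas you use a frozen non-valid hub; your version is more explicit about the exact count of non-degenerate cycles and about values of $n$ that are not multiples of $3$, but the underlying idea and the resulting bound are the same.
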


\begin{proof}
 Consider the 2-connected regular graph of $n$ players where $n$ is a multiple of $3$ not equal to $3$. Assign $n/3$ nodes a threshold of $2$ and $2n/3$ nodes a threshold of $1$ in such a way that each node of threshold $1$ has exactly one neighbor of threshold $1$ connected to it. In this case, we have $n/3$ pairs of neighbors having thresholds of $1$, and so we can construct at least $2^{n/3}$ fixed points where the neighbors in each pair are either playing both $\black$ or both $\white$. We can similarly construct at least $2^{n/3} - 1$ non-degenerate cycles by having for each pair of such neighbors, either both neighbors as $\white$ or exactly one of the neighbors as $\black$.
\end{proof}

We are not concerned about exact bounds, those claim serve only to show that we are dealing with a rather wide range of number of limiting outcomes. This said, we will study whether we can have an arbitrarily good characterization of the count. Instead of providing bounds, we will study how tractable is it to count equivalence classes, fixed points and non-degenerate cycles. Ultimately, we show that those counting problems are \#P-Complete. We refer the reader the Appendix \ref{ComplexityAppendix} for a review of the needed concepts for this section.\\

\subsection{The Complexity of Counting Cycles and Fixed Points}

We characterize the number of equivalence classes, fixed points and cycles of length two. To this end, we define three counting problems: one for each.

\begin{MyDef}
The counting problem $\#CYCLE$ takes $<n,G,k>$ as input, where $n$ is a positive integer and $(G,k)$ belongs to $\G_n\times\K_n$, and outputs the cardinality of $CYCLE_n(G,k)$
\end{MyDef}

\begin{MyDef}
The counting problem $\#FIX$ takes $<n,G,k>$ as input, where $n$ is a positive integer and $(G,k)$ belongs to $\G_n\times\K_n$, and outputs the cardinality of $\{C \in CYCLE_n(G,k) : |C| = 1\}$
\end{MyDef}

\begin{MyDef}
The counting problem $\#2CYCLE$ takes $<n,G,k>$ as input, where $n$ is a positive integer and $(G,k)$ belongs to $\G_n\times\K_n$, and outputs the cardinality of $\{C \in CYCLE_n(G,k) : |C| = 2\}$
\end{MyDef}

Note that in this setting, given an input $<n,G,k>$, the output of $\#CYCLE$ is equal to the sum of the outputs of both $\#FIX$ and $\#2CYCLE$ when fed with the same input. Referring to the networked coordination game defined in the primary model, $\#FIX$ refers to counting the number of pure Nash equilibria of the networked game. We show the following:

\begin{MyThm} \label{CCycle}
 $\#CYCLE$ is \#P-Complete.
\end{MyThm}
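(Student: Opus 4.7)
My plan has two parts: I would first show membership in $\#P$ by a canonical-representative argument, then establish $\#P$-hardness by a reduction from $\#SAT$ using carefully engineered gadgets.

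For membership, I would exploit Theorem \ref{ConvThm}: every cycle in $CYCLE_n(G,k)$ has at most two elements. This lets me assign to each cycle a unique canonical representative by declaring $a \in \A_n$ to be a representative if either $G_k a = a$ (the fixed-point case) or $G_k^2 a = a$ and $a$ is lexicographically strictly smaller than $G_k a$ (the 2-cycle case). Exactly one configuration per cycle satisfies this predicate: a fixed point satisfies the first clause, and for a 2-cycle $\{a,b\}$ with $a<b$ only the element $a$ satisfies the second clause. The predicate is decidable in polynomial time via two applications of $G_k$ and one string comparison, so $\#CYCLE(G,k)$ counts strings $a \in \A_n$ satisfying a polynomial-time verifiable predicate, placing it in $\#P$.

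For $\#P$-hardness I would give a polynomial-time many-one reduction from $\#SAT$, leveraging the fact that the threshold rule encodes Boolean clauses directly: a node with threshold $1$ computes the OR of its neighbors' colors, while a node with threshold equal to its degree computes an AND. Given a CNF formula $\phi$ over variables $x_1,\ldots,x_n$ with clauses $C_1,\ldots,C_m$, I would build $(G,k)$ from three gadget types glued together. A \emph{variable gadget} for each $x_i$ would contain two designated literal-nodes $u_i$ and $\bar u_i$ whose colors are forced to be opposite in every cycle, so that the pair encodes a truth value for $x_i$. A \emph{clause gadget} for each $C_j$ would consist of a threshold-$1$ node wired to the literal-nodes of $C_j$, together with a forcing substructure compelling the clause node to be $\black$ in every cycle; only satisfying assignments would then be consistent with fixed points. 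Finally, \emph{stabilizer pendants} (leaves attached to critical nodes with appropriately chosen thresholds) would suppress spurious 2-cycles, ensuring every cycle of $(G,k)$ is actually a fixed point. With the gadgets assembled, fixed points of $(G,k)$ would be in bijection with satisfying assignments of $\phi$, giving $\#CYCLE(G,k) = \#SAT(\phi)$ as desired.

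The main obstacle is the monotonicity of the threshold rule: extra $\black$ neighbors can only push a node toward $\black$, never toward $\white$, so Boolean negation cannot be implemented by a single edge between $u_i$ and $\bar u_i$. The variable gadget therefore has to use a small forcing sub-graph with carefully chosen thresholds ensuring that $u_i$ and $\bar u_i$ are never both $\black$ and never both $\white$ in any fixed point, while keeping both admissible colorings of the pair simultaneously extendable to the rest of the construction. A secondary subtlety is that $\#CYCLE$ counts both fixed points and 2-cycles, so I must prevent 2-cycles from contributing uncontrollable extra terms; the stabilizer pendants address this by locking each critical node's color across two time steps, collapsing any would-be 2-cycle into a fixed point and making $\#CYCLE = \#FIX$ an affine function of $\#SAT(\phi)$.
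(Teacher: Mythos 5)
Your membership argument is fine --- indeed the canonical-representative trick (using Theorem \ref{ConvThm} to pick one configuration per cycle by a lexicographic tie-break) is more explicit than what the paper writes down. The hardness direction, however, has a genuine gap, and it sits exactly where you flagged your ``main obstacle'': the variable gadget you want cannot exist. The map $G_k$ is monotone with respect to the pointwise order on $\A_n$ in which $\white < \black$: adding $\black$ neighbors can only push a node toward $\black$. Consequently the sequence $G_k^T(\black^n)$ is pointwise decreasing and converges to a \emph{greatest} fixed point $a_{\max}$, which dominates every fixed point of $G_k$. Now suppose your gadget guarantees that in every fixed point exactly one of $u_i,\bar u_i$ is $\black$, and suppose $\phi$ has two satisfying assignments that differ in $x_i$ (the generic case for a \#SAT instance). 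Then there is a fixed point with $u_i = \black$ and another with $\bar u_i = \black$; since $a_{\max}$ dominates both, it has $u_i = \bar u_i = \black$, contradicting the gadget property. So no forcing sub-graph, however clever, can implement complementary literal pairs across the whole fixed-point set. This is precisely why the paper does not reduce from \#SAT: Theorem \ref{SPComplete} reduces from \#monotone-2DNF (and the predecessor-counting result from \#monotone-2SAT), where no negation gadget is needed; moreover its correspondence between assignments and fixed points is not a bijection but a weighted count (satisfying assignments contribute $1$ fixed point, non-satisfying non-zero assignments contribute $8$), recovered by solving a small linear system.

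A secondary divergence: rather than suppressing 2-cycles with pendants, the paper restricts to bipartite graphs and proves (Lemma \ref{FunctionOFix}) that the 2-cycles are then \emph{exactly} the unordered pairs of distinct fixed points, so $|CYCLE_n(G,k)| = F(F-1)/2 + F$ with $F$ the number of fixed points. This shows in passing that on bipartite instances your pendant strategy cannot make $\#CYCLE = \#FIX$ whenever $F \geq 2$; the 2-cycles are structural, not spurious. If you repair the hardness proof by switching to a monotone source problem, you should also either adopt this exact formula or verify very carefully that your stabilizers do what you claim on a non-bipartite construction.
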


\begin{MyThm} \label{CFix}
 $\#FIX$ is \#P-Complete.
\end{MyThm}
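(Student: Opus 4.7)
The plan is to establish \#P-completeness of \#FIX in two stages. First, membership in \#P is immediate: given a candidate fixed point $a \in \A_n$, one computes $G_k a$ (for each node count black neighbors and compare to $k_i$) in polynomial time and verifies componentwise equality with $a$, so a nondeterministic machine that guesses $a$ and accepts iff $G_ka = a$ has exactly $|\{C \in CYCLE_n(G,k) : |C|=1\}|$ accepting paths.

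For \#P-hardness I would reduce from \#IS, the problem of counting independent sets, which is known to be \#P-complete even on bipartite graphs. Given an instance $H = (V,E)$, I would construct a pair $(G,k) \in \G_{n'} \times \K_{n'}$ whose fixed points biject (or correspond many-to-one with a known easily invertible factor) with the independent sets of $H$. The construction is gadget-based and rests on two primitives drawn from the threshold rule: a node with threshold $1$ satisfies $a_i = \bigvee_{j \in \N_i} a_j$ at any fixed point, while a node with threshold equal to its degree satisfies $a_i = \bigwedge_{j \in \N_i} a_j$. Composing these implements arbitrary monotone Boolean circuits inside the fixed-point relation. For each $v \in V$ I would attach a choice subgraph whose local fixed-point equations admit exactly two solutions, encoding $v \in S$ versus $v \notin S$; for each edge $uv \in E$ I would attach an edge-constraint subgraph which, together with the two choice gadgets at its endpoints, admits a fixed-point extension if and only if $u$ and $v$ are not both selected.

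The main obstacle is that the fixed-point relation is biconditional---each node's action must both predict and be predicted by its neighbors' actions---whereas the IS constraint is unidirectional. A naive ``violation detector'' node that turns black exactly when both endpoints are in $S$ does not forbid that configuration; it merely records it. To genuinely forbid it, the edge gadget must include a propagation/amplifier chain so that a detected violation pins some internal node to $\black$ through one threshold constraint while another threshold constraint forces it to be $\white$, leaving the offending global configuration with no fixed-point completion. Once suitable gadgets are designed and it is verified that, whenever the IS constraint holds, every gadget's internal variables are uniquely determined by the chosen variable values, the bijection follows and \#P-hardness is established. The same construction template can be adapted to deduce Theorem 6.1 (\#P-hardness of \#CYCLE): if the gadgets are rigid enough that the constructed instance admits no non-degenerate cycles, then $|CYCLE_{n'}(G,k)|$ equals $\#\text{FIX}(G,k)$ on the reduction output, and \#CYCLE inherits the hardness.
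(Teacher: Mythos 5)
Your membership argument is fine and matches the paper's. The gap is in the hardness reduction, and it is exactly at the point you flag as ``the main obstacle'': the edge gadget you describe cannot exist in this model. First, the specific mechanism you propose --- a node ``pinned to $\black$ through one threshold constraint while another threshold constraint forces it to be $\white$'' --- is not realizable: every node has exactly one threshold and hence exactly one fixed-point equation $a_i = \mathbbm{1}\{|\N_i\cap a^{-1}(\black)|\geq k_i\}$; there is no way to impose two conflicting constraints on a single node. Second, and more fundamentally, monotonicity of $G_k$ (with $\white<\black$ coordinatewise) rules out any gadget that forbids the ``both endpoints selected'' pattern while permitting each endpoint separately. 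Indeed, suppose $a$ and $b$ are fixed points realizing the independent sets $\{u\}$ and $\{v\}$ respectively, with ``selected'' encoded as $\black$ on the choice node. Let $c=a\vee b$ coordinatewise; then $G_k c\geq G_k a\vee G_k b=a\vee b=c$, so the sequence $c,G_kc,G_k^2c,\dots$ is increasing and converges to a fixed point $c^*\geq c$ in which both $u$ and $v$ are $\black$. Hence for any graph $H$ with an edge, no construction of this shape can make the fixed points biject with the independent sets of $H$: a non-independent configuration is always realizable once the two singletons are. Your reduction target (\#IS) is reasonable --- it is essentially interreducible with the paper's starting point --- but the reduction itself must be redesigned around this obstruction.

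The paper's proof embraces rather than fights this monotonicity. It reduces \#bipartite-FIX from \#monotone-2DNF (a monotone problem), and crucially it does \emph{not} forbid any variable assignment: every assignment $f$ of the formula's variables extends to a nonempty set $A_f$ of fixed points, and the reduction works by computing the \emph{multiplicities} --- $|A_{sat}|=1$ for each satisfying assignment, $|A_0|=1$ for the all-zero assignment, and $|A_{nsat}|=8$ for every other non-satisfying assignment --- and then recovering the number of satisfying assignments by solving the resulting $2\times2$ linear system. The biconditional feedback you worry about (a ``gate'' node also contributes to its inputs' threshold counts) is tamed by triplicating every node and invoking Lemma \ref{BSubgraphProp}, which forces the triples to act as rigid units at any fixed point. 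Finally, the paper derives the hardness of \#CYCLE and \#2CYCLE not by making the instance acyclic as you suggest, but via Lemma \ref{FunctionOFix}, which shows that on bipartite graphs the number of non-degenerate cycles is exactly $F(F-1)/2$ where $F$ is the number of fixed points. If you want to salvage your approach, you should either switch to a multiplicity-counting reduction of this kind or prove that your gadgets can be built non-monotonically, which the join argument above strongly suggests they cannot.
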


\begin{MyThm} \label{CTwoCycle}
 $\#2CYCLE$ is \#P-Complete.
\end{MyThm}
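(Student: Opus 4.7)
My plan is to verify $\#2CYCLE\in\#P$ and then give a polynomial-time Turing reduction from $\#FIX$ (Theorem~\ref{CFix}) to $\#2CYCLE$. Membership is routine: a nondeterministic machine guesses $a\in\A_n$, checks in polynomial time that $G_k a\neq a$ and $G_k^2 a=a$, and---using a lexicographic tie-breaker between $a$ and $G_k a$ to avoid counting each non-degenerate cycle twice---accepts exactly once per 2-cycle.

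For hardness, starting from $(G,k)\in\G_n\times\K_n$, I first preprocess to a non-bipartite instance. If $G$ is already non-bipartite, set $(G^*,k^*)=(G,k)$. Otherwise attach three fresh vertices $x,y,z$ forming a triangle, together with a single extra edge $xv$ to some $v\in V(G)$, and give $x,y,z$ non-valid thresholds (each strictly greater than its degree in $G^*$) so that $x,y,z$ are forced to play $\white$ from time one onwards; keep $k^*_u=k_u$ for every $u\in V(G)$. Since $x$ is permanently white and no other new edges touch $V(G)$, the dynamics on $V(G)$ are unchanged and $\#FIX(G^*,k^*)=\#FIX(G,k)$, $\#2CYCLE(G^*,k^*)=\#2CYCLE(G,k)$; the triangle renders $G^*$ non-bipartite.

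I then form the \emph{bipartite double cover} $(G',k')$: take $V(G')=V(G^*)\cup V'$ with $V'=\{v':v\in V(G^*)\}$ a disjoint copy, declare $\{u,v'\}\in E(G')$ iff $\{u,v\}\in E(G^*)$, and set $k'_v=k'_{v'}=k^*_v$. Non-bipartiteness of $G^*$ makes $G'$ connected, so $(G',k')$ is a legal instance. Identifying a configuration on $V(G')$ with an ordered pair $(a,b)$ of configurations on $V(G^*)$ (the first coordinate living on $V(G^*)$, the second on $V'$ via $v'\leftrightarrow v$), a direct check gives
$$G'_{k'}(a,b)=\bigl(G^*_{k^*}b,\;G^*_{k^*}a\bigr).$$
Write $F=\#FIX(G,k)$ and $D=\#2CYCLE(G,k)$. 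A pair $(a,b)$ is a fixed point of $G'_{k'}$ iff $a=G^*_{k^*}b$ and $b=G^*_{k^*}a$, i.e.\ iff $\{a,b\}$ is a cycle of $G^*_{k^*}$; hence $\#FIX(G',k')=F+2D$. By Theorem~\ref{ConvThm}, $(a,b)$ lies in some cycle of $G'_{k'}$ iff both $a$ and $b$ lie in cycles of $G^*_{k^*}$, so the number of cycle configurations of $G'_{k'}$ is $(F+2D)^2$; subtracting the $F+2D$ fixed-point configurations and dividing by two yields
$$\#2CYCLE(G',k')=\binom{F+2D}{2}.$$

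Given a $\#2CYCLE$ oracle, I recover $\#FIX(G,k)$ in polynomial time: (i) query $(G^*,k^*)$ to obtain $D$; (ii) query $(G',k')$ to obtain $\binom{F+2D}{2}$ and solve the quadratic for $F+2D$; (iii) output $(F+2D)-2D$. This polynomial-time Turing reduction from the $\#P$-complete problem $\#FIX$ proves $\#P$-completeness. The main obstacle in the plan is ensuring that $(G',k')$ stays inside $\G\times\K$---specifically, that $G'$ is connected---which is precisely what the triangle preprocessing is designed to guarantee; everything else is essentially algebra once Theorem~\ref{ConvThm} is available.
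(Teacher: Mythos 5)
Your proof is correct, and it reaches the result by a genuinely different reduction than the paper's, even though the combinatorial heart is the same. The paper proves Theorems \ref{CCycle}, \ref{CFix} and \ref{CTwoCycle} together: it first establishes (Lemmas \ref{FPEquivalence} and \ref{FunctionOFix}) that for bipartite $(G,k)$ the non-degenerate cycles are in bijection with unordered pairs of distinct fixed points, so $\#2CYCLE = \binom{F}{2}$ with $F = \#FIX$, and then proves \#P-hardness of \#bipartite-FIX directly by a gadget reduction from \#monotone-2DNF (Theorem \ref{SPComplete}); \#bipartite-2CYCLE, hence \#2CYCLE, inherits hardness by inverting the quadratic. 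You instead take Theorem \ref{CFix} as a black box and Turing-reduce \#FIX to \#2CYCLE through the bipartite double cover --- which is exactly the paper's bipartite-expansion (Definition \ref{Bexp}) --- rederiving the identity $\#2CYCLE(G',k') = \binom{\#FIX(G',k')}{2}$ on the cover from the relation $G'_{k'}(a,b) = \bigl(G^*_{k^*}b, G^*_{k^*}a\bigr)$ together with Theorem \ref{ConvThm}. So the shared insight is that on bipartite instances a 2-cycle is an interleaving of two distinct fixed points; what differs is the source problem, your triangle preprocessing (needed only to keep the double cover connected, and valid because every configuration lying on a cycle forces $x,y,z$ to be $\white$, so fixed points and 2-cycles of $(G^*,k^*)$ biject with those of $(G,k)$), and the extra oracle call used to strip off the $2D$ term. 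Your version is more modular --- it survives any alternative proof that $\#FIX$ is hard and never opens up the 2-DNF gadget --- whereas the paper's single construction yields all three counting results at once and produces a hardness instance that is already bipartite, so no expansion step is needed. The remaining steps you give (membership via a lexicographic representative, connectivity of the cover from non-bipartiteness of $G^*$, the counts $F+2D$ and $\binom{F+2D}{2}$, and recovery of $F$ by solving the quadratic) all check out, and there is no circularity since the paper's proof of Theorem \ref{CFix} does not rely on Theorem \ref{CTwoCycle}.
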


One has to be subtle towards what such result entails. This result does not imply that no characterization of the number of cycles is possible whatsoever, but rather that we would be unable to get an arbitrarily refined characterization of that number.\\

For technical insight, we may further note that no result in those three implies another, and no two results imply the third (or at least that no deduction may be made simply from the statements above with no additional information whatsoever). Specifically, if it is hard to count the number of fixed points, counting the number of cycles is not necessarily hard because of set inclusion. As a quick example, consider counting the number of total action configurations, surely this set includes the number of fixed points. However, counting them is trivial given the network size. Similarly, no hardness can directly be deduced by the fact that $\#CYCLE$ outputs the sum of $\#FIX$ and $\#2CYCLE$ when all the counting problems are fed with the same input. To illustrate quickly, consider counting the number of non-fixed point action configurations, this problem is hard since counting the number of fixed points is hard itself, however counting the number of fixed-points and non-fixed-points is again trivial given the size of the network.

We will prove our results as follows, we will restrict our input to only bipartite graphs. Within this restricted space, those three problems share a common ground that will be stated in two lemmas to follow. Mainly, either all of them are hard, or none of them is hard. We then prove the three theorems in one instance by showing that with restricted inputs one of the problems is \#P-Hard.

Building on the framework defined in section 4, we begin by this crucial observation.

\begin{MyLem} \label{FPEquivalence}
Let $n$ be a positive integer, $(G,k)$ be in $\G^b_n{\times}\K_n$ with $G = (\I_n,E)$ and $(P_o,P_e)$ a 2-Partition of $\I_n$ with respect to $G$.
For $a$ in $\A_n$, $a$ is a fixed point of $G_k$ if and only if $a$ is a fixed point of $G_k|_{P_e}G_k|_{P_o}$.
\end{MyLem}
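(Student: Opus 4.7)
The plan is to exploit bipartiteness: since $(P_o,P_e)$ is a 2-Partition with respect to $G$, every neighbor of a node in $P_e$ lies in $P_o$ and vice versa. Consequently, the output of $G_k$ on any node depends only on the actions of nodes in the opposite part. This decoupling, together with the definition of $G_k|_{P_o}$ and $G_k|_{P_e}$ as being identity outside their respective parts, is what makes the two fixed-point conditions equivalent.

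For the forward direction, suppose $G_k a = a$. By definition of $G_k|_{P_o}$, the profile $G_k|_{P_o} a$ agrees with $G_k a$ on $P_o$ and with $a$ on $P_e$, hence equals $a$. Applying $G_k|_{P_e}$ then gives a profile that agrees with $a$ on $P_o$ and with $G_k a$ on $P_e$, again equal to $a$. So $a$ is a fixed point of $G_k|_{P_e}G_k|_{P_o}$.

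For the reverse direction, suppose $G_k|_{P_e}G_k|_{P_o} a = a$. First restrict to $P_o$: since $G_k|_{P_e}$ leaves the $P_o$-coordinates untouched, we get $(G_k|_{P_o} a){\upharpoonright}P_o = a{\upharpoonright}P_o$, which by definition of $G_k|_{P_o}$ yields $(G_k a){\upharpoonright}P_o = a{\upharpoonright}P_o$. Now restrict to $P_e$: the value $(G_k|_{P_e}G_k|_{P_o} a){\upharpoonright}P_e$ is determined, through the rule in Proposition \ref{ruleInt}, solely by the $P_o$-coordinates of $G_k|_{P_o} a$, because all neighbors of nodes in $P_e$ lie in $P_o$. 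But we just showed those $P_o$-coordinates coincide with $a{\upharpoonright}P_o$, so the resulting value equals what $G_k$ would produce on $P_e$ from $a$ itself, namely $(G_k a){\upharpoonright}P_e$. Hence $(G_k a){\upharpoonright}P_e = a{\upharpoonright}P_e$, completing the proof.

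The main thing to be careful about is not a difficulty so much as a bookkeeping point: one must invoke bipartiteness exactly where the $P_e$-coordinates of $G_k|_{P_e}G_k|_{P_o} a$ are computed, to justify replacing the $P_o$-coordinates of $G_k|_{P_o} a$ by those of $a$ without affecting the outcome. Everything else reduces to unfolding the definitions of $G_k|_{P_o}$ and $G_k|_{P_e}$.
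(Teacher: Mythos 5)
Your proof is correct and follows essentially the same route as the paper's: the forward direction by unfolding the definitions of the restricted maps, and the reverse direction by first extracting $G_k|_{P_o}a = a$ from the $P_o$-coordinates and then deducing that the $P_e$-update applied to $a$ itself must also fix $a$. The only cosmetic difference is that the paper substitutes the full equality $G_k|_{P_o}a = a$ directly into the composition rather than invoking bipartiteness coordinate-wise, but the mechanism is identical.
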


\begin{proof}
 It is clear that, if $G_ka = a$, then $G_k|_{P_o}a = G_k|_{P_e}a = a$ and so $a$ is fixed point of $G_k|_{P_e} G_k|_{P_o}$.
 To show the converse, suppose that $G_k|_{P_e} G_k|_{P_o}a = a$, then $(G_k|_{P_e}G_k|_{P_o}a)_i = a_i$ for all $i$ in $\I_n$. If $i$ is in $P_o$, it follows that $(G_k|_{P_o}a)_i = a_i$ since $G_k|_{P_e}$ cannot modify $a_i$, and so $G_k|_{P_o}(a) = a$ since $(G_k|_{P_o}a)_i = a_i$ for $i$ in $P_e$. It follows that $G_k|_{P_e}(a) = a$. But 
 $(G_ka)_i = (G_k|_{P_o}a)_i$ if $i$ is in $P_o$, and $(G_ka)_i = (G_k|_{P_e}a)_i$ if $i$ is in $P_e$. Therefore, $G_k(a)=a$.
\end{proof}

With this in mind, we may proceed to the following curcial lemma.

\begin{MyLem} \label{FunctionOFix}
Let $(G,k)$ be a pair in $\G^b_n{\times}\K_n$, and let $F$ be equal to $|FIX_n(G,k)|$, we have $|CYCLE_n(G,k)| = F(F-1)/2 + F$.
\end{MyLem}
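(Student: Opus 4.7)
The plan is to use the 2-Partition $(P_o, P_e)$ of $\I_n$ with respect to the bipartite graph $G$ to set up an explicit bijection between unordered pairs of distinct fixed points of $G_k$ and 2-cycles of $G_k$. Since Theorem \ref{ConvThm} ensures every element of $CYCLE_n(G,k)$ is either a fixed point or a 2-cycle, the identity $|CYCLE_n(G,k)| = F + N$ holds, where $N$ is the number of 2-cycles. It then suffices to show $N = F(F-1)/2$.

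First I would define the map $\phi$ sending an unordered pair $\{c,c'\}$ of distinct fixed points to $\{a,b\}$ where $a = (c{\upharpoonright}P_o, c'{\upharpoonright}P_e)$ and $b = (c'{\upharpoonright}P_o, c{\upharpoonright}P_e)$. The key computation, exploiting bipartiteness, is that $(G_k a)_i$ for $i \in P_o$ depends only on $a{\upharpoonright}P_e = c'{\upharpoonright}P_e$ and equals $(G_k c')_i = c'_i = b_i$; similarly $(G_k a)_i = c_i = b_i$ for $i \in P_e$ since $c$ is a fixed point. Hence $G_k a = b$, and by symmetry $G_k b = a$. Moreover $a \neq b$ because $c \neq c'$ forces them to differ on some coordinate, so $\{a,b\}$ is a genuine 2-cycle. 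Well-definedness of $\phi$ on unordered pairs is immediate since swapping $c$ and $c'$ merely swaps $a$ and $b$.

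Next I would construct the inverse $\psi$: given a 2-cycle $\{a,b\}$ with $G_k a = b$ and $G_k b = a$, let $c = (a{\upharpoonright}P_o, b{\upharpoonright}P_e)$ and $c' = (b{\upharpoonright}P_o, a{\upharpoonright}P_e)$. Invoking Lemma \ref{FPEquivalence}, to check $c$ is a fixed point it is enough to check $G_k|_{P_e} G_k|_{P_o} c = c$: on $P_o$, $G_k|_{P_e}$ acts trivially and $G_k|_{P_o} c = c$ on $P_o$ because $(G_k|_{P_o} c)_i$ depends on $c{\upharpoonright}P_e = b{\upharpoonright}P_e$ and equals $(G_k b)_i = a_i = c_i$; on $P_e$, $(G_k|_{P_e} G_k|_{P_o} c)_i$ equals $(G_k a)_i = b_i = c_i$. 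Symmetrically $c'$ is fixed, and $c \neq c'$ since otherwise $a = b$. The assignment $\psi$ is well-defined on unordered 2-cycles because swapping $a$ and $b$ swaps $c$ and $c'$.

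The final step is to verify $\phi$ and $\psi$ are mutual inverses, which is a direct verification since applying one then the other reassembles the coordinates exactly on $P_o$ and $P_e$. The bijection then yields $N = \binom{F}{2} = F(F-1)/2$, giving the formula. The main obstacle is not any single hard step but rather keeping the bookkeeping clean on $P_o$ versus $P_e$ and systematically invoking the bipartite decoupling to reduce one-step updates of mixed configurations to one-step updates of the underlying fixed points.
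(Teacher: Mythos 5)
Your proposal is correct and follows essentially the same route as the paper: both use the 2-Partition of the bipartite graph to interleave two distinct fixed points on $P_o$ and $P_e$ and obtain a non-degenerate cycle, invoke Lemma \ref{FPEquivalence}, and count unordered pairs to get $F(F-1)/2$. The only difference is presentational — you construct the inverse map explicitly and verify the fixed-point property of the recovered configurations, where the paper asserts bijectivity of the forward map directly.
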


\begin{proof}
 Let $(G,k)$ be in $\G^b_n{\times}\K_n$ with $G = (\I_n,E)$ and $(P_o,P_e)$ be a 2-Partition of $\I_n$ with respect to $G$. To prove the result, it would be enough to construct a bijection from the set of non-degenerate cycles in $CYCLE_n(G,k)$ (having a cardinality equal to 2) to $\{(a',a'') \in FIX_n(G,k)^2 : a' \neq a''\}$. To this end, consider two distinct elements $a'$ and $a''$ in $FIX_n(G,k)$. By Lemma \ref{FPEquivalence}, we know that $a'$ and $a''$ are fixed points of $G_k|_{P_e}G_k|_{P_o}$. Construct $a_1 = (a''{\upharpoonright}P_o,a'{\upharpoonright}P_e)$ and $a_2 = (a'{\upharpoonright}P_o,a''{\upharpoonright}P_e)$. We get that $\{a_1, a_2 \}$ is a non-degenerate cycle in $CYCLE_n(G,k)$. Indeed, $G_k a_1 = a_2$ and $G_k a_2 = a_1$. Finally, clearly the map \[(a',a'') \mapsto ((a''{\upharpoonright}P_o,a'{\upharpoonright}P_e),(a'{\upharpoonright}P_o,a''{\upharpoonright}P_e))\] is bijective. The result follows from the fact a non-degenerate cycle in $CYCLE_n(G,k)$ is an unordered pair of action configurations, and that $CYCLE_n(G,k)$ contains both non-degenerate cycles and fixed-points.
\end{proof}

This previous lemma states that we need only prove the result for counting fixed-points over bipartite graphs. We now proceed to a technical lemma.

\begin{MyLem} \label{BSubgraphProp}
Consider a graph $(G,k)$ in $\G_n{\times}\K_n$ where $G = (\I_n,E)$.
Suppose that there exists $P_s$ and $P_c$ disjoint subsets of $\I_n$ such that $k$ is equal to $2$ on $P_s\cup P_c$ and
\begin{itemize}
\item The cardinality of $P_s$ is exactly equal to 3.
\item The cardinality of $P_c$ is greater than or equal to 3.
\item For all $(s,c)$ in $P_s\times P_c$, we have $sc \in E$.
\item For each $s$ in $P_s$ there exists no node $j$ in $\I_n\backslash P_c$ such that $sj \in E$.
\item Every node in $P_c$ has degree less than or equal to 4.
\end{itemize}
If $a$ in $\A_n$ is a fixed point of $G_k$, then $a$ can take only one value over $P_s\cup P_c$.
\end{MyLem}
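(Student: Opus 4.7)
The plan is to exploit two things in sequence: the extreme symmetry of the three nodes of $P_s$ (which forces them to share a common color at any fixed point), and then the fact that each $c \in P_c$ sees all three $P_s$-nodes among at most four neighbors total (which propagates that common color to all of $P_c$).

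First I would observe that for every $s \in P_s$ we have $\N_s = P_c$: the hypothesis $sc \in E$ for all $(s,c) \in P_s \times P_c$ puts $P_c \subseteq \N_s$, and the hypothesis that no $s \in P_s$ has a neighbor outside $P_c$ gives the reverse inclusion. Since additionally $k_s = 2$ for every $s \in P_s$, the rule of Proposition~\ref{ruleInt} gives $(G_k a)_{s} = (G_k a)_{s'}$ for any $s,s' \in P_s$ and any $a$. At a fixed point $a$, this forces $a_s = a_{s'}$, so all three nodes of $P_s$ carry a single common color $\sigma \in \{\black,\white\}$.

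Next I would split on $\sigma$. Each $c \in P_c$ has all three members of $P_s$ as neighbors, and since $d_c \leq 4$, at most one additional neighbor (which may lie anywhere else in $\I_n$). If $\sigma = \black$, then $c$ has at least three $\black$ neighbors, which exceeds $k_c = 2$, so $(G_k a)_c = \black$ and hence $a_c = \black$ at the fixed point. If $\sigma = \white$, then $c$ has at most one $\black$ neighbor (the possible extra one outside $P_s$), which is strictly less than $k_c = 2$, so $(G_k a)_c = \white$ and hence $a_c = \white$. In either case $a$ equals $\sigma$ on all of $P_c$, and combined with $a \equiv \sigma$ on $P_s$ we conclude $a$ is constant on $P_s \cup P_c$.

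As a sanity check, one can verify that the two monochromatic assignments $a \equiv \black$ and $a \equiv \white$ on $P_s \cup P_c$ are indeed consistent with $(G_k a)_s = a_s$ for $s \in P_s$: in the all-$\black$ case, $s$ has $|P_c| \geq 3 \geq 2$ black neighbors, and in the all-$\white$ case it has $0 < 2$. There is no real obstacle in this proof; the only subtlety is noticing that the degree bound $d_c \leq 4$ combined with the three forced neighbors in $P_s$ leaves $c$ with at most one remaining neighbor, which is what kills the $\sigma = \white$ case.
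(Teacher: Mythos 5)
Your proof is correct and follows essentially the same route as the paper's: both arguments hinge on the facts that $\N_s = P_c$ with $k_s=2$ forces all of $P_s$ to share one color at a fixed point, that three black $P_s$-neighbors exceed the threshold $2$ of each $c\in P_c$, and that the bound $d_c\le 4$ leaves at most one non-$P_s$ neighbor to kill the white case. The only cosmetic difference is that you case-split on the common color of $P_s$ (obtained by symmetry) whereas the paper case-splits on whether $P_c$ contains at least two black nodes; the two organizations are logically interchangeable.
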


\begin{proof}
Let $a$ be a fixed point of $G_k$, and suppose there are at least two nodes $i$ and $j$ in $P_c$ such that, $a_i = a_j = \black$. Since $a$ is a fixed point, $a$ is equal to $\black$ on $P_s$, it then follows that $a$ is equal to $\black$ on $P_c$. Similarly, suppose there is at most one node $i$ in $P_c$ such that $a_i$ is $\black$. Since $a$ is a fixed point, $a$ is equal to $\white$ on $P_s$, it then follows that $a_i = \white$, contradicting the assumption.
\end{proof}

Let us define \#bipartite-FIX to be the counting problem \#FIX while restricting an input $<n,G,k>$ to have $G$ being a bipartite graph in $\G_n^b$. We get the following theorem:

\begin{MyThm} \label{SPComplete}
 The counting problem \#bipartite-FIX is \#P-Complete.
\end{MyThm}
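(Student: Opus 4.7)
The statement has the familiar two-part structure of a \#P-completeness proof, and I would handle membership and hardness separately. Membership in \#P is immediate: given $<n,G,k>$ with $G$ bipartite and a candidate $a \in \A_n$, one verifies in time $O(|E|+n)$ that $a$ is a fixed point by checking, for every node $i$, that $a_i = \black$ iff $|a^{-1}(\black) \cap \N_i| \geq k_i$. The accepting certificates of the associated non-deterministic machine biject with $FIX_n(G,k)$.

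For \#P-hardness I would reduce from counting independent sets in bipartite graphs, which is \#P-complete by Provan--Ball. Given a bipartite graph $H = (U \cup V, F)$, I would construct in polynomial time a pair $(G',k') \in \G^b_{n'}{\times}\K_{n'}$ together with a polynomial-time decodable correspondence (bijection up to a computable multiplicative constant) between fixed points of $G'_{k'}$ and independent sets of $H$. Lemma \ref{BSubgraphProp} is the key building block: attaching to any prospective ``representative'' a cluster satisfying the hypotheses of the lemma forces the entire cluster to be monochromatic in every fixed point, so each vertex of $H$ can be encoded by a single bit carried by one such cluster. The representative of vertex $v$ is $\black$ iff $v$ is in the independent set.

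The second ingredient is an edge gadget: for each $uv \in F$, a small bipartite substructure attached to the representatives of $u$ and $v$ that admits consistent fixed-point extensions under each of the patterns $(\white,\white)$, $(\white,\black)$, $(\black,\white)$ on its anchor nodes, but admits none under $(\black,\black)$, thereby enforcing independence. Bipartiteness of $G'$ is preserved by routing every gadget edge across the two parts induced by $H$'s bipartition together with the internal bipartition of each gadget, and the whole construction is of size polynomial in $|U|+|V|+|F|$. The final count of fixed points of $G'_{k'}$ is then a known integer multiple of the number of independent sets of $H$, yielding the hardness reduction.

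The main obstacle is the design of the edge gadget. It must (i) forbid exactly the pattern $(\black,\black)$ on the two anchors, (ii) be \emph{neutral} on the other three patterns, meaning that each such pattern admits the same computable number of fixed extensions on the gadget's internal nodes (so the multiplicative constant is uniform and easily divided out), and (iii) remain bipartite after being wired to the variable gadgets. A natural candidate is a threshold-$2$ cluster of the type supplied by Lemma \ref{BSubgraphProp}, fed by both anchors in such a way that the cluster's forced monochromatic value is consistent exactly when at least one anchor is $\white$; verifying neutrality requires a case analysis on the gadget's internal fixed points. Once this is in place, combining with Lemmas \ref{FPEquivalence} and \ref{FunctionOFix} immediately lifts \#P-completeness of \#bipartite-FIX to Theorems \ref{CCycle}, \ref{CFix} and \ref{CTwoCycle}.
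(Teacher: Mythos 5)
Your membership argument is fine and matches the paper's. The hardness half, however, rests on an edge gadget that provably cannot exist in this model. The map $G'_{k'}$ is monotone with respect to the pointwise order on $\A_{n'}$ induced by $\white<\black$ (more black neighbors can only push a node toward $\black$; non-valid nodes are constant, hence also monotone). By Knaster--Tarski, $FIX_{n'}(G',k')$ is a nonempty lattice under this order: concretely, if $a$ and $b$ are fixed points, then iterating $G'_{k'}$ from the pointwise maximum $a\vee b$ produces a monotonically increasing sequence converging to a fixed point $c\geq a\vee b$. So if some fixed point makes the representative of $u$ black and some fixed point makes the representative of $v$ black, there is a fixed point making \emph{both} black. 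No gadget, however wired, can ``forbid exactly the pattern $(\black,\black)$'' on two anchors while permitting $(\black,\white)$ and $(\white,\black)$; equivalently, the family of representative-sets realizable as the black part of a fixed point is closed under an operation dominating union, whereas independent sets of a graph with at least one edge are not. The same obstruction, dualized through the lattice meet, kills the variant in which $\black$ encodes ``not in the set.'' This is not a detail you left to the reader --- it is the step on which the whole reduction fails.

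The paper sidesteps this by choosing a source problem whose solution set is compatible with the lattice structure: it reduces from \#monotone-2DNF (also \#P-complete by \cite{VAD01}), whose satisfying assignments form an \emph{upward-closed} family. Moreover, it does not attempt to kill the non-solutions: every assignment of the formula, satisfying or not, corresponds to fixed points of the constructed instance, with multiplicity $1$ for each satisfying assignment, $1$ for the all-zero assignment, and $8$ for every other non-satisfying assignment (the factor $8$ coming from the three independent copies of a collector node $d^l$). The number of satisfying assignments is then recovered from the total fixed-point count by solving the linear system $\#sat+\#nsat=2^n$, $\#sat+8(\#nsat-1)+1=F$. Your instinct to use Lemma \ref{BSubgraphProp} to force monochromatic representative clusters is exactly the paper's, and your ``neutrality'' requirement is a red herring --- non-uniform multiplicities are harmless as long as they are computable per class --- but you need to replace the independence constraint with a monotone one (or, equivalently, count via complements/linear algebra rather than via exclusion) for the reduction to have any chance of working.
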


\begin{proof}
Clearly \#bipartite-FIX is in \#P, an action configuration can be easily verified to be a fixed-point in polynomial time. We perform a reduction from \#monotone-2DNF that takes $<\phi>$ as input where $\phi$ is a monotone boolean formula in 2-DNF\footnote{A boolean formula is in Disjunctive Normal Form (DNF) if it is a disjunction of conjunction clauses. Formally, let $x_1, \cdots x_n$ be boolean variables, a boolean formula $\phi$ is in DNF-form if $\phi$ is written as
$\phi(x_1, \cdots, x_n) = \bigvee_{i=1}^m  (y_{i,1} \wedge \cdots \wedge y_{i,k_m})$ where for each $m$, $k_m$ is a positive integer, and the literal $y_{.}$ represents either $x_i$ or its negation $\neg x_i$ for some $i$. A k-DNF formula is DNF formula where each clause contains no more than $k$ literals. A boolean formula in DNF is monotone if no clause contains a negation of a variable. } and computes the number of satisfying assignments. This problem is shown to be \#P-Complete (The result follows from \cite{VAD01}). Let $\phi$ be a monotone 2-DNF formula of $m$ clauses and $n$ variables $x_1,\cdots,x_n$.
Namely,
\begin{equation}
  \phi(x_1,\cdots,x_n) = \bigvee_{c = 1}^m (x_{y_c} \wedge x_{z_c}),\nonumber
\end{equation}
where $y$ and $z$ are maps from $\{1,\cdots,m\}$ into $\{1,\cdots,n\}$. Without any loss of generality we will assume that all variables appear in the formula, otherwise we can reduce it to a formula having a fewer number of variables. Moreover, if a clause consists on only one literal $x$, we will write it as $x\wedge x$.

We construct a graph $G(\I_{3(n + 2m + m + 1)},E)$ of size $3(n + 2m + m + 1)$ as follows. We consider $\I_{3(n + 2m + m + 1)}$ and label the players as
\begin{itemize}
	\item $s^1_p,s_p^2,s_p^3$ for $p \in \{1,\cdots,n\}$.
	\item $y^1_c,y^2_c,y^3_c,z^1_c,z^2_c,z^3_c$ for $c \in \{1,\cdots,m\}$.
	\item $b^1_c,b^2_c,b^3_c$ for $c \in \{1,\cdots,m\}$.
	\item $d^1,d^2,d^3$.
\end{itemize}
We now let the edge set E contain the following edges:
\begin{itemize}
 \item $b^l_cy^l_c$ and $b^l_cz^l_c$ for all $l \in \{1,2,3\}$ and $c \in \{1,\cdots,m\}$.
 \item $d^lb^l_c$ for all $l \in \{1,2,3\}$ and $c \in \{1,\cdots,m\}$.
 \item $s^l_py^{l'}_c$ if $y_c = p$ for $p$ in $\{1,\cdots,n\}$, $c$ in $\{1,\cdots,m\}$ and $l,l'$ in $\{1,2,3\}$.
 \item $s^l_pz^{l'}_c$ if $z_c = p$ for $p$ in $\{1,\cdots,n\}$, $c$ in $\{1,\cdots,m\}$ and $l,l'$ in $\{1,2,3\}$.
\end{itemize}
We define $k$ in $\K_{3(n + 2m + m + 1)}$ such that $k$ is equal to $2$ everywhere on $\I_{3(n + 2m + m + 1)}$ except at $d^1$,$d^2$ and $d^3$ where it is equal to $1$.

Let $f$ be an assignment for $\phi$, namely a mapping $f$ from $\{1,\cdots,n\}$ into $\{0,1\}$ such that $\phi(f) = 1$. Define $A_f$ to be the subset of $FIX_{3(n + 2m + m + 1)}(G,k)$ such that for $a$ in $A_f$,
\begin{itemize}
\item $a(y^{l}_c) = \black$ iff $f(y_c) = 1$, for $c$ in $\{1,\cdots, m\}$ and $l$ in $\{1,2,3\}$,
\item $a(z^{l}_c) = \black$ iff $f(z_c) = 1$, for $c$ in $\{1,\cdots, m\}$ and $l$ in $\{1,2,3\}$.
\end{itemize}

We claim three things. First, for every satisfying assignment $sat$ of $\phi$, the cardinality of $A_{sat}$ is equal to $1$. Second, If $0$ is the all zero assignment of $\phi$, the cardinality of $A_0$ is equal to $1$. Third, for every non-satisfying assignment $nsat$ of $\phi$ different than the all zero assignment, the cardinality of $A_{nsat}$ is equal to $8$.

Let $sat$ be a satisfying assignment of $\phi$. For $a$ in $A_{sat}$, all nodes corresponding to the same variable have the same action, and by lemma \ref{BSubgraphProp}, since $a$ is a fixed point, all $s^l_p$ have the same color as the nodes connected to them. Since $sat$ is a satisfying assignment, at least one clause is satisfied. Let that clause be $c_{sat}$. It then follows that since $a$ is a fixed point, $b^l_{c_{sat}}$ is $\black$, and so $d^l$ is $\black$. All other actions are then deterministically set. Therefore, $A_{sat} = \{a\}$.

In the case of the all zero assignment $0$, let $a$ be an element of $A_0$. Again, all nodes corresponding to the same variable have the same action, and by Lemma \ref{BSubgraphProp}, all $s^l_p$ have the same color as the nodes connected to them. The assignment $0$ is non-satisfying, therefore $d^l$ is $\white$. It then follows that all nodes in the graph are $\white$. Therefore $A_{0} = \{a\}$.

Finally, let $nsat$ be a non-satisfying assignment of $\phi$ different than the all zero assignment, and let $a$ be an element of $A_{nsat}$. First we have that all $s^l_p$ have the same color as the nodes connected to them. Since $nsat$ is not the all zero assignment, there exists $c$ such that either $y^l_c$ is $\black$ or $z^l_c$ is $\black$. Pick $l$ in $\{1,2,3\}$. Suppose that $d^l$ is $\black$ then $b^l_c$ is $\black$ and the color of all $b^l_{c'}$ with $c\neq c'$ are deterministically set. Suppose that $d^l$ is $\white$ then $b^l_{c'}$ is $\white$ for all $c'$. Therefore, $|A_{nsat}| = 8$.

For assignments $f$ and $f'$ for $\phi$, clearly if $f \neq f'$ then $A_f \cap A_{f'} = \emptyset$. It also follows from Lemma \ref{BSubgraphProp} that for any fixed point $a$ of $G_k$, there exists an assignment $f$ of $\phi$ such that $a \in A_f$. Indeed all the nodes corresponding to the same variable share the same color. Then consider a set of $n$ nodes corresponding to the different variables (such a set exists since we assumed that all literals appear in the formula), the coloring of those nodes translates to an assignment $f$ of $\phi$ such that $A_f$ contains the fixed point considered.

Let $\#sat$ and $\#nsat$ be respectively the number of satisfying and non-satisfying assignments of $\phi$. Let $F$ be the cardinality of $FIX_{3(n + 2m + m + 1)}(G,k)$, then
\begin{equation}
 \#sat + \#nsat = 2^n \quad \text{ and } \quad \#sat + 8(\#nsat - 1) + 1 = F.\nonumber
\end{equation}
The graph is bipartite and can be constructed in polynomial time, the system of equation can also be solved in polynomial time.
\end{proof}

If we define \#bipartite-CYCLE and \#bipartite-2CYCLE to be respectively the counting problems \#CYCLE and \#2CYCLE while restricting an input $<n,G,k>$ to have $G$ a bipartite graph in $G_n^b$, we arrive to the following corollaries.

\begin{MyCorr}
\#bipartite-CYCLE is \#P-Complete.
\end{MyCorr}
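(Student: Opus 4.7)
The plan is to prove \#bipartite-CYCLE is \#P-Complete by (i) observing membership in \#P and (ii) giving a polynomial-time Turing (in fact parsimonious-up-to-inversion) reduction from \#bipartite-FIX, which is already known to be \#P-Complete by Theorem \ref{SPComplete}.

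First, membership in \#P is straightforward: given an input $<n,G,k>$ and a certificate consisting of an ordered pair of action profiles $(a, G_ka)$, one can verify in polynomial time that this pair forms a convergence cycle by computing $G_k(a)$ and $G_k(G_ka)$ and checking that the latter equals $a$ (covering both fixed points, when $a = G_ka$, and non-degenerate cycles, when $a \neq G_ka$). A standard encoding using one canonical representative per cycle yields a polynomial-time verifier and thus membership in \#P.

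For \#P-hardness, I would reduce from \#bipartite-FIX. Given any input $<n,G,k>$ to \#bipartite-FIX, feed the identical input to an oracle for \#bipartite-CYCLE to obtain the integer $C = |CYCLE_n(G,k)|$. By Lemma \ref{FunctionOFix}, if $F = |FIX_n(G,k)|$ then
\begin{equation}
C \;=\; \frac{F(F-1)}{2} + F \;=\; \frac{F(F+1)}{2}. \nonumber
\end{equation}
This is a monotone quadratic in $F$ on the non-negative integers, and inverting it gives
\begin{equation}
F \;=\; \frac{-1 + \sqrt{1 + 8C}}{2}. \nonumber
\end{equation}
Since $F$ is guaranteed to be a non-negative integer, the value $1 + 8C$ is a perfect square and the expression can be evaluated exactly using integer square-root in time polynomial in the bit-length of $C$ (which is itself polynomial in the input size, as $F \leq 2^n$ forces $C \leq 2^{2n}$). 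This constitutes a polynomial-time Turing reduction from \#bipartite-FIX to \#bipartite-CYCLE, so the latter inherits \#P-hardness. Combined with \#P membership, this yields \#P-Completeness.

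The only subtlety, and the one I would be most careful about in the write-up, is ensuring the reduction is legal: Lemma \ref{FunctionOFix} requires the graph to be bipartite, which is precisely the restriction imposed on both \#bipartite-FIX and \#bipartite-CYCLE, so the same instance $<n,G,k>$ may be passed between them without modification. There is no combinatorial obstacle here — the reduction is essentially arithmetic — so no new gadget construction is needed beyond what was built for Theorem \ref{SPComplete}.
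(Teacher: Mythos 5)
Your proof is correct and takes essentially the same route as the paper, which simply states ``Combine Lemma \ref{FunctionOFix} and Theorem \ref{SPComplete}''; you have merely spelled out the details of that combination, namely the \#P membership check and the explicit inversion of $C = F(F+1)/2$ to recover $F$ from the oracle's answer. No gaps.
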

\begin{proof}
Combine Lemma \ref{FunctionOFix} and Theorem \ref{SPComplete}.
\end{proof}

\begin{MyCorr}
\#bipartite-2CYCLE is \#P-Complete.
\end{MyCorr}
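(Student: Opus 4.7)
The problem \#bipartite-2CYCLE lies in \#P, since given an action configuration $a$ one can check in polynomial time whether $a \neq G_k a$ and $G_k^2 a = a$, which witnesses membership of $\{a, G_k a\}$ in the set of non-degenerate cycles; standard counting machinery then puts the problem in \#P. My plan for hardness is to reduce \#bipartite-FIX to \#bipartite-2CYCLE using the exact count relation established in Lemma \ref{FunctionOFix}.

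Specifically, for any input $<n,G,k>$ with $G$ bipartite, Lemma \ref{FunctionOFix} gives $|CYCLE_n(G,k)| = F(F-1)/2 + F$ where $F = |FIX_n(G,k)|$. Since $CYCLE_n(G,k)$ is partitioned into fixed-points and non-degenerate cycles, the number of non-degenerate cycles is exactly
\begin{equation}
|2CYCLE_n(G,k)| = |CYCLE_n(G,k)| - F = \frac{F(F-1)}{2}. \nonumber
\end{equation}
Hence, given an oracle for \#bipartite-2CYCLE returning some value $N$, I can recover $F$ by solving the quadratic $F^2 - F - 2N = 0$, yielding $F = (1 + \sqrt{1+8N})/2$, which is an integer and is computable in polynomial time from $N$. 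This is a parsimonious (up to a fixed arithmetic post-processing) polynomial-time Turing reduction from \#bipartite-FIX to \#bipartite-2CYCLE. Since \#bipartite-FIX is \#P-Complete by Theorem \ref{SPComplete}, \#bipartite-2CYCLE is \#P-hard, and combined with membership in \#P it is \#P-Complete.

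There is essentially no obstacle here: the bipartite-restricted equality $|CYCLE| = F + \binom{F}{2}$ already does all the work by pairing distinct fixed-points with 2-cycles in the 2-Partition decomposition. The only subtle point is making sure the reduction is polynomial-time computable, which is immediate as the instance $<n,G,k>$ is passed through unchanged and the final inversion involves only integer square-root extraction on a number of bit-length polynomial in the input.
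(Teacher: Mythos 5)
Your proof is correct and follows exactly the route the paper intends: the paper's own proof is the one-line instruction ``Combine Lemma \ref{FunctionOFix} and Theorem \ref{SPComplete},'' and your argument simply spells out that combination, namely that the number of non-degenerate cycles equals $F(F-1)/2$, from which $F$ is recovered by inverting the quadratic, reducing \#bipartite-FIX to \#bipartite-2CYCLE. Nothing further is needed.
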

\begin{proof}
Combine Lemma \ref{FunctionOFix} and Theorem \ref{SPComplete}.
\end{proof}

The results stated in theorem \ref{CCycle}, \ref{CFix} and \ref{CTwoCycle} hold then by inclusion. We can even claim stronger statements. Notice that we only used thresholds equal to $1$ or $2$. Even more, we could make the construction by restricting the thresholds to be only equal to $2$ everywhere simply by merging $d^1$, $d^2$ and $d^3$ into one node. In some sense, the complexity is not truly coming from the heterogeneity of the threshold, but rather from the threshold rule and the network complexity. General graph structures do indeed incur a good amount of complexity, however ordered graph structure can be fairly tractable. For instance, counting over complete graphs is in FP (see \cite{MyThesis}).

\subsection{Reachability and Counting Predecessors}

We proceed to answer the question of reachability. Given a graph structure $G$, a threshold distribution $k$ and some action configuration $a$, can $a$ be reached from some action configuration? We define the following:

\begin{MyDef}
 The language PRED consists of all 4-tuples $<n,G,k,a>$, where $n$ is a positive integer, $(G,k,a)$ belongs to $\G_n\times\K_n\times\A_n$ with $G_k(a)^{-1} \neq \emptyset$.
\end{MyDef}

We get the following result:

\begin{MyThm}
 PRED is NP-Complete.
\end{MyThm}

\begin{proof}
	First, clearly PRED is in NP. We now perform a reduction from the decision problem 3SAT that takes $<\phi>$ as input where $\phi$ is a boolean formula in 3-CNF\footnote{A boolean formula is in Conjunctive Normal Form (CNF) if it is a conjunction of disjunction clauses. Formally, let $x_1, \cdots x_n$ be boolean variables, a boolean formula $\phi$ is in CNF-form if $\phi$ is written as $\phi(x_1, \cdots, x_n) = \bigwedge_{i=1}^m  (y_{i,1} \vee \cdots \vee y_{i,k_m})$ where for each $m$, $k_m$ is a positive integer, and $y_.$ represents either $x_i$ or its negation $\neg x_i$ for some $i$. For each possible $i$ and $j$, $y_{i,j}$ is called a \emph{literal}. For each $m$, the disjunction of literals $(y_{i,1} \vee \cdots \vee y_{i,k_m})$ is called a clause.  A k-CNF is a CNF formula where each clause contains no more than k literals.} and outputs whether or not there exists a satisfying assignments. This problem is known to be NP-Complete (See \cite{SIP01}). Let $\phi$ be a 3-CNF formula of $m$ clauses and $n$ variables $x_1,\cdots,x_n$.
Namely,
\begin{equation}
  \phi(x_1,\cdots,x_n) = \bigwedge_{c = 1}^m (y_c \vee w_c \vee z_c),\nonumber
\end{equation}
We construct an undirected graph $G(\I_{4n + m + 1},E)$ as follows. We consider $\I_{4n + m + 1}$ and label the players as
\begin{itemize}
	\item $v_p,v'_p,o_p,t_p$ for $1\leq p \leq n$.
	\item $s_c$ for $1\leq c \leq m$.
	\item $u$.
\end{itemize}
We let E contains the following undirected edges:
\begin{itemize}
\item $o_pv_p$, $o_pv'_p$, $t_pv_p$, and  $t_pv'_p$ for all $p$.
\item $s_cv_p$ if and only if $x_p \in \{y_c,w_c,z_c\}$ and $s_cv'_p$ if and only if $\neg x_p \in \{y_c,w_c,z_c\}$ for all $p$ and $c$.
\item $uv_p$ and $uv'_p$ for all $p$. 
\end{itemize}
We define $k$ in $\K_{4n+m+1}$ to be equal to $1$ everywhere on $\I_{4n + m + 1}$ except at $t_1,\cdots,t_m$ where it is equal to $2$.
Finally, we construct $a$ in $\A_{4n+m+1}$ such that $a$ is $\black$ on $\I_{4n + m + 1}$ except at $t_1,\cdots,t_m$ where it $\white$.
We claim that there exists $b$ in $\A_{4n+m}$ such that $a = G_kb$ if and only if $\phi$ is satisfiable. To see that, suppose such a $b$ exists. We have that $v_p$ and $v'_p$ represent the variable $x_n$ and its negation $\neg x_n$ respectively. The node $o_p$ being $\black$ enforces either $v_p$ to be $\black$ or $v'_p$ to be $\black$, $t_p$ being $\white$ enforces either $v_p$ to be $\white$ or $v'_p$ to be $\white$. It follows that $v_p$ and $v'_p$ are of opposite colors in $b$. Finally, $s_c$ being $\black$ enforces that clause $c$ is satisfied. The node $u$ is only needed to ensure connectedness of the graph. To prove the converse, suppose $\phi$ has a satisfying assignment, let $sat$ be such an assignment. Construct $b$ as follows: make $b$ equal to $\black$ everywhere on $\I_{4n + m + 1}$ except on $v_p$ and $v'_p$ for all $p$. Finally, set $b_{v_p} = \black$ if and only if $x_p = 1$ and $b_{v'_p} =  \black$ if and only if $x_p = 0$ for all $p$.
Finally, the construction of the graph is done in polynomial time.
\end{proof}

Given a graph structure $G$, a type distribution $k$ and a configuration $a$, suppose we want to compute the number of configurations $b$ from which $a$ can be reached by applying $G_k$ only once on $b$. We define:

\begin{MyDef}
The counting problem $\#PRED$ takes $<n,G,k,a>$ as input, where $n$ is a positive integer and $(G,k,a)$ and outputs the cardinality of $G_k^{-1}(a)$.
\end{MyDef}

As a corollary from the hardness of PRED, we get:
\begin{MyCorr}
 $\#PRED$ is \#P-Complete.
\end{MyCorr}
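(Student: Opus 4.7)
The plan has two parts: showing that $\#PRED$ is in $\#P$ and then reducing $\#3SAT$ to $\#PRED$ up to a polynomial-time computable multiplicative factor. Membership in $\#P$ is immediate, since a candidate predecessor $b \in \A_n$ can be verified against $G_k(b) = a$ in polynomial time by applying the threshold rule at each vertex.

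For \#P-hardness, I would build on the reduction from the preceding proof that PRED is NP-Complete, enhancing it just enough to turn it into a (near-)parsimonious reduction from $\#3SAT$. Given a 3-CNF formula $\phi$ with $n$ variables and $m$ clauses, take exactly the graph, thresholds, and target configuration $(G, k, a)$ constructed in that proof, and then attach a single pendant leaf $\ell_u$ to the connectivity node $u$, with $k_{\ell_u} = 1$ and $a_{\ell_u} = \black$. Because $\ell_u$ has $u$ as its unique neighbor, the constraint $G_k(b)_{\ell_u} = \black$ forces $b_u = \black$ in every predecessor $b$ of $a$. Since $u$ is adjacent to every $v_p$ and $v'_p$ and all variable nodes have threshold $1$, fixing $b_u = \black$ automatically satisfies the image constraints at each $v_p$ and $v'_p$, regardless of the auxiliary values $b_{o_p}, b_{t_p}, b_{s_c}, b_{\ell_u}$. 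This decouples the ``variable'' part of $b$ from the ``auxiliary'' part.

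Under this decoupling, the counting becomes transparent. The image constraints at $o_p$, $t_p$, and $s_c$ still enforce, exactly as in the original proof, that $(b_{v_p}, b_{v'_p})_{p=1}^n$ encode a satisfying assignment of $\phi$, contributing a factor of $\#SAT(\phi)$. The auxiliary values $b_{o_p}, b_{t_p}, b_{s_c}, b_{\ell_u}$ are then entirely unconstrained, contributing an explicit uniform factor of $4^n \cdot 2^m \cdot 2 = 2^{2n+m+1}$. Therefore
\begin{equation}
|G_k^{-1}(a)| \;=\; 2^{2n+m+1} \cdot \#SAT(\phi), \nonumber
\end{equation}
and one recovers $\#SAT(\phi)$ from the output of $\#PRED$ by dividing, a polynomial-time operation. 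This yields a polynomial-time Turing reduction from $\#3SAT$ to $\#PRED$, completing the argument.

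The main obstacle I anticipate is the bookkeeping in this counting step: in particular, verifying that no hidden constraints propagate from $\ell_u$ back through $u$ to the auxiliary nodes, which would reintroduce the coupling that the leaf was designed to break. That $\ell_u$ interacts only with $u$, combined with the fact that $u$'s single image constraint $G_k(b)_u = \black$ is auto-satisfied whenever at least one $v_p$ or $v'_p$ is $\black$, should ensure no such back-propagation occurs; once that check is done the theorem follows.
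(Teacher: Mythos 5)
Your proof is correct, and it is worth noting that it does more work than the paper does: the paper states this result with no proof at all, presenting it as an immediate consequence of PRED being NP-Complete. That inference is not automatic. NP-completeness of a decision problem yields \#P-hardness of the associated counting problem only if the underlying reduction is parsimonious (or at least weakly so), and the paper's reduction from 3SAT to PRED is not: in that construction a predecessor $b$ of $a$ must, in addition to encoding a satisfying assignment on the $v_p,v'_p$ nodes, color the auxiliary nodes $o_p,t_p,s_c,u$ so that every $v_p$ and $v'_p$ sees at least one $\black$ neighbor, and the number of such auxiliary colorings is itself a monotone covering count that depends on the clause--variable incidence structure and is not obviously computable in polynomial time. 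Your pendant-leaf gadget $\ell_u$ is precisely the right fix: forcing $b_u=\black$ auto-satisfies all of those covering constraints (since $u$ is adjacent to every $v_p$ and $v'_p$, each of threshold $1$), fully decouples the auxiliary coordinates, and yields the exact identity $|G_k^{-1}(a)|=2^{2n+m+1}\cdot\#SAT(\phi)$, i.e.\ a weakly parsimonious reduction from \#3SAT. Your bookkeeping checks out: $b_u$ is forced, the $2n+m+1$ free bits are exactly $o_p,t_p$ ($2n$), $s_c$ ($m$) and $\ell_u$ ($1$), the constraint at $u$ is auto-satisfied because exactly one of each pair $v_p,v'_p$ is $\black$, and no constraint propagates back from $\ell_u$ since its only neighbor is $u$. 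In short, your argument supplies the justification the paper omits; the only cosmetic additions I would make are a remark that the modified graph remains connected and polynomial-time constructible, and an explicit citation for the \#P-completeness of \#3SAT.
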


However, suppose that we restrict the counting to only the action configurations that are reachable from some action configuration. Specifically, we restrict the counting to only the elements in $PRED$. From this perspective, we are computing the `fan-in' of given configuration.

\begin{MyDef}
The counting problem \#reachable-PRED takes $<n,G,k,a>$ as input, where $n$ is a positive integer, $(G,k,a)$ belongs to $\G_n{\times}\K_n{\times}\A_n$ and $G_k(a)^{-1} \neq \emptyset$ and outputs the cardinality of $G_k^{-1}(a)$.
\end{MyDef}

We get the following result:
\begin{MyThm} \label{CPred}
 \#reachable-PRED  is \#P-Complete.
\end{MyThm}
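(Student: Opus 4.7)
The plan is to show membership in \#P and then \#P-hardness via a reduction from \#3SAT that builds on the graph construction used to prove NP-completeness of PRED.

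Membership in \#P is immediate: given a candidate $b \in \A_n$, we verify $G_k(b) = a$ in $O(n^2)$ time by computing each $(G_k b)_i$ via Proposition \ref{ruleInt} and comparing with $a_i$.

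For hardness, I would adapt the NP-completeness reduction for PRED. Recall that the construction sends a 3CNF $\phi$ to $(G,k,a)$ so that variable pairs $(v_p, v'_p)$ encode assignments and the clause nodes $s_c$ enforce clause-satisfaction. That reduction is not parsimonious because several auxiliary nodes ($o_p, t_p, s_c, u$) contribute multiplicative degrees of freedom in $b$: their own values either are not inspected by any decision rule, or the constraints they appear in are automatically satisfied once the variable pairs encode a valid assignment. To upgrade the reduction for counting I would attach small deterministic forcing gadgets to each such node — for a node $x$ to be pinned to color $c$, introduce a leaf $x^\ast$ of threshold $1$ connected to $x$ and set $a_{x^\ast}$ so that the condition $(G_k b)_{x^\ast} = a_{x^\ast}$ forces $b_x = c$; when the leaf itself leaves a residual free coordinate $b_{x^\ast}$, a slightly larger Y-shaped gadget (in the spirit of those used in the symmetric-expansion of Section 4.3) can collapse that freedom as well. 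After this augmentation, the map from satisfying assignments of $\phi$ to preimages of the new $a$ becomes a bijection, giving $|G_k^{-1}(a)| = |\mathrm{sat}(\phi)|$ — or, if multiplicative factors prove cleaner, $|G_k^{-1}(a)| = 2^{f(n,m)} \cdot |\mathrm{sat}(\phi)|$ for a polynomial-time computable $f$, which likewise recovers the count by division.

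The remaining subtlety is the promise $G_k^{-1}(a) \neq \emptyset$: if $\phi$ is unsatisfiable, the construction above violates it. I would resolve this by adjoining a small gadget, attached to the main construction via a single harmless edge to keep $G$ connected, engineered so that its local constraints admit exactly one preimage regardless of $\phi$. This contributes an additive $+1$ to the count, making $|G_k^{-1}(a)| = |\mathrm{sat}(\phi)| + 1 \geq 1$, so the promise is always satisfied and $|\mathrm{sat}(\phi)|$ is recovered by subtracting $1$. The main obstacle I anticipate is verifying self-consistency of the forcing gadgets: each gadget introduced to pin a previously-free node can itself contain nodes whose preimage value is free, and care is needed either to truncate such chains or to absorb them into a clean, deterministic multiplicative factor — a localized case analysis on thresholds $1$ and $2$, mirroring the exhaustive local checking already used in Lemma \ref{BSubgraphProp}, should suffice.
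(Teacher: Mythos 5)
Your membership argument is fine, and you have correctly isolated the two real difficulties --- making the reduction parsimonious and guaranteeing the promise $G_k^{-1}(a)\neq\emptyset$ --- but both of the mechanisms you propose for handling them break down. First, the forcing gadgets: to read off $b_x$ you attach a threshold-$1$ leaf $x^\ast$ to $x$, but $x^\ast$ thereby becomes a neighbor of $x$, and in the PRED construction every auxiliary node's \emph{own} threshold-$1$ condition is doing enforcement work (e.g.\ $s_c$ being $\black$ in $a$ is precisely what encodes ``clause $c$ is satisfied in $b$''). Once $x^\ast$ is adjacent to $s_c$, a free value $b_{x^\ast}=\black$ satisfies $s_c$'s threshold by itself and the clause constraint evaporates; pinning $b_{x^\ast}$ in turn requires another leaf, and the chain does not truncate --- this is not a routine local case analysis, it is the crux. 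Nor does the fallback of leaving the auxiliary nodes free and dividing out a multiplicative factor work: each literal node $v_p$ is $\black$ in $a$ with threshold $1$, so a preimage must give $v_p$ at least one $\black$ neighbor among $\{o_p,t_p,u\}\cup\{s_c\}$, and the resulting ``factor'' is the number of vertex subsets covering all literal nodes --- itself a monotone-CNF satisfiability count with no evident polynomial-time formula. Second, the promise gadget: preimage counts over (near-)independent pieces of the graph \emph{multiply}, they do not add, because membership in $G_k^{-1}(a)$ is a conjunction of local conditions. A side gadget admitting exactly one preimage contributes a factor of $1$, so when $\phi$ is unsatisfiable the total count is still $0$ and the promise still fails; you cannot manufacture a ``$+1$'' this way.

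The paper sidesteps both problems by not reusing the PRED construction at all. It reduces from \#monotone-2SAT, builds a graph with one node per variable, one node per clause adjacent to its two variables, and a hub $d$ adjacent to all variable nodes, gives every node threshold $1$, and takes $a$ to be the all-$\black$ configuration. Since all-$\black$ is a fixed point of $G_k$, the promise is satisfied automatically for every $\phi$, and the preimage condition at each clause node is exactly the (monotone) clause-satisfaction condition on the variable nodes. If you want to salvage your route, the lesson is to choose the target $a$ to be a configuration for which you can exhibit a preimage outright (ideally a fixed point), and to design the incidence structure so that the only constraints a preimage must satisfy are the clause constraints, rather than retrofitting parsimony onto a decision-problem gadget by pinning nodes one leaf at a time.
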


\begin{proof}
	First, the fact that \#reachable-PRED belongs to \#P is clear. We now perform a reduction from the counting problem \#monotone-2-SAT that takes $<\phi>$ as input where $\phi$ is a monotone boolean formula in 2-CNF\footnote{A boolean formula in CNF is monotone if no clause contains a negation of a variable.} and computes the number of satisfying assignments. This problem is shown to be \#P-Complete (see \cite{VAD01}). Let $\phi$ be a Monotone-2-CNF formula of $m$ clauses and $n$ variables $x_1,\cdots,x_n$.
Namely,
\begin{equation}
  \phi(x_1,\cdots,x_n) = \bigwedge_{c = 1}^m (y_c \vee z_c), \nonumber
\end{equation}
We construct a graph $G(\I_{n+m+1},E)$ as follows. We consider the set $\I_{n+m+1}$ and label the nodes as follows $v_1,\cdots,v_n$, $u_1,\cdots u_m$ and $d$. Construct the edge set $E$ in such a way that $u_cv_p$ belongs to $E$ if and only if $x_p$ appears in clause $c$. Finally we let $dv_p$ belong to $E$ for every $p$. Define $k$ in $\K_{n+m+1}$ to equal to $1$ everywhere on $\I_{n+m+1}$.

Let $a$ be the action configuration in $\A_{n+m+1}$ such that $a$ is equal to $\black$ everywhere. Clearly $a$ is reachable in $(G,k)$ since it is a fixed-point. We claim that the number of configurations preceding $a$, i.e. $|G_k^{-1}a|$ is equal to the number of satisfying assignments for $\phi$. To show that we set up a bijection from $G_k^{-1}a$ into the set of satisfying assignment. Let $b$ be any action configuration in $G_k^{-1}a$, then necessarily $b$ is $\black$ on $u_l$ for all $l$ and on $d$. Any coloring configuration on the $v_p$ nodes that would induce an action configuration in $G_k^{-1}a$ is actually a satisfying assignment, and any satisfying assignment translates as a coloring on the $v_p$ nodes that yields an action configuration in $G_k^{-1}a$.
\end{proof}

We transition to the resilience analysis context in the next section.

\section{Resilience of Networks}

In this section, we revert back to the primary model (see subsection \ref{Model_Dynamics}) where we consider \emph{types} instead of \emph{thresholds}, namely $\Q_n$ instead of $\K_n$. All the needed definitions in this paper including $\K_n$ naturally extend to the set $\Q_n$. Mainly, for $G(I_n,E)$ in $\G_n$ and $q$ in $\Q_n$, we denote by $G_q$ the map from $\A_n$ into $\A_n$ such that for player $i$, $(G_q a)_i = \black$ if and only if (strictly) more than $q_id_i$ players are in $a^{-1}(\black)\cap \N_i$.\\ 

We explain the resilience problem formulation. Given a graph structure $G$ and a positive integer $K$, we suppose that at most $K$ players in the network start playing action $\black$. We want to allocate a type distribution $q$ over the players, so that the dynamics depicted in Proposition~\ref{rule} lead all the agents to play action $\white$ at the limit. We define a \emph{cost function} from $\Q_n$ into the reals $\mathbb{R}$, and the goal is to find a type distribution (recovering the network) that minimizes this cost function. From this perspective, the resilience measure would be this minimal cost of type investment required to recover the network $G$ from a perturbation (from the all $\white$ configuration) of magnitude $K$. In this sense, the lower the resilience measure is for a graph $G$, the more robust $G$ is against perturbations, in that we mean the less costly it is to allocate types to have $G$ recover.

\subsection{The Resilience Measure}
We define $\lone{.}$ to be the map from $\Q_n$ into the reals $\mathbb{R}$ such that, for $q$ in $\Q_n$:
\begin{equation}
 \lone{q} = \sum_{i\in \I_n}{q_i}.\nonumber
\end{equation}
We restrict the analysis in the paper to the map $\lone{.}$.
Let $K$ be a positive integer, we denote by $\A^K_n$ the subset of $\A_n$ such that, $a$ is in $\A^K_n$ if and only if the cardinality of $a^{-1}(\black)$ is at most $K$. We denote respectively by $\white^n$ and $\black^n$ the (constant) action configurations in $\A_n$ mapping each player in $\I_n$ into $\white$ and $\black$ respectively. Recall that for $a$ and $b$ in $\A_n$, we have $b\R_{G_q}a$ if and only if $b = G^m_qa$ for some non-negative integer $m$ (see Section 3). Given a graph $G$ in $G_n$, we define $\Q^{G,K}_n$ to be the subset of $\Q_n$ such that for every $q$ in $\Q^{G,K}_n$ and $a$ in $\A^K_n$, we have $\white^n\R_{G_q}a$ . We define $\mu^K_n(G)$ to be the resilience measure of a graph $G$ with respect to $K$ deviations to be
\begin{equation}
 \mu^K_n(G) = \inf\{\ \lone{q} : q \in \Q^{G,K}_n\ \}.\nonumber
\end{equation}
Note that without any loss of generality, we may assume that for any $q$ in $\Q^{G,K}_n$, $q_i$ is of the form $m/d_i$ for $0 \leq m \leq d_i$.\\

Recall that in Section 2, we assumed that players break ties by playing $\white$. This allows $\white$ to possibly be an always best-response action, and ensures that $\Q^{G,K}_n$ is never empty. Indeed, allocating types of $1$ to all nodes always recover the network.  If this assumption seems to be artificial, we can always recast the resilience problem in terms of integer thresholds modifying the (considered) cost function to minimize to be `roughly' the sum of thresholds normalized by the degree (specifically mapping a threshold distribution $k$ to $\sum_i \frac{k_i - 1}{d_i}$). In this setting, there are no natural restrictions on the thresholds and it is possible to have a node always decide on $\white$ whenever it is allowed to play.

\subsection{Lower Bounds}

We prove lower bounds on the resilience measure.

\begin{MyThm}
 The resilience measure $\mu^K_n$ is greater than or equal to $1$ for all $K \leq n$.
\end{MyThm}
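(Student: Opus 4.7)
The strategy is to reduce to the case $K=1$ via monotonicity of the resilience measure, then argue by contradiction. Since $\A^K_n\subseteq \A^{K'}_n$ whenever $K\leq K'$, the defining reachability condition for $\Q^{G,K}_n$ is weaker than that for $\Q^{G,K'}_n$, so $\Q^{G,K}_n\supseteq \Q^{G,K'}_n$, and consequently $\mu^K_n(G)$ is non-decreasing in $K$. Hence it suffices to prove $\mu^1_n(G)\geq 1$.

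Fix $q\in \Q^{G,1}_n$ and suppose toward contradiction that $\lone{q}<1$. The goal is to produce a single-black-node configuration whose orbit under $G_q$ never returns to $\white^n$, contradicting $q\in \Q^{G,1}_n$. The trivial sub-case is immediate: if some $q_i\geq 1$ then $\lone{q}\geq 1$ already, so we may assume $q_i<1$ for every $i$. Under this assumption, each node satisfies $d_i>q_id_i$, so $\black^n$ is a fixed point of $G_q$; it therefore suffices to identify a single-black seed whose cascade saturates the graph.

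For a candidate seed $j$, let $B_t$ denote the black set at time $t$, so $B_0=\{j\}$, $B_1=\{i\in \N(j):q_id_i<1\}$, and inductively $B_{t+1}=\{i\in \I_n :|\N_i\cap B_t|>q_id_i\}$. The dynamical rule implies the edge-counting inequality
\[
\sum_{i\in B_{t+1}} q_id_i \;<\; e(B_{t+1},B_t) \;\leq\; \sum_{l\in B_t}d_l,
\]
where $e(\cdot,\cdot)$ counts edges between the two sets. Iterating this against $\lone{q}<1$ via an averaging argument over candidate seeds is the natural route: summing first-step cascade sizes over all $j$ gives $\sum_j |B_1| = \sum_i d_i\mathbf{1}[q_id_i<1]$, and $\lone{q}<1$ forces this to be substantial (nodes with $q_i d_i \geq 1$ already contribute at least $\sum_{q_i d_i \geq 1} 1/d_i$ to $\lone{q}$), so some seed $j$ must ignite a self-sustaining sequence $(B_t)$ terminating either at the fixed point $\black^n$ or at a non-trivial $2$-cycle, which is the only alternative by Theorem~\ref{ConvThm}.

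The main obstacle is upgrading the aggregate hypothesis $\lone{q}<1$ into a local existence statement for a \emph{specific} bad seed; a priori the topology could trap small cascades inside a region whose boundary absorbs the propagation. Resolving this requires balancing the accumulated ``resistance'' $\sum_{i\in B_{t+1}}q_id_i$ against the ``capacity'' $\sum_{l\in B_t}d_l$ at each step, likely through a telescoping potential function along the orbit, to certify that $B_t$ strictly grows until it fills $\I_n$ and the orbit is stuck away from $\white^n$.
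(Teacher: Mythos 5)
Your reduction to $K=1$ via the monotonicity $\Q^{G,K'}_n\subseteq\Q^{G,K}_n$ for $K\leq K'$ is correct and is exactly the paper's first step. But from there the argument is not a proof: you set up the contradiction ($\lone{q}<1$ should yield a one-black-node seed from which $\white^n$ is unreachable), write down a correct edge-counting inequality, and then explicitly defer the crux (``the main obstacle is upgrading the aggregate hypothesis into a local existence statement for a specific bad seed \dots resolving this requires \dots likely through a telescoping potential function''). That deferred step is the entire content of the theorem, and the route you sketch for it points in the wrong direction: you aim to show the cascade ``saturates the graph'' and terminates at $\black^n$, but the obstruction to recovery is typically a \emph{small} persistent structure, not a global cascade. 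For instance, if two adjacent nodes $i,j$ both have type $0$, seeding $\{i\}$ produces a black set that oscillates forever between (supersets of) $\{i\}$ and $\{j\}$ without ever growing to fill the graph; nothing forces $B_t$ to increase, and your averaging over seeds (a large $\sum_j|B_1|$) does not imply any single seed is self-sustaining.

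The paper avoids the cascade analysis entirely by extracting \emph{local necessary conditions} on $q\in\Q^{G,1}_n$ and summing them. Concretely: (i) no two type-$0$ nodes can be adjacent (else the oscillation above); (ii) a node $j$ adjacent to $p$ type-$0$ nodes must have $q_j\geq p/d_j$ (seed $\{j\}$: at time $1$ all $p$ type-$0$ neighbors turn black, so $j$ relights at time $2$ unless $q_jd_j\geq p$, giving a persistent $2$-cycle otherwise); (iii) every node with nonzero type has $q_i\geq 1/d_i$. Writing $k=|q^{-1}(0)|$ and covering the type-$0$ nodes by a set $S$ of $p$ positive-type nodes, these conditions give $\lone{q}\geq \bigl(k+(n-p-k)\bigr)/d_{max}=(n-p)/d_{max}$, and the bound $p\leq n-d_{max}$ (forced by the existence of a max-degree vertex) closes the argument. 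If you want to salvage your approach, conditions (i)--(iii) are the ``local bad seeds'' you were missing; once you have them, no potential function or growth argument is needed.
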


\begin{proof}
Without any loss of generality, we may assume that $K=1$. Let $G(\I_n,E)$ be a graph in $\G_n$, and set $d_{max} = n - 1 - m = \max_{i\in \I_n} d_i$, for some non-negative integer $m < n-2$ since the graph is connected. Let $q$ in $\Q_n$ be given, and let $k$ be the number of zero coordinates in $q$, namely the cardinality of $q^{-1}(0)$. If $k = n$, then at least two nodes $i$ and $j$ with zero $q_i$ and $q_j$ satisfy $ij \in E$, and it would follow that $q \notin \Q^{G,1}_n$. We then have $k<n$. If $k = 0$, then necessarily $\lone{q} \geq \frac{n}{d_{max}} \geq 1$. So we suppose $k\neq 0$

Since $d_{max} = n - 1 - m$, there exists at least one player that is connected to $n - m - 1$ players, or put differently, that leaves $m$ players not connected to it. Suppose that player $i$ with $d_i = d_{max}$ has $q_i = 0$, it would follow that $\lone{q} \geq d_{max}/d_{max} \geq 1$. So assume that at least some player $i$ with $d_i = d_{max}$ has $q_i>0$. Then the players in $q^{-1}(0)$ can be covered by a minimal set $S$ of $p \leq 1 + m$ distinct players with non-zero types, one of them being player $i$ i.e. each player in $q^{-1}(0)$ has at least one of the players in $S$ as a neighbor. If $q$ belongs to $\Q^{G,K}_n$, then every player $i$ with $q_i = 0$ will contribute (independently) at least $1/d_j$ to every $q_j$ with $j$ in $\N_i$. Indeed, if $q$ belongs to $\Q^{G,K}_n$ then each player $j$ connected to $p$ nodes having a type of $0$ will necessarily have a type greater than $\frac{p}{d_j}$. Therefore:
\begin{equation}
 \lone{q} \geq \frac{k}{d_{max}} + \frac{n - p - k}{d_{max}} \geq  \frac{n - (m+1)}{d_{max}} = 1, \nonumber
\end{equation}
where $k/d_{max}$ comes from the contribution of the $k$ zero type players and is taken into account in the types of the players in $S$, and each remaining $n-p-k$ player will contribute at least $1/d_{max}$.
\end{proof}

We show that the bound is tight.

\begin{MyPro}
 This bound is achieved by the star graph $S_n$ for all $n$ and $K$. The star graph $S_n$ is the unique optimal solution for $K>1$. 
\end{MyPro}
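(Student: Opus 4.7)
The plan is to split the proof into two parts: exhibit a distribution on $S_n$ that attains the bound, then show that no other connected graph attains the bound when $K > 1$.

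For the upper bound, let $c$ denote the center of $S_n$ and $\ell_1, \ldots, \ell_{n-1}$ its leaves, and set $q_c = 1$ and $q_{\ell_i} = 0$, so that $\lone{q} = 1$. Since $q_c d_c = n-1$ cannot be strictly exceeded by any count of black neighbors, the center plays $\white$ at every time $T \geq 1$; since each leaf has threshold $0$ and its unique neighbor is $c$, the leaf plays $\black$ at time $T$ if and only if $c$ was $\black$ at $T-1$, so every leaf plays $\white$ at every $T \geq 2$. Hence $\white^n \R_{G_q} a$ for every $a \in \A_n$ and every $K$, giving $\mu^K_n(S_n) \leq 1$; combined with the matching lower bound this yields equality.

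For uniqueness when $K \geq 2$, suppose $q \in \Q^{G,K}_n$ with $\lone{q} = 1$, and let $Z = q^{-1}(0)$. The first key observation is that $Z$ must be an independent set in $G$: otherwise two adjacent $u, v \in Z$ both starting black form a valid $2$-perturbation, and since $q_u d_u = q_v d_v = 0$ each sees the other as a black neighbor at every time step, so both remain black forever, contradicting recovery. The second observation is that the equality $\lone{q} = 1$ in the lower bound forces $d_{\max}(G) = n - 1$; let $c$ be a universal vertex. Writing $T' = \I_n \setminus (Z \cup \{c\})$, a refinement of the same computation (bounding the contribution of $q_c$ using its $|Z|$ zero-type neighbors, and using $q_i \geq 1/d_i$ for $i \in T'$) gives
\begin{equation*}
\lone{q}\ \geq\ \frac{|Z|}{n-1}\ +\ \sum_{i \in T'}\frac{1}{d_i},
\end{equation*}
which together with $\lone{q} = 1$ forces $q_c = |Z|/(n-1)$, and $q_i = 1/(n-1)$, $d_i = n-1$ for every $i \in T'$ (so every member of $T' \cup \{c\}$ is universal).

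I would then rule out the case $T' \neq \emptyset$; once excluded, we have $q_c = 1$ and $Z = \I_n \setminus \{c\}$, and the independence of $Z$ forces $G$ to have only the $n-1$ edges incident to $c$, so $G = S_n$. If $|Z| \leq 1$, the independence of $Z$ together with the universality of every vertex outside $Z$ forces every vertex to be universal, so $G = K_n$; a direct computation of the dynamics from any $2$-perturbation reaches the all-black fixed point within two time steps, contradicting recovery. If $|Z| \geq 2$, pick distinct $z_1, z_2 \in Z$ and start the $2$-perturbation from $\{z_1, z_2\}$: the thresholds $q_c d_c = |Z| \geq 2$ and $q_t d_t = 1$ for $t \in T'$ yield that at time $1$ exactly $T'$ is black and $c, Z$ are white; a further step shows that at time $2$ exactly $Z$ is black and $c, T'$ are white, and the dynamics then either cycle alternately between these two states (when $|T'| \leq 2$) or collapse to the all-black fixed point (when $|T'| \geq 3$ or $|T'| > |Z|$), and in neither case does $\white^n$ ever appear. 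This contradicts $q \in \Q^{G,K}_n$, so $T' = \emptyset$ and $G = S_n$.

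The main obstacle is the elimination of $T' \neq \emptyset$: it requires a short but careful dynamical computation tracking the actions of $c$, $T'$ and $Z$ over two or three time steps and separating subcases based on $|Z|$ and $|T'|$ to verify that the only candidate distribution compatible with equality in the lower bound always traps the dynamics in a non-trivial length-$2$ cycle or in the monochromatic black fixed point.
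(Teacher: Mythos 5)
Your achievability half is correct and matches the paper: put type $1$ on the center and $0$ on the leaves, observe the center can never see strictly more than $q_cd_c=n-1$ black neighbors, and the leaves follow it with one step of delay. The independence of $Z=q^{-1}(0)$ for $K\geq 2$ is also a sound observation. The problem is in the uniqueness half, at the sentence ``the equality $\lone{q}=1$ in the lower bound forces $d_{\max}(G)=n-1$.'' This does not follow from the lower-bound computation: that computation, with $d_{\max}=n-1-m$, bottoms out at $(n-(m+1))/d_{\max}=1$ for \emph{every} value of $m$, so achieving equality in the final numerical bound carries no immediate information about $d_{\max}$. Ruling out $m\neq 0$ is in fact where the paper's proof does almost all of its work: it covers the zero-type set by a minimal set $S$ of $p\leq m+1$ vertices and splits into the cases $p<m+1$ (where the count already exceeds $1$), $p=m+1$ with two cover vertices adjacent (where $K>1$ forces one of them to have type at least $2/d_i$), and $p=m+1$ with $S$ independent (where every cover vertex has degree at most $d_{\max}-1$, again pushing the sum above $1$). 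You have skipped this entirely and, tellingly, you identify the ``main obstacle'' as the elimination of $T'\neq\emptyset$ --- but that is the easy end of the argument; the missing universal-vertex step is the hard one.

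Once a universal vertex is granted, your refined inequality and your dynamical elimination of $T'\neq\emptyset$ are a workable (and genuinely different) route: the paper's $m=0$ endgame instead argues directly that some node must carry type $1$ and that $k=n-1$ zero types then force the star, without tracking the orbit of a specific $2$-perturbation. Your key dynamical estimate --- that a nonzero-type node $i$ must satisfy $q_id_i\geq|\N_i\cap Z|$ because seeding $i$ black turns all of its zero neighbors black one step later and otherwise traps $i$ in black at even times --- is correct and could plausibly be leveraged into an alternative proof of the universal-vertex claim as well, but as written that claim stands unproved and the proposal is incomplete.
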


\begin{proof}
To show that the bound is achieved by the star network, allocate types on the graph such that the node with degree $n-1$ has a type of $1$ and the rest a type of $0$. We now prove uniqueness for $K>1$.

Let $G(\I_n,E)$ be a graph in $\G_n$, and set $d_{max} = n - 1 - m = \max_{i\in \I_n} d_i$, for some non-negative integer $m < n-2$ since the graph is connected. Let $q$ in $\Q_n$ be given, and let $k$ be the number of zero coordinates in $q$, namely the cardinality of $q^{-1}(0)$. If $m+1 > k$ then:
\begin{equation}
 \lone{q} \geq \frac{n-k}{d_{max}} > 1. \nonumber
\end{equation}
If $m+1 \leq k$, we suppose we can minimally cover those $k$ nodes with $p$ nodes. That is, let $S$ be a subset of $V(G)$ having the smallest cardinality $p$, such that each node of those $k$ nodes is connected to a node in $S$. We have $1 \leq p \leq m+1$.  We suppose $m \neq 0$. If $p < m+1$, we have that:
\begin{equation}
 \lone{q} \geq \frac{k}{d_{max}} + \frac{n-k-p}{d_{max}} \geq \frac{n-p}{d_{max}} > 1. \nonumber
\end{equation}
We then suppose $p = m+1$. If two of the nodes in $S$ are connected by an edge, since $K>1$, one of those nodes (call it $i$) has a type equal to at least $2/d_i$, and so:
\begin{equation}
 \lone{q} \geq \frac{k}{d_{max}} + \frac{1}{d_{max}} + \frac{n-k}{d_{max}} > 1. \nonumber
\end{equation}
Finally, we suppose none of the nodes in $S$ are neighbors, then each can have a maximum degree of $n-1-m-1$, since each node is not connected to the $m$ others and at least one of the zeros is not connected to it. It follows that:
\begin{equation}
 \lone{q} \geq \frac{k}{d_{max}-1} + \frac{n-k}{d_{max}} > 1. \nonumber
\end{equation}

Finally, if $m = 0$, then one node has necessarily a type equal to $1$. If $k < n-1$, then necessarily $\lone{q} > 1$. The remaining case is $k=n-1$, and the only possible graph is the star graph since $q$ belongs to belongs to $\Q^{G,K}_n$ and hence no nodes of type $0$ should be neighbors.
\end{proof}

For $K=1$, the complete graph is also an optimal solution.

\subsection{Upper Bounds}
We prove upper bounds on the resilience measure.
\begin{MyThm}
 The resilience measure $\mu^K_n$ is less than or equal to $n/2$ for all $K \leq n$.
\end{MyThm}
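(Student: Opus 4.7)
The plan is to exhibit, for any connected $G \in \G_n$ with $n \geq 2$, an explicit $q \in \Q_n$ satisfying $\lone{q} \leq n/2$ and $q \in \Q^{G,n}_n$. Since $\Q^{G,n}_n \subseteq \Q^{G,K}_n$ for every $K \leq n$, this will yield $\mu^K_n(G) \leq n/2$. The construction is greedy, based on a vertex ordering: order the players $v_1,v_2,\ldots,v_n$ by nonincreasing degree, $d_{v_1} \geq d_{v_2} \geq \cdots \geq d_{v_n}$, breaking ties arbitrarily. For each $k$, let $b_k$ be the number of neighbors of $v_k$ appearing later in the order, and set $q_{v_k} = b_k/d_{v_k} \in [0,1]$; in particular $q_{v_1}=1$ since all of $v_1$'s neighbors are forward.

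The cost bound follows by rewriting $\lone{q} = \sum_k b_k/d_{v_k}$ as a sum over edges: each $\{u,v\} \in E$ contributes $1/d_{\mathrm{earlier}(u,v)}$ to the total. By the degree ordering, $d_{\mathrm{earlier}} \geq d_{\mathrm{later}}$ on every edge, so $1/d_{\mathrm{earlier}} \leq \frac{1}{2}(1/d_u + 1/d_v)$. Summing and using the identity $\sum_{\{u,v\} \in E}(1/d_u + 1/d_v) = \sum_v d_v/d_v = n$ gives $\lone{q} \leq n/2$.

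For recovery, I proceed by induction on $k$ with the hypothesis that, for every initial configuration and every $t \geq k$, the players $v_1,\ldots,v_k$ all play $\white$ at time $t$. The base case $k=1$ is immediate from $q_{v_1}=1$ and Proposition~\ref{rule}. For the inductive step, if $v_1,\ldots,v_{k-1}$ are white at every time $\geq k-1$, then at time $k-1$ the black neighbors of $v_k$ are confined to $\{v_{k+1},\ldots,v_n\}$ and therefore number at most $b_k = q_{v_k} d_{v_k}$; Proposition~\ref{rule}'s strict-inequality threshold, together with the $\white$-favoring tie-breaker, then forces $v_k$ to play $\white$ at time $k$, and the same bound persists for all $t>k$. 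Taking $k=n$ yields $a_t = \W^n$ for every $t \geq n$, hence $q \in \Q^{G,n}_n$.

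The key care point is that the bound at step $k$ is tight: $v_k$ can have exactly $b_k = q_{v_k} d_{v_k}$ black neighbors at time $k-1$, so the argument genuinely relies on both the strict inequality in the decision rule and the tie-breaking convention $\min\{\white,\black\}=\white$ fixed in Section~2. Weakening either would force inflating each $q_{v_k}$ and could potentially violate the $n/2$ cost bound.
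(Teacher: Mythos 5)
Your proof is correct and follows essentially the same route as the paper's: the identical greedy allocation $q_{v_k}=b_k/d_{v_k}$ obtained by charging each edge $1/d$ of its higher-degree endpoint, the same edge-sum identity $\sum_{\{u,v\}\in E}(1/d_u+1/d_v)=n$ for the cost bound, and the same degree-ordered induction showing the network turns white within $n$ steps. The only cosmetic difference is that you state the inductive hypothesis as holding for all $t\geq k$ rather than just at step $k$, which is a slightly more careful phrasing of the same argument.
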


\begin{proof}
Without any loss of generality, we may assume that $K=n$. In this case, the players start by playing the action configuration $\black^n$, and we need all the players to play $\white$ at the limit. We only need to find a $q$ is in $\Q^{G,n}_n$ with $\lone{q} \leq n/2$.  To this end, impose a strict order relation $<$ on $\I_n$, such that for every $i$ and $j$ in $\I_n$,
\begin{equation} \label{strictOrder}
 i < j \quad \text{if} \quad d_i < d_j. 
\end{equation}
We note that that the statement in (\ref{strictOrder}) is not an \emph{only if} statement. The case where $d_i = d_j$ is taken care of by the fact that $<$ is a strict order.
We construct $q$ as follows: for $i$ in $\I_n$, set:
\begin{equation}
q_i = \sum_{j\in \N_i : j < i} d_i^{-1}. \nonumber
\end{equation}
We are iterating over all edges $\{i,j\}$, and adding $d^{-1}_k$ to $q_k$ of player $k$ in $\{i,j\}$ that has the highest degree, or if the degrees are equal that is the tie breaker set by the order relation $<$.

Then, the type distribution $q$ is in $\Q^{G,n}_n$. To show that, we set up an order preserving bijection $r^{-1}$ from $(\I_n,<)$ into $(\{1,\cdots,n\},>)$, and we refer to player $r(k)$ as simply player $k$, for $k$ in $\{1,\cdots,n\}$. So player $1$ refers to the `largest' player. We claim that player $k$ will be playing $\white$ after applying $G^k_q$. We prove this by induction. Player $1$ will have necessarily have $q_{r(1)} = 1$, and so will necessarily play $\white$ when we apply $G_q$. Suppose the statement is true for player $k$, we show that the statement is true for player $k+1$. After $G^k_q$, all players $k'$ with $k'\leq k$ are playing $\white$. Assume node $k+1$ has degree $d_{r(k+1)}$, and suppose it is connected to $m$ players `smaller' than it, then it has $q_{r(k+1)} = md_{k+1}^{-1}$, and so it needs more than $m$ neighbors $\black$ to play $\black$, but all the players that are `larger' than it are playing $\white$, so player $k+1$ will play $\white$ when $G_k$ is applied one more time.

Finally, $\lone{q} \leq n/2$. To prove that, each node $i$ has degree $d_i$, and so can contribute no more than $d_id^{-1}_i=1$ to $\lone{q}$. If we give each player $i$ a type $q_i = 1$, each edge is then counted twice in the summation, then:
\begin{align}
	n = \sum_{i \in \I_n}\sum_{j \in \N_i}{\frac{1}{d_i}} &= \sum_{ij \in E}{\frac{1}{d_i} + \frac{1}{d_j}}\nonumber\\
				&= \sum_{ij \in E}{\frac{1}{d_i} \wedge \frac{1}{d_j} + \frac{1}{d_i} \vee \frac{1}{d_j}}. \nonumber
\end{align}
But by construction, we know that
\begin{equation}
 \lone{q} = \sum_{ij \in E}{\frac{1}{d_i} \wedge \frac{1}{d_j}}. \nonumber
\end{equation}
The result follows since ${d^{-1}_i} \wedge {d^{-1}_j} \leq {d^{-1}_i} \vee {d^{-1}_j}$.
\end{proof}

We show that the bound is tight for large $K$.

\begin{MyPro}
 This bound is achieved by the cycle graph $R_n$ for all $n$ and $K \geq \lceil n/2 \rceil$. 
\end{MyPro}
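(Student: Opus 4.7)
The upper bound $\mu^K_n(R_n) \leq n/2$ is given by the preceding theorem, so my task is the matching lower bound: every $q \in \Q^{R_n,K}_n$ with $K \geq \lceil n/2 \rceil$ must satisfy $\lone{q} \geq n/2$. By the normalisation remark following the definition of $\mu^K_n$, I may assume $q_i \in \{0,1/2,1\}$ (since $d_i = 2$ throughout $R_n$), and I partition $\I_n = Z \sqcup H \sqcup O$ with $Z = q^{-1}(0)$, $H = q^{-1}(1/2)$, $O = q^{-1}(1)$. Then $\lone{q} = |H|/2 + |O|$, so the target $\lone{q} \geq n/2$ is algebraically equivalent to $|O| \geq |Z|$.

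Suppose for contradiction that $|O| < |Z|$; I will produce $a \in \A^K_n$ from which $\white^n$ is unreachable under $G_q$. First, $Z$ must be an independent set in $R_n$: two adjacent $0$-threshold nodes initialised $\black$ (a $2$-black configuration, permitted since $K \geq 2$) each see a $\black$ neighbour and, with threshold $0$, remain $\black$ forever. Second, reading the $Z$- and $O$-labels cyclically around $R_n$ and discarding the $H$-labels, the inequality $|Z| > |O|$ forces by cyclic pigeonhole two $Z$-nodes $z_1, z_2$ consecutive in this reduced cyclic subsequence, hence \emph{$H$-adjacent}: one of the two arcs from $z_1$ to $z_2$ contains only $H$-nodes, giving a segment of $d+1$ vertices with $d \geq 2$.

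For $n$ even, I take the alternating initial configuration $a^{odd}$ with $a^{odd}_i = \black$ iff $i$ is odd (exactly $n/2 \leq K$ blacks). Since $R_n$ is bipartite, the machinery of Section~4.2 shows that the two-step map $G_q^2$, restricted to configurations which are $\white$ on $V_{even}$, induces a map $F : \{0,1\}^{V_{odd}} \to \{0,1\}^{V_{odd}}$; writing for each even $j$ an edge value $m_j(x) = 0$ if $j \in O$, $m_j(x) = x_{j-1}x_{j+1}$ if $j \in H$, and $m_j(x) = x_{j-1} \vee x_{j+1}$ if $j \in Z$, one has $F(x)_i = \mathbbm{1}[m_{i-1}(x) + m_{i+1}(x) > 2q_i]$. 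This $F$ is monotone, so the orbit of $\mathbf{1}$ decreases to the greatest fixed point $x^\star$, and recovery from $a^{odd}$ is equivalent to $x^\star = \mathbf{0}$. I then exhibit a nonzero fixed point: let $S := V_{odd} \cap [z_1, z_2]$, the odd positions in the $H$-adjacent segment. A direct local check establishes $F(\mathbbm{1}_S) = \mathbbm{1}_S$: for interior even $j$ in the segment, $m_j = 1$ by the $H$-AND rule with both neighbours in $S$; for a $Z$-boundary even $j$ (when $z_1$ or $z_2$ is even), $m_j = 1$ by the $Z$-OR with the adjacent $S$-vertex; for even $j$ strictly outside $[z_1, z_2]$, both $x_{j\pm 1} = 0$ so $m_j = 0$ regardless of $j$'s type; consequently $F(\mathbbm{1}_S)_i = 1$ on $i \in S$, and $F(\mathbbm{1}_S)_i = 0$ on $i$ just outside $S$ (one input is $0$ and $i \notin Z$ by independence, so the $H/O$-rule extinguishes it) and trivially on $i$ farther out. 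Since $|S| \geq 1$, this contradicts $x^\star = \mathbf 0$.

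For $n$ odd, $R_n$ is not bipartite and the clean bipartition breaks, so I split on segment length. If $d + 1 \leq K$ I colour the entire $H$-adjacent segment $\black$ and check directly that it is a fixed point of $G_q$ (interior $H$-nodes see two $\black$ neighbours, boundary $Z$-nodes see one $\black$ neighbour, and the $\white$ complement stays $\white$). Otherwise $d \geq K \geq (n+1)/2$, so the complementary arc is shorter than $K$, and the same alternating construction on $R_n$ (modified for the one wrap-around edge where both endpoints are $\black$) propagates an analogous $H$-adjacent fixed configuration in the two-step dynamics, as one verifies by passing to the bipartite expansion of Section~4.3 and appealing to the even-case argument. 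The main obstacle is the boundary bookkeeping in the fixed-point certificate: the four parity subcases of $(z_1, z_2)$ must each be checked so that the $Z$-$\vee$ rule supplies the propagating $1$ at the segment ends while the $H$/$O$ rule extinguishes any $1$ immediately outside; the odd-$n$ case compounds this with the wrap-around subtlety, which is why the detour through the bipartite expansion is convenient there.
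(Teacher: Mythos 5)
Your overall strategy is the same as the paper's: reduce the claim to $|O|\ge|Z|$, observe that $Z$ must be independent, and show that two zero-type nodes separated only by $H$-nodes force a configuration in $\A^K_n$ from which $\white^n$ is unreachable. Your even-$n$ argument, while more elaborate than it needs to be, is sound: the monotone two-step map $F$ on $V_{odd}$ does govern the orbit of $a^{odd}$, and it suffices that $F(\mathbbm{1}_S)\ge\mathbbm{1}_S$ for the nonempty $S$ you exhibit, which already forces $x^\star\neq\mathbf{0}$.

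The genuine gap is the odd-$n$, long-segment subcase ($d+1>K$). There the argument is only gestured at, and the gesture does not go through: the lift $\alpha a=(a,a\circ\phi)$ of an alternating-with-one-defect configuration on $R_n$ into its bipartite expansion $R_{2n}$ is \emph{not} the ``all black on one side of the 2-Partition'' initial condition that your even-case analysis starts from --- it is an alternating configuration on $R_{2n}$ with \emph{two} defects --- so ``appealing to the even-case argument'' is not available, and the fixed-point analysis would have to be redone from scratch for that initial condition. (A second, smaller hole: when $|Z|=1>|O|=0$ the cyclic pigeonhole yields no pair $z_1\neq z_2$, so your segment is undefined; this case needs the separate observation, which the paper proves as its second claim, that at least one type-$1$ node must exist.) The paper's device sidesteps both issues: given the all-$H$ arc $z_1,h_1,\dots,h_{d-1},z_2$, it blackens $z_1,h_2,h_4,\dots$ (and, in the relevant parity, also $z_2$), i.e.\ it alternates \emph{within the arc only}. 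This costs at most $\lfloor d/2\rfloor+1\le\lfloor n/2\rfloor\le K$ black nodes for every $d\le n-2$, hence always fits the budget independently of the parity of $n$ or of the arc, and a purely local check gives $G_q^2a\ge a$; monotonicity of $G_q$ then shows the orbit dominates $a$ at all even times and $\white^n$ is never reached. Substituting this local alternating witness for your global one closes the odd case (and would also let you dispense with the map $F$ in the even case).
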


\begin{proof}
It would be enough to show that the bound is achieved for $K = \lceil n/2 \rceil$ by the 2-regular connected graph.

Let us consider the case where $n$ is even. Let $q$ be a type distribution in $\Q^{G,n}_n$ where $G$ is a 2-regular connected graph. We begin by claiming, that for any distinct $i$ and $j$ in $\I_n$, if both $q_i$ and $q_j$ are $0$, then for any (vertex) path $P$ from $i$ to $j$ there exists a node $k$ in $\I_n$ (distinct from $i$ and $j$) such that $k$ lies on the (vertex) path and $k$ has a type equal to $1$. To prove that, we pick two nodes $i$ and $j$, and consider a path $(i,i_1,\cdots,i_m,j)$ from $i$ to $j$. Since $i$ and $j$ are distinct, then $m \leq n - 2$. We now suppose there are no nodes having a type equal $1$ along the path, then all the types along the path are less than 1. Suppose $m$ is even and consider some $a$ in $\A^K_n$ such that $a_i, a_{i_2}, \cdots, a_{i_m}$ are all $\black$, that is possible since $m/2 + 1 \leq n/2$, applying $G_q$ once on $a$ yields $a_{i_1}, \cdots, a_{i_{m-1}},a_j$ are all $\black$, applying it one more time yields $a_i, a_{i_2}, \cdots, a_{i_m}$ are all $\black$ again. Therefore, there exists $a$ in $\A^K_n$ such that $a$ is not in the same equivalence class as $\black^n$, i.e. $(\black^n,a) \notin \R_{G_q}$. By a similar argument, the same thing holds in the case where $m$ is odd.

The second claim is that we need at least one node in the graph to have a type equal to $1$, otherwise not all nodes will play $\white$ at the limit. To see that, suppose no such node exists. We construct $a$ in $\A^K_n$ such that no two neighboring players have the same action. Applying $G_q$ once on $a$ makes at least all the players that were playing $\white$ play $\black$, and applying it one more time makes the initial players that were $\black$ play $\black$ again. Therefore, $a$ does not belong to $\A^K_n$.

We cannot have more than $n/2$ nodes with type $0$ in the graph, otherwise necessarily two nodes with type $0$ will be connected. Then, suppose we have $k$ nodes having type $0$, we get at least $k$ nodes having type $1$ and the rest is $1/2$. If we sum the types, we get $n/2$.

For the case where $n$ is odd, following a similar argument, we establish that we should have at least one node with a type equal to $1$ in the network.
Then, for any two disjoint players having a type of $0$, every (vertex) path connecting the two players should contain at least one node having a type equal to $1$.
\end{proof}

\subsection{The Resilience of Cycle Graphs and Complete Graphs}

We derive the resilience of path graphs, cycle graphs and complete graphs.

\begin{MyPro} \label{PathMeasure}
 The path graph $P_n$ of size $n$ has a resilience measure $\mu^K_n(R_n)$ equal to $(n - 1 -\lfloor\frac{n-1}{2K+1}\rfloor)/2$ for $K <\lceil n/2 \rceil$.
\end{MyPro}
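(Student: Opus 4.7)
The plan is to establish matching upper and lower bounds for $\mu^K_n(P_n)$.

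For the upper bound, I would exhibit a type distribution $q$ achieving the stated cost. Let $m = 1 + \lfloor(n-1)/(2K+1)\rfloor$. When $n \geq 2K+2$ (so $m \geq 2$), pick positions $1 = p_1 < p_2 < \cdots < p_m = n$ with each gap $p_{i+1}-p_i \geq 2K+1$ (feasible because $(m-1)(2K+1) \leq n-1$), set $q_{p_i}=0$ at each $p_i$, and $q_j = 1/2$ at every other node. In the degenerate regime $n \leq 2K+1$ (so $m = 1$), set $q_1 = q_n = 0$, insert a single type-$1$ at some interior position (which acts as a permanent ``white wall'' that decouples the two resulting sub-paths), and fill the remaining interior nodes with type-$1/2$. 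In both cases, a direct count gives $\lone{q} = (n - 1 - \lfloor(n-1)/(2K+1)\rfloor)/2$.

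To verify $q \in \Q^{P_n,K}_n$, I would classify all cycles in $CYCLE_n(P_n, q)$. By Theorem~\ref{ConvCycl} every cycle has length at most $2$. The only fixed points are $\{\white^n\}$ and $\{\black^n\}$ (any other fixed point would require a type-$1/2$ node to be black while having a white neighbor, an immediate contradiction), and $\{\black^n\}$ is unreachable from $\A^K_n$ because it carries $n > K$ blacks. Any length-$2$ cycle must be globally alternating within every ``segment'' $[p_i, p_{i+1}]$ bounded by consecutive type-$0$ anchors, and in each such segment of length $\geq 2K+2$ the alternating cycle carries at least $K+1$ blacks per phase. Consequently no non-trivial cycle is reachable from $\A^K_n$, and every trajectory lands on $\{\white^n\}$.

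For the lower bound, let $q \in \Q^{P_n,K}_n$ be arbitrary; by the paper's earlier remark one may assume $q_i \in \{0, 1/2, 1\}$ at interior nodes and $q_i \in \{0, 1\}$ at endpoints. Let $r_0, r_{1/2}, r_1$ denote the counts of each admissible type, so that $\lone{q} = (n - r_0 + r_1)/2$; it suffices to show $r_0 - r_1 \leq 1 + \lfloor(n-1)/(2K+1)\rfloor$. The key structural ingredient is a \textbf{bad-pattern lemma}: if two consecutive type-$0$ positions $p < p'$ in $q$ satisfy $p' - p \leq 2K$ and the interval $(p, p')$ contains no type-$1$ node, then placing exactly $\lfloor(p'-p+1)/2\rfloor \leq K$ blacks in the alternating pattern inside $[p, p']$ (with white elsewhere) triggers a period-$2$ cycle that never reaches $\white^n$, contradicting $q \in \Q^{P_n,K}_n$. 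The prototype is the oscillation $(W,B,W)\leftrightarrow(B,W,B)$ on the pattern $0,1/2,0$; the type-$0$ anchors at $p$ and $p'$ act as insulators that force positions just outside $[p,p']$ to remain white throughout the cycle, so the bad cycle persists regardless of the rest of the graph.

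Equipped with this lemma, I would sort the type-$0$ positions of $q$ as $p_1 < \cdots < p_{r_0}$ (endpoint cases where position $1$ or $n$ is type-$1$ can be handled by an elementary recursive reduction that only loosens the bound on the remaining path). Partition the $r_0 - 1$ gaps between consecutive type-$0$ positions into the set $A$ of gaps containing no interior type-$1$ node and its complement $B$. By the bad-pattern lemma every gap in $A$ has length $\geq 2K + 1$, so $(2K+1)|A| \leq \sum_i g_i = n-1$, giving $|A| \leq \lfloor(n-1)/(2K+1)\rfloor$; since $r_1 \geq |B|$ and $|A|+|B| = r_0 - 1$, we obtain $r_0 - r_1 \leq r_0 - |B| = |A| + 1 \leq 1 + \lfloor(n-1)/(2K+1)\rfloor$, which closes the lower bound. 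The main technical obstacle is proving the bad-pattern lemma uniformly for all gaps $j \leq 2K$ (patterns $0, (1/2)^{j-1}, 0$): one must verify that the locally-constructed alternating cycle truly persists when embedded in the full graph, leveraging the fact that the anchoring type-$0$ nodes at $p$ and $p'$ do not ``leak'' black into the surrounding positions.
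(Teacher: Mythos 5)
Your overall strategy is the same as the paper's (the paper's own proof is far terser: it states the necessary condition that consecutive type-$0$ nodes be separated by at least $2K$ type-$1/2$ nodes, counts, and dismisses type-$1$ nodes by splitting the path), and your lower-bound counting is correct. But your upper-bound verification contains two genuine errors. First, the claim that the only fixed points are $\white^n$ and $\black^n$ is false once there are three or more anchors: the configuration that is $\black$ exactly on an anchor-to-anchor interval $[p_i,p_j]$ and $\white$ elsewhere is a fixed point, because the type-$1/2$ node at $p_j+1$ has only one black neighbor and so stays white, while the anchor $p_j$ keeps its one black neighbor. Your stated reason (``a type-$1/2$ node black with a white neighbor'') never occurs in these configurations. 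Second, and more importantly, the inference ``this limit configuration has more than $K$ blacks, hence is unreachable from $\A^K_n$'' is unjustified: the number of black nodes is not monotone under $G_q$ in general (a type-$0$ node turns black as soon as a single neighbor does, so blackness can spread — think of a star with a type-$0$ center and type-$0$ leaves). What is true, and what you must actually prove, is that for this particular allocation $|G_q(a)^{-1}(\black)| \le |a^{-1}(\black)|$; this follows from a matching argument (each black node at time $t+1$ is injectively assigned a black neighbor at time $t$, using that every type-$1/2$ node needs both neighbors black and that no two anchors are adjacent). Without such a step the upper bound is not established.

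A smaller but real flaw sits in your bad-pattern lemma: the ``insulation'' claim is false as stated. The type-$0$ anchors at $p$ and $p'$ do not force the outside to stay white — if the node $p-1$ is itself type $0$, it turns black at the second step, since $p$ is black after the first step. The lemma's conclusion survives, but the correct justification is monotonicity: a direct computation shows that $G_q^2\tilde a$ agrees with $\tilde a$ on $[p,p']$ (the first-step colors on $[p,p']$ depend only on $\tilde a$ restricted to $[p-1,p'+1]$, which is white off $[p,p']$) and dominates $\tilde a$ everywhere else, so $G_q^{2t}\tilde a \ge \tilde a$ for all $t$ by monotonicity of $G_q$, and the trajectory never reaches $\white^n$. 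With these two repairs your argument closes, and it is in fact considerably more complete than the paper's, which leaves both the achievability of the bound and the persistence of the oscillation implicit.
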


\begin{proof}
Since $K <\lceil n/2 \rceil$, the nodes having degree 1 in the path graph have necessarily a type equal to $0$. Suppose no node has a type equal to $1$, then types are either equal to $0$ or equal to $1/2$. Given that $K$ players start playing $\black$, every two players with types equal to $0$ should be separated by at least $2K$ players having type equal to $1/2$. If $k$ is the number of players with type equal to $0$, then $(n-k)/(k-1) \geq 2K$ or equivalently $k \leq \frac{n-1}{2K+1} +1$. The maximum number of type $0$ nodes is then $\lfloor\frac{n-1}{2K+1}\rfloor + 1$. Finally, suppose a node with type $1$ exists, we may split the path into two smaller paths, and it can be checked that is yield only a suboptimal type distribution.
\end{proof}

\begin{MyPro}
 The cycle graph $R_n$ of size $n$ has a resilience measure $\mu^K_n(R_n)$ equal to $(n - \lfloor\frac{n}{2K+1}\rfloor)/2$ for $K <\lceil n/2 \rceil$.
\end{MyPro}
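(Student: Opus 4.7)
The plan is to mimic the structure of the proof of Proposition \ref{PathMeasure}, adjusting the combinatorial arithmetic for the cyclic topology: a set of $k$ type-$0$ nodes on a cycle creates $k$ arcs of type-$1/2$ nodes (rather than $k-1$ gaps as on a path).

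I would first handle the upper bound by explicit construction. Since $K<\lceil n/2\rceil$, we have $\lfloor n/(2K+1)\rfloor\geq 1$. Set $k=\lfloor n/(2K+1)\rfloor$ and place $k$ type-$0$ nodes around $R_n$ so that each of the $k$ resulting arcs of consecutive type-$1/2$ nodes contains at least $2K$ nodes; feasibility follows from $n-k\geq 2Kk$. The resulting cost is $\lone{q}=(n-k)/2$. To check $q\in\Q^{R_n,K}_n$, I would argue that any configuration in $\A^K_n$ converges to $\white^n$: a black cluster confined to one arc has at least $2K$ buffering type-$1/2$ nodes around it, and the type-$1/2$ decision rule together with the type-$0$ boundary absorb the black activity in a bounded number of steps, ruling out the traveling-wave phenomena that shorter buffering would permit.

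For the lower bound, let $q\in\Q^{R_n,K}_n$. Since $d_i=2$ on $R_n$, each $q_i\in\{0,1/2,1\}$. Consider first the case where $q$ uses no type-$1$ node, and let $k=|q^{-1}(0)|$. The type-$0$ nodes partition the cycle into $k$ arcs of type-$1/2$ nodes of lengths $\ell_1,\dots,\ell_k$ with $\sum\ell_j=n-k$. Following Proposition \ref{PathMeasure}, I would establish that consecutive type-$0$ nodes must be separated by at least $2K$ type-$1/2$ nodes, i.e.\ $\ell_j\geq 2K$ for every $j$; otherwise an adversarial initial configuration of at most $K$ blacks creates a non-degenerate limit cycle of $G_q$, contradicting $q\in\Q^{R_n,K}_n$. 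Summing, $n-k\geq 2Kk$, giving $k\leq\lfloor n/(2K+1)\rfloor$ and
\begin{equation}
\lone{q} \;=\; \tfrac{1}{2}(n-k) \;\geq\; \tfrac{1}{2}\bigl(n-\lfloor n/(2K+1)\rfloor\bigr). \nonumber
\end{equation}
In the remaining case, $q$ has at least one type-$1$ node. Each such node contributes $1$ to $\lone{q}$ and behaves as a permanent $\white$ node under the dynamics, effectively cutting the cycle into path segments. Applying Proposition \ref{PathMeasure} on each segment and summing the costs (plus the number of type-$1$ nodes) yields a total at least $(n-\lfloor n/(2K+1)\rfloor)/2$, so introducing a type-$1$ node on a cycle is never strictly better.

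The main obstacle I anticipate is the arc-length claim $\ell_j\geq 2K$. On a cycle, the bad configurations are subtler than in the path case because short arcs can support travelling-wave two-cycles that simultaneously involve type-$0$ boundary nodes and type-$1/2$ interior nodes—for instance, initializing both a type-$0$ node and a strategically placed type-$1/2$ node in a short adjacent arc to $\black$ can seed a period-two oscillation that propagates around the cycle. Exhibiting such a configuration for each $\ell_j<2K$, accounting for the parity of $\ell_j$ and verifying that it uses at most $K$ blacks while producing a genuine two-cycle under the global map $G_q$ on $R_n$, forms the technical heart of the lower bound.
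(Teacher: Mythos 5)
Your proposal is correct and rests on the same core mechanism as the paper's proof: on a degree-two graph the only useful types are $0$, $1/2$ and $1$; consecutive type-$0$ nodes must be separated by at least $2K$ type-$1/2$ nodes (enforced by the same period-two travelling configuration the paper exhibits when showing the cycle attains the $n/2$ upper bound); and the count $n-k\geq 2Kk$ then yields $k\leq\lfloor n/(2K+1)\rfloor$. The one place you genuinely diverge is the treatment of type-$1$ nodes. The paper shows by an exchange argument (deleting a type-$1$ node to obtain a path, concluding there is at most one such node, then ``sliding'' it along the cycle until it meets a type-$0$ node and merging the pair into two $1/2$'s) that an optimal allocation uses no type-$1$ node, and then finishes by applying Proposition \ref{PathMeasure} to a path of length $n+1$; you instead keep the type-$1$ nodes and bound the cost directly by cutting the cycle at them. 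Your route works, but be aware that Proposition \ref{PathMeasure} does not literally apply to the resulting segments, since their endpoints have degree $2$ and are not forced to type $0$; what does transfer is the $2K$-separation requirement inside each segment, giving $k_j\leq (m_j+2K)/(2K+1)$ per segment, hence $k-t\leq (n-2t)/(2K+1)\leq n/(2K+1)$ for $t$ type-$1$ nodes and $\lone{q}=(n+t-k)/2\geq(n-\lfloor n/(2K+1)\rfloor)/2$ as required. With that bookkeeping made explicit, and the achievability check for the extremal allocation carried out (a step the paper also leaves terse), your argument is complete.
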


\begin{proof}
We first show that there exists an optimal allocation of types that is nowhere equal to $1$ on $\I_n$. Suppose some node has a type equal to $1$, then we can delete that node and obtain a path. The optimal allocation in that path is one with no node having a type equal to $1$. Therefore, we can only have one node (call it $i$) having type equal to $1$ if ever. If $i$ is connected to a node with type equal to $0$, we can replace the two types by $1/2$ while keeping a type distribution in $\Q_n^K$. If $i$ is connected to two nodes with types equal to $1/2$, we can swap the type of one of the neighbors with the type of $i$ while keeping a type distribution in $\Q_n^K$. We can keep `moving' the type equal to $1$ till its corresponding node is connected to a node having a type of $0$. We then apply the previous argument.

This said, an optimal allocation need not create a node having type equal to $1$. Let us assume we have such an allocation, and let us determine the maximum number of type $0$ nodes we can include. Necessarily one node of type $0$ exists. The result follows by applying Proposition \ref{PathMeasure} on a path of length $n+1$.
\end{proof}

\begin{MyPro}
 The complete graph $K_n$ of size $n$ has a resilience measure $\mu^K_n(K_n)$ equal to $\frac{K(K-1)/2 + K(n-K)}{n-1}$ for all $K$.
\end{MyPro}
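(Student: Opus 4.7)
The plan is to match upper and lower bounds by a change of variables. Introduce integer thresholds $\tau_i \in \{1, \ldots, n\}$ with $q_i = (\tau_i - 1)/(n-1)$, so that in $K_n$ player $i$ plays $\black$ iff at least $\tau_i$ of its neighbors do, and $\lone{q} = \tfrac{1}{n-1}\sum_i (\tau_i - 1)$. Identifying an action configuration with its black set $S \subseteq \I_n$, the vertex-transitivity of $K_n$ yields the closed form
\[
G_q S = T_{s-1} \cup (U_s \setminus S), \quad s = |S|,
\]
where $T_m = \{i : \tau_i \leq m\}$, $U_m = T_m \setminus T_{m-1}$, and $t_m = |T_m|$.

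For the upper bound I would label the nodes so that $\tau_i = i$ for $i \leq K$ and $\tau_i = K+1$ for $i > K$, giving $\sum_i (\tau_i - 1) = K(K-1)/2 + K(n-K)$. To verify this allocation lies in $\Q^{K_n,K}_n$, I would track any orbit starting from $B_0 \subseteq \I_n$ with $|B_0| = s_0 \leq K$: no node with index exceeding $K$ can lie in $G_q B_0$ since its threshold $K+1$ exceeds $s_0$, so after one step the black set lies in $\{1, \ldots, K\}$ and the explicit formula shows it has the form $\{1, \ldots, s\}$ for some $s \leq s_0$; the formula then gives $G_q \{1, \ldots, s\} = \{1, \ldots, s-1\}$, so the orbit reaches $\emptyset$ in at most $s_0 + 1 \leq K+1$ steps.

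For the lower bound I claim $t_m \leq m$ for each $m = 1, \ldots, K$; summing then yields $\sum_i (\tau_i - 1) \geq \sum_{i=1}^K (i-1) + K(n-K)$, matching the upper bound. I would argue by contradiction: let $m^*$ be the smallest $m \leq K$ with $t_m \geq m+1$, so $t_{m^*-1} \leq m^* - 1$ and $|U_{m^*}| \geq 2$. Taking $B_0 = T_{m^* - 1} \cup X$ with $X \subseteq U_{m^*}$ and $|B_0| = m^*$, the explicit formula gives $G_q B_0 = T_{m^* - 1} \cup (U_{m^*} \setminus X)$ of size $t_{m^*-1} + t_{m^*} - m^*$. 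When $t_{m^* - 1} + t_{m^*} = 2 m^*$, a direct check yields $G_q^2 B_0 = B_0$ while $G_q B_0 \neq B_0$ (the two sets differ on $U_{m^*}$), so $\{B_0, G_q B_0\}$ is a nontrivial $2$-cycle reachable from $B_0 \in \A^K_n$, contradicting $q \in \Q^{K_n,K}_n$.

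The main obstacle is the residual case $t_{m^*-1} + t_{m^*} > 2m^*$, where $|G_q B_0| > m^*$ and the orbit exits the level set $\{|S| = m^*\}$. Here I would observe that $U_{s_1} \cap T_{m^*} = \emptyset$ when $s_1 := |G_q B_0| > m^*$, so the formula gives $G_q^2 B_0 = T_{s_1}$, and all subsequent iterates remain of the form $T_j$ evolving via the auxiliary map $g(j) = t_j$ if $t_j > j$ and $g(j) = t_j - 1$ otherwise. I would then show that, under the violation hypothesis, the $g$-orbit from $s_1$ cannot reach $0$ without passing through a nontrivial $g$-cycle (specifically a $2$-cycle $\{j, j+1\}$ with $t_j = t_{j+1} = j+1$, which corresponds to a nontrivial fixed point $T_j$ of $G_q$), thereby producing a reachable nontrivial cycle of $G_q$ and completing the contradiction.
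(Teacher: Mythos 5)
Your setup (integer thresholds $\tau_i$, the closed form $G_q S = T_{s-1}\cup(U_s\setminus S)$ on $K_n$) is correct, and your upper bound is fine --- it is the same allocation the paper uses, verified more transparently. The problem is the lower bound. The claim that $q\in\Q^{K_n,K}_n$ forces $t_m\leq m$ for every $m\leq K$ is \emph{false}. Take $n=4$, $K=3$, and give all four players type $2/3$, i.e.\ threshold $\tau_i=3$. From any $S\in\A^3_4$: if $|S|\leq 2$ every player sees at most $2<3$ black neighbors, so $G_qS=\emptyset$; if $|S|=3$ the three black players each see $2$ black neighbors and turn white while the fourth sees $3$ and turns black, so $G_qS$ is a singleton and $G_q^2S=\emptyset$. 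Hence this $q$ lies in $\Q^{K_4,3}_4$, yet $t_3=4>3$. (Its cost is $8/3$, above the claimed optimum $2$, so the proposition itself is not contradicted --- only your lemma is.) The failure is exactly in the case your sketch leaves open, $t_{m^*-1}+t_{m^*}<2m^*$: there $s_1=t_{m^*-1}+t_{m^*}-m^*<m^*$, so $G_q^2B_0=T_{s_1-1}$ and, by minimality of $m^*$, the indices $t_j-1<j$ strictly decrease and the orbit of your witness $B_0$ \emph{does} reach $\emptyset$. No choice of witness can rescue the claim, since the claim is false. What remains true is only the aggregate inequality $\sum_{m=1}^{n-1}(n-t_m)\geq\sum_{m=1}^{K}(n-m)$ (in the example, $t_1=t_2=0$ overcompensates for $t_3=4$), and that requires an argument that trades off the levels below $m^*$ against the excess at $m^*$ rather than a termwise bound.

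For comparison, the paper's own lower-bound argument does not go through level sets at all: it asserts that any cheaper $q$ would leave some configuration of $\A^K_n$ at a nonempty fixed point, and defers the verification to the author's thesis. So your route is more explicit than the paper's, and your two resolved cases (the $2$-cycle when $t_{m^*-1}+t_{m^*}=2m^*$, and the invariant that the $T_j$-orbit stays at indices $\geq m^*$ when $s_1>m^*$) are sound; but as it stands the lower bound rests on a false combinatorial lemma and the proof is incomplete. To repair it you would need to either (i) prove the correct aggregate inequality directly, e.g.\ by showing that any $q\in\Q^{K_n,K}_n$ admits no nonempty fixed point $T_j$ (equivalently no $j\geq1$ with $t_{j-1}=t_j=j$) and no reachable $2$-cycle, and then deduce the cost bound from the resulting constraints on the whole profile $(t_1,\dots,t_{n-1})$, or (ii) exhibit, for any profile with $\sum_m(n-t_m)$ below the target, a configuration in $\A^K_n$ whose orbit avoids $\emptyset$ --- which, as the example shows, will generally not live inside $T_{m^*}$.
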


\begin{proof}
Define $d$ to be equal to $n-1$. We consider the type distribution $\hat{q}$ such that $\hat{q}$ equals $K/d$ on exactly $n-K$ players, and is injective when restricted to the remaining players taking values in $\{0,1/d,\cdots,(K-1)/d\}$. We claim that $\hat{q}$ belongs to $\Q^K_n$. To see that, we argue as follows. Each node is initially connected to either $K$ or $K-1$ neighboring nodes playing $\black$. Therefore, the type distribution $q$ equal to $K/d$ everywhere is in $\Q_n^K$. This distribution is however not optimal. We can allow $K$ nodes in $\I_n$ to have types equal to $(K-1)/d$, while the rest have types of $K/d$ and keep the distribution in $\Q_n^K$. Let that distribution be $q'$. Given any initial configuration in $\A_n^K$, after one application of $G_{q'}$, only $K$ of the nodes can play $\black$. The $n-K$ nodes having types equal to $K/d$ will always play $\white$. Reapplying the same argument on the remaining nodes, we can make $K-1$ of them have a type of $(K-2)/d$. We keep on iterating till we get the type distribution $\hat{q}$ such that $q$ equals $K/d$ on exactly $n-K$ players, and is injective when restricted to the remaining players taking values in $\{0,1/d,\cdots,(K-1)/d\}$. 
 
We now argue that we cannot find a type distribution $q$ that belongs to $\Q^K_n$ such that $\lone{q} < \lone{\hat{q}}$. Suppose we can, then necessarily an action configuration in $\A_n^K$ is a fixed-point contradicting the fact that $q$ belongs to $Q^K_n$ (see \cite{MyThesis} for more insight on Complete Graphs).
\end{proof}

To end this section, we give a small piece of insight. High degree nodes lower the resilience measure in the graph. One manifestation of this fact lies in the examples that meet the bounds. However, if we consider the complete graph, it has a resilience measure of $1$ for $K=1$ that grows to $n/2$ for $K=n$. This said, although high degree nodes increase the resilience of a network, having a large number of high degree nodes in the network makes the network more \emph{fragile} against large perturbation, by making it more costly to ensure its recovery.

\section{Conclusion}
In this paper, we considered a linear threshold model where agents are allowed to switch their actions multiple times. We focused on characterizing the behavior of the dynamics.

We established that in the limit, the agents in the network cycle among action profiles. We studied the lengths of such cycles, and the required number of time steps needed to reach such cycles. In particular, we showed that for \emph{any graph structure} and \emph{any threshold distribution} over the agents, such cycles consist of a most two action profiles. Namely, in the limit, each agent either always plays one specific action or switches action at every single time step.
We also showed that over \emph{all graph structure} (of size $n$) and \emph{all threshold distributions} no more than $cn^2$ time steps are required to reach such cycles, where $c$ is some integer. We also improve convergence time results to be not more than $n$ steps when the underlying graph is either an even cycle graph or a tree. Our methods follow a combinatorial approach, and are based mainly on two techniques: transforming the general graph structure into a bipartite structure, and transforming the parallel dynamics on this bipartite structure into sequential dynamics.

We also studied the problem of counting the number of cycles (fixed-points and non-degenerate cycles), the number of fixed-points and the number of non-degenerate cycles. We showed that those counting problems are \#P-Complete. We further showed that deciding whether an action profile is reachable is NP-Complete and that counting the number of predecessors (preceding action configurations) of a reachable action configuration is \#P-Complete.

Finally, in the setting of resilience of networks, we defined a measure $\mu^K$ that captures the minimal cost of threshold investment required to recover the network $G$ from a perturbation of magnitude $K$, whereby we suppose that $K$ agents will initially deviate from action $\white$ and  play action $\black$. We show that this measure is lower-bounded by $1$, and that it is upper-bounded by $n/2$, where $n$ is the size of the network. We finally provide an interpretation of how this measure varies with respect to the network structures. High degree nodes add resilience to the network, however too many high degree nodes can make the network fragile against strong perturbations.

There are several questions that could be undertook as future work, however we keep the list brief. We consider each section, and provide one or two questions that could be further developed. In Section 4, it would be interesting to provide a characterization of the initial configurations that lead to non-degenerate cycles and those that lead to fixed points. In Section 5, we believe we can improve the convergence time bound for general graphs to be linear in the size of the network. In Section 6, it would be interesting to characterize subclasses of $\G_n {\times} \K_n$ where the counting problems are tractable and devise approximation algorithms if possible for the hard cases. We need to compute resilience measures for more graph structures in Section 7, and mainly devise a systematic way to compute the measure. We may further consider different cost functions on the type distribution.

\begin{singlespace}

\end{singlespace}

\newpage
\appendix

\section{On Function and Counting Problems} \label{ComplexityAppendix}

We assume familiarity with the basic concepts in Complexity Theory and proceed to function/counting problems. We refer the reader to \cite{SIP01} for a thorough exposition of the basics. Function problems would roughly include problems of the form: given some input $w$, compute $f(w)$. Given an alphabet $\Sigma = \{0,1\}$, we refer to $\Sigma^*$ as the set of all binary strings constructed from elements of $\Sigma$.

\begin{MyDef}
 A (binary) relation $R \subset \Sigma^* \times \Sigma^*$ is said to polynomial bounded if and only if there exists a real polynomial $p$ such that for all $x$ and $y$ in $\Sigma^*$, if $xRy$ then $|y| \leq p(|x|)$.
\end{MyDef}

Since functions can be formally thought of as relations, a function $f$ is then polynomial bounded if and only if there exists a real polynomial $p$ such that for every $x$ in domain of $f$, $|f(x)| \leq p(|x|)$.

\begin{MyDef}[Function P]
 A polynomial bounded (binary) relation $R \subset \Sigma^* \times \Sigma^*$ is in $FP$ if and only if there exists a polynomial time deterministic Turing machine such that given an $x$ in $\Sigma^*$, if there exists some $y$ in $\Sigma^*$ where $xRy$, it outputs such a string $y$, else it outputs `no'.
\end{MyDef}

The terminology of function is a bit misleading in the sense that given an $x$, several $y$ might satisfy $xRy$.

\begin{MyDef}[Counting Problems]
 Given a (binary) relation $R \subset \Sigma^* \times \Sigma^*$, the counting problem $\#R$ associated with $R$ is the function from $\Sigma^*$ into $\Zgeqz$ such that for $x$ in $\Sigma^*$, $\#R(x) = |\{y \in \Sigma^* : xRy \}|$.
\end{MyDef}

We now define the class $\#$P.

\begin{MyDef}[$\#$P]
 Given a polynomial bounded (binary) relation $R \subset \Sigma^* \times \Sigma^*$, the counting problem $\#R$ is in $\#P$ if and only if there exists a deterministic Turing machine $M$ that can decide $R$ in polynomial time.
\end{MyDef}

Put differently, the class $\#$P is the class of counting problem associated with languages in NP. For a natural alternate definition, we refer the reader to \cite{VAL01}.\\

The notion of reduction that will be used is one by oracles (as described in \cite{VAL01}). An oracle can be thought of a black-box that can answer queries in one time step. An oracle Turing machine is Turing machine with a query tape, an answer tape, and some working tape. To consult the oracle, the Turing machine (TM) prints a string on the query tape and, on going into a special query state, an answer is returned in unit time on the answer tape, and a special answer state is entered.

\begin{MyDef}
 An oracle Turing machine is said to be in $FP$, if and only if for all polynomial bounded oracles, it behaves like a machine in $FP$.
\end{MyDef}

If $M$ is a class of oracle TMs, and $f$ a polynomial bounded function, then we denote the class of function that can be computed by oracle TMs in $M$ with oracles for $f$, by $M^f$.

\begin{MyDef}
	A polynomial bounded function $f$ is $\#$P-Hard if and only if $\#$P $\subset$ $FP^f$. A polynomial bounded function $f$ is $\#$P-Complete if and only if $\#$P $\subset$ $FP^f$ and $f$ belongs to $\#$P.
\end{MyDef}

In other words, $f$ is \#P-Hard if and only if for every function in $\#$P, there exists an oracle Turing machine from FP with oracle for $f$ that can compute it. Given a counting problem \#R, to prove that \#R is \#P-hard, it would be enough to find a problem \#Q that is \#P-hard, and then reduce \#Q to \#R. Given an instance $q$ in \#Q, reduction should be done in such a way that the construction of the instance $r$ in \#R is performed in polynomial time, and the output for $q$ should be computed from the output of $r$ in polynomial time.

\end{document}